\newtheorem{thm}{Theorem}
\newtheorem{prop}{Proposition}
\newtheorem{lem}{Lemma}
\newtheorem{cor}{Corollary}
\newtheorem{exam}{Example}
\def\0{{\mathbf 0}}
\newcommand{\F}{\mathbb{F}}
\newcommand{\Z}{\mathbb{Z}}
\begin{document}

%
%
%

\title{On linear codes with one-dimensional Euclidean hull and their applications to EAQECCs}

\author{
Lin Sok\thanks{ School of Mathematical Sciences, Anhui University, Hefei, Anhui, 230601, { \tt soklin\_heng@yahoo.com}}
}

\date{}
\maketitle
\begin{abstract}
The Euclidean hull of a linear code $C$ is the intersection of $C$ with its Euclidean dual $C^\perp$.  The hull with low dimensions gets much interest due to its crucial role in determining the complexity of algorithms for computing the automorphism group of a linear code and checking permutation equivalence of two linear codes. The Euclidean hull of a linear code has been applied to the so-called entanglement-assisted quantum error-correcting codes (EAQECCs) via classical error-correcting codes. In this paper, we consider linear codes with one-dimensional Euclidean hull from algebraic geometry codes. Several classes of optimal linear codes with one-dimensional Euclidean hull are constructed. Some new EAQECCs are presented. 
\end{abstract}
{\bf Keywords:} Hull, MDS code, almost MDS code, optimal code, algebraic curve, algebraic geometry code, entanglement-assisted quantum error-correcting code\\

\section{Introduction}

 The hull of a linear code was studied by Assmus {\em et al.} \cite{AssKey} to classify finite projective planes. In \cite{Sendrier97}, Sendrier determined the expected dimension of the hull of a random $[n,k]$ code when $n$ and $k$ go to infinity.  The average dimension of the hulls of cyclic codes was given by Skersys \cite{Ske}. Sangwisut {\em et al. }\cite{SangJitLingUdom} gave enumerations of cyclic codes and negacyclic codes of length $n$ with hulls of a given dimension.
 
The hull of linear codes with low dimensions gets much interest due to its crucial role in determining the complexity of algorithms for computing the automorphism group of a linear code \cite{Leon82} and checking permutation equivalence of two linear codes \cite{Leon91,Sendrier00}. 
The Euclidean hull of a linear code $C$ is defined to be the intersection of $C$ with its Euclidean dual $C^\perp$. A linear code with zero-dimension hull is called linear complementary dual (LCD) code. In their paper \cite{CarMesTanQiPel18}, the authors gave a method to construct any $q$-ary LCD code over $\F_q$ ($q>3$)  with parameters $[n,k,d]$ from a linear code with the same parameters. Related works on LCD codes, can be found in \cite{CarGui,CarGunOzbOzkSol,CarMesTanQiPel18,CarMesTanQi19,CarMesTanQi18-2,CarMesTanQi19-2,ChenLiu,Jin,JinBee,LiDingLi,LiLiDingLiu,Massey, MesTanQi,ShiYueYan,YanLiuLiYang}. The works on linear codes with one-dimension hull are referred to  \cite{CarLiMes,LiZeng,QCM,Sok1D}. Linear codes with $k$-dimension Euclidean hull are self-orthogonal codes with parameter $[n,k]$ and are self-dual codes if $n=2k$. We refer to \cite{FangFu,Gue,GraGul,JinXin,Sok,TongWang,Yan} for the work on families of optimal Euclidean self-dual codes.

Another motivation of studying the hull of linear codes comes from the so-called entanglement-assisted quantum error-correcting codes (EAQECCs) firstly introduced Bowen \cite{Bow} and followed by Brun \emph{et al.} \cite{BDH06}. In \cite{BDH06}, they showed that if pre-shared entanglement between the encoder and decoder is available, the EAQECCs can be constructed via classical linear codes without self-orthogonality as in the case of quantum stabilizer codes \cite{AK01}. Wilde {\em et al. }\cite{WilBru} proposed two methods to construct EAQECCs from classical codes, which were known as Euclidean construction method and the Hermitian construction method. These methods have recently been generalized by Galindo {\em et al. } \cite{GalHerMatRua}.

MDS codes form an optimal family of classical codes. Due to their largest error correcting capability for given length and dimension. MDS codes are of great interest in both theory and practice. Reed-Solomon codes are known to be MDS. However, MDS linear codes exist in a very restrict condition on their lengths as the famous MDS conjecture states: for every linear
$[n, k, n -k + 1]$ MDS code over $\F_q,$ if $1 < k < q,$ then $n \le q + 1,$ except when $q$ is even and $k = 3 $ or $k = q -1,$ in which cases $n \le q + 2.$  It is well-known that an EAQECC is MDS if and only if the corresponding classical linear code is MDS. Luo \emph{et al.} \cite{LCC} presented several classes of GRS and extended GRS codes with Euclidean hulls of arbitrary dimensions and constructed some families of $q$-ary MDS EAQECCs with length $n \leq q+1$. In \cite{GGJT18}, Guenda \emph{et al.} investigated the $\ell$-intersection pair of linear codes, where they completely determined the $q$-ary MDS EAQECCs of length $n \leq q+1$ for all possible parameters.

Algebraic geometry codes were discovered by Goppa, where they were also called geometric Goppa codes. Goppa showed in his paper \cite{Goppa} how to construct linear codes from algebraic curves over a finite field. A remarkable property of these codes is that their parameters can be computed via the degree of a divisor, which allows a nice description of the
code. Despite a strongly theoretical construction, algebraic geometry (AG) codes have asymptotically good parameters, and it was the first time that linear codes improved the so-called Gilbert-Vasharmov bound. 
It is well-known that AG codes of genus zero are MDS codes. On the contrary to the MDS case, almost MDS and near MDS codes exist more frequently.  A near MDS $[n,k]_q$ code corresponds to an $(n,k)$-arc in the finite geometry $PG(k-1,q)$. In \cite{de,DodLan}, using tools from finite geometry, for $q$ and $k$ fixed, the authors gave upper bounds on the length $n$ of an almost MDS and near MDS code over $\F_q$, respectively, and they also proved existence and non-existence of such codes over small finite fields. 

Recently, Sok \cite{Sok1D} has constructed several families of MDS linear codes with one-dimensional Euclidean hull from AG codes of genus zero.

In this work, we consider the constructions of linear codes with one-dimensional hull. We use tools from algebraic function fields to study such codes. Sufficient conditions for a code to have one-dimensional hull are given, and we explicitly construct many families of linear codes with one-dimensional Euclidean hull from algebraic curves.

The paper is organized as follows: Section \ref{section:pre} gives preliminaries and background on algebraic geometry (AG) codes. Section \ref{section:construction} provides a first characterization of an AG code to have one-dimensional hull and some methods to construct linear codes with one-dimensional hull. Section \ref{section:construction2} gives a second characterization and some constructions. Section \ref{section:application} gives an application to EAQECCs. We give a concluding remark in Section \ref{section:conclusion}.

\section{Preliminaries}\label{section:pre}

Let $\F_q$ be the finite field with $q$ elements. A 
linear code of length $n$, dimension $k$  and minimum distance $d$ over ${{\mathbb F}_q}$ is denoted as $q$-ary $[n,k,d]$ code. If $C$ is an $[n,k,d]$ code, then from the Singleton bound, its minimum distance is bounded above by
$$d\le n-k+1.$$
A code meeting the above bound is called {\em Maximum Distance Separable} ({MDS}). A code is called {\em almost} MDS if its minimum distance is one unit less than the MDS case. 
A code is called {\it optimal} if it has the highest possible minimum distance for its length and dimension.

The {\em Euclidean inner product} of ${\bf{x}}=(x_1,
\dots, x_n)$ and ${\bf{y}}=(y_1, \dots, y_n)$ in ${\mathbb F}_q^n$ is
${\bf{x}}*{\bf{y}}=\sum_{i=1}^n x_i y_i$. The {\em dual} of $C$,
denoted by $C^{\perp}$, is the set of vectors orthogonal to every
codeword of $C$ under the Euclidean inner product. The hull of a linear code $
C$ is $$hull(C):=C\cap C^\perp .$$ A linear code
$C$ is called $s$-$dim$ hull if $\dim (hull(C))=s.$ With this definition, a linear complementary dual (LCD) code is a $0$-$dim$ hull code and a self-dual code of length $n$ is a $\frac{n}{2}$-$dim$ hull code.

From Section \ref{section:pre} to Section \ref{section:construction}, we only consider the dual of a  linear code $C$ with respect to Euclidean inner product.

For a linear code $C\subset \F_q^n$ and ${\bf a}=(a_1,\hdots,a_n)\in (\F_q^*)^n$, we define

$$
{\bf a}\cdot C:=\{{\bf a} \cdot {\bf c}| {\bf c}\in C\}.
$$
It can be easily checked that ${\bf a}\cdot C$ is a linear code if and only if $C$ is a linear code. Moreover,  the codes $C$ and ${\bf a}\cdot C$ have the same dimension, minimum Hamming distance, and weight distribution.

We refer to Stichtenoth \cite{Stich} for undefined terms related to algebraic function fields. 

Let ${\cal X}$ be a smooth projective curve of genus $g$ over $\F_q.$
The field of rational functions of ${\cal X}$ is denoted by $\F_q({\cal X}).$ Function fields of
algebraic curves over a finite field can be characterized as
finite separable extensions of $\F_q(x)$. We identify points on the curve ${\cal X}$ with places of the
function field $\F_q({\cal X}).$ A point on ${\cal X}$ is called rational if all of
its coordinates belong to $\F_q.$ Rational points can be identified
with places of degree one. We denote the set of $\F_q$-rational
points of ${\cal X}$ by  ${\cal X}(\F_q)$.

A divisor $G$ on the curve ${\cal X}$ is a formal sum $\sum\limits_{P\in {\cal X}}n_PP$ with only finitely many nonzeros $n_P\in \Z$.
The support of $G$ is defined as $supp(G):=\{P|n_P\not=0\}$. The degree of $G$ is defined by $\deg(G):=\sum\limits_{P\in {\cal X}}n_P\deg(P)$. 
For two divisors $G=\sum\limits_{P\in {\cal X}}n_PP$ and $H=\sum\limits_{P\in {\cal X}}m_PP$,  we say that $G\ge H$ if $n_P\ge m_P$ for all places $P\in {\cal X}$, and define

$$G\wedge H:=\sum\limits_{P\in {\cal X}}\min (m_P,n_P)P,~G\vee H:=\sum\limits_{P\in {\cal X}}\max (m_P,n_P)P$$

For a nonzero rational function $f$ on the curve $\cal X$, we define the ``principal" divisor of $f$ as
$$(f):=\sum\limits_{P\in {\cal X}}v_P(f)P.$$ 

If $Z(f)$ and  $N(f)$ denotes the set of zeros and poles of $f$ respectively, we define the zero divisor and pole divisor of $f$, respectively by

$$
\begin{array}{c}
(f)_0:=\sum\limits_{P\in Z(f)}v_{P}(f)P,\\
(f)_\infty:=\sum\limits_{P\in N(f)}-v_{P}(f)P.\\
\end{array}
$$
Then, $(f)=(f)_0-(f)_\infty $, and it is well-known that the principal divisor $f$ has degree $0.$

We say that two divisors $G$ and $H$ on the curve $\cal X$ are equivalent if $G=H+(z)$ for some rational function $z\in \F_q({\cal X}).$

For a divisor $G$ on the curve $\cal X$, we define
$${\cal L}(G):=\{f\in \F_q({\cal X})\backslash \{0\}|(f)+G\ge 0\}\cup \{0\},$$ 
and 
$${\Omega}(G):=\{\omega\in \Omega\backslash \{0\}|(\omega)-G\ge 0\}\cup \{0\},$$
where $\Omega:=\{fdx|f\in \F_q({\cal X})\}$, the set of differential forms on $\cal X$. It is well-known that, for a differential form $\omega$ on $\cal X$, there exists a unique a rational function $f$ on $\cal X$ such that $$\omega=fdt,$$
where $t$ is a local uniformizing parameters. In this case, we define the divisor associated to $\omega$ by $$(\omega)=\sum\limits_{P\in {\cal X}}v_P(\omega)P, $$
where $v_P(\omega):=v_P(f).$

The dimension of ${\cal L}(G)$ is denoted by $\ell (G),$ and is determined by Riemann-Roch's theorem as follows.

\begin{thm}\cite[Theorem 1.5.15 (Riemann-Roch)]{Stich}\label{thm:Riemann-Roch} Let $W$ be a canonical divisor. Then, for each divisor $G$, the following holds:
$$\ell (G) = \deg G + 1 -g + i(G), \text{ where }i(G)=\ell(W-G),$$
and $g$ is the genus of the smooth algebraic curve.
\end{thm}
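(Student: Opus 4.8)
The statement is the classical Riemann--Roch theorem, so the plan is to follow the standard adelic (repartition) proof. Write $F=\F_q(\mathcal{X})$ for the function field and let $\mathbb{A}$ be its adele ring --- the restricted product $\prod_P F_P$ over all places, with $F$ embedded diagonally --- and for a divisor $G$ put $\mathbb{A}(G):=\{\alpha\in\mathbb{A}: v_P(\alpha_P)\ge -v_P(G)\text{ for all }P\}$. First I would prove the elementary counting lemma: if $G_1\le G_2$ then $\mathbb{A}(G_1)\subseteq\mathbb{A}(G_2)$ and $\dim_{\F_q}\!\big(\mathbb{A}(G_2)/\mathbb{A}(G_1)\big)=\deg G_2-\deg G_1$. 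Since $f\in F$ lies in $\mathbb{A}(G)$ exactly when $(f)+G\ge 0$, one has $\mathcal{L}(G)=\mathbb{A}(G)\cap F$, and the inclusion $\mathcal{L}(G_2)/\mathcal{L}(G_1)\hookrightarrow \mathbb{A}(G_2)/\mathbb{A}(G_1)$ shows that $\deg G-\ell(G)$ is monotone non-decreasing in $G$. A separate argument --- choosing a transcendental $x$, using that $F/\F_q(x)$ is a finite extension, and estimating $\ell(rP)$ for a fixed place $P$ as $r\to\infty$ --- shows this quantity is bounded above, so one may \emph{define} the genus $g:=\max_G\big(\deg G-\ell(G)+1\big)$; it is finite, and the same estimate gives $g\ge 0$ and $\ell(G)=\deg G+1-g$ once $\deg G$ is large. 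This yields Riemann's inequality $\ell(G)\ge \deg G+1-g$, and I set $i(G):=\ell(G)-\deg G-1+g\ge 0$; the whole content of the theorem is then the identification $i(G)=\ell(W-G)$.

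To get that identification I would bring in Weil differentials: let $\Omega_F$ be the set of $\F_q$-linear functionals $\omega:\mathbb{A}\to\F_q$ vanishing on $\mathbb{A}(G)+F$ for some divisor $G$, and let $\Omega_F(G)$ be those vanishing on $\mathbb{A}(G)+F$. Because $\mathbb{A}(G)+F$ has finite codimension $i(G)$ in $\mathbb{A}$ (again using finiteness of $g$), duality of finite-dimensional quotients gives $\dim_{\F_q}\Omega_F(G)=i(G)$; in particular $\dim_{\F_q}\Omega_F(0)=g$. The key structural fact is that $\Omega_F$ is one-dimensional as an $F$-vector space: if $\omega_1,\omega_2$ were $F$-independent, then for suitable divisors the $\F_q$-dimension of the span of the translates $x\,\omega_j$ ($x$ ranging over a Riemann--Roch space) would grow faster than the degree formula permits, a contradiction. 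Granting this, fix $0\ne\omega_0\in\Omega_F$; there is a unique maximal divisor $W:=(\omega_0)$ with $\omega_0\in\Omega_F(W)$, any two such canonical divisors are linearly equivalent, and the map $x\mapsto x\,\omega_0$ is an $\F_q$-isomorphism $\mathcal{L}(W-G)\xrightarrow{\ \sim\ }\Omega_F(G)$. Hence $\ell(W-G)=\dim_{\F_q}\Omega_F(G)=i(G)$, which is precisely the claimed formula.

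The main obstacle is not the linear-algebra bookkeeping but the two genuinely global inputs: (i) boundedness of $\deg G-\ell(G)+1$, i.e. finiteness of the genus, which rests on $F/\F_q(x)$ being a finite extension so that sections of a large pole divisor grow only linearly in its degree; and (ii) the rank-one property of $\Omega_F$ over $F$. Everything else is formal manipulation of the exact sequences above. As a consistency check one recovers the special cases invoked elsewhere in the paper: $\ell(0)=1$ forces $i(0)=g$ and $\deg W=2g-2$ (apply the formula to $G=W$), and for $g=0$ one gets $\ell(G)=\deg G+1$ whenever $\deg G\ge 0$, so that algebraic geometry codes of genus zero are MDS.
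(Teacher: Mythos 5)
Your outline is correct: the paper states Riemann--Roch as a cited classical result (Theorem 1.5.15 of Stichtenoth) and offers no proof of its own, and your adelic/Weil-differential argument --- the counting lemma for $\mathbb{A}(G_2)/\mathbb{A}(G_1)$, finiteness of the genus via Riemann's inequality, the identification of $i(G)$ with the codimension of $\mathbb{A}(G)+F$, the rank-one property of $\Omega_F$ over $F$, and the isomorphism $x\mapsto x\,\omega_0$ from $\mathcal{L}(W-G)$ onto $\Omega_F(G)$ --- is precisely the proof given in that cited reference. No discrepancy to report.
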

We call $i(G)$ the index of speciality of $G$. A divisor G
is called non-special if $i(G) = 0$ and otherwise it is called special.

For a particular divisor $G$, we can determine the dimension of the space ${\cal L}(G)$ as follows.
\begin{lem}\cite[Corollary 1.4.12]{Stich} \label{lem:dim-principal} Assume that a divisor $G$ has degree zero. Then, $G$ is principal
if and only if $\ell (G)=1.$
\end{lem}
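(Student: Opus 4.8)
The plan is to establish the two implications of the equivalence separately. The only ingredient that is not pure bookkeeping with divisor degrees is the standard fact that a rational function on the smooth projective curve $\mathcal X$ having no poles is constant; equivalently $\mathcal L(0)=\F_q$, so that $\ell(0)=1$. This is available from the general theory of algebraic function fields recalled above (it is also the point where completeness of $\mathcal X$ is used), and I would simply invoke it.

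For the direction ``$G$ principal $\Rightarrow\ell(G)=1$'', I would write $G=(z)$ with $z\in\F_q(\mathcal X)\setminus\{0\}$ and check that multiplication by $z$ gives an $\F_q$-linear isomorphism $\mathcal L(G)\to\mathcal L(0)$: the condition $(f)+G\ge 0$ is literally the condition $(zf)\ge 0$, so $f\mapsto zf$ maps $\mathcal L(G)$ into $\mathcal L(0)$, it is injective, and $g\mapsto z^{-1}g$ is a two-sided inverse (since $(z^{-1}g)+G=(g)\ge 0$ whenever $g\in\mathcal L(0)$). Hence $\ell(G)=\ell(0)=1$. Note that this half does not use the hypothesis $\deg G=0$, since a principal divisor has degree $0$ automatically.

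For the converse, assume $\deg G=0$ and $\ell(G)=1$. From $\ell(G)\ge 1$ I would pick a nonzero $f\in\mathcal L(G)$, so that $D:=(f)+G$ is an effective divisor, and taking degrees gives $\deg D=\deg(f)+\deg G=0$. An effective divisor of degree $0$ must be the zero divisor, because $\deg D=\sum_P n_P\deg P$ with all $n_P\ge 0$ and all $\deg P\ge 1$ forces every $n_P=0$. Therefore $(f)+G=0$, i.e. $G=-(f)=(f^{-1})$ is principal. I do not anticipate a genuine obstacle: once $\ell(0)=1$ is granted the whole argument is elementary, and the role of the degree hypothesis is confined to this converse direction, where it is exactly what collapses the effective divisor $(f)+G$ to zero.
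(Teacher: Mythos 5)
Your proof is correct and is essentially the standard argument: the paper itself gives no proof of this lemma, quoting it directly as Corollary 1.4.12 of Stichtenoth, and your two implications (multiplication by $z$ identifying $\mathcal{L}((z))$ with $\mathcal{L}(0)=\F_q$, and the observation that an effective divisor of degree zero must vanish) are exactly the argument given there. Nothing is missing; the only external fact you rely on, $\ell(0)=1$, is indeed the point where projectivity/completeness of the curve enters, and you have flagged it appropriately.
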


Through out the paper, we let $D=P_1+\cdots+P_n$, called the rational divisor, where $(P_i)_{1\le i \le n}$ are places of degree one, and $G$ a divisor with $supp(D)\cap supp(G)=\emptyset$. Define the algebraic geometry code by
$$
C_{\cal L}(D,G):=\{(f(P_1),\hdots,f(P_n))|f\in {\cal L}(G)\},
$$
and the differential algebraic geometry code as
$$
C_{\Omega}(D,G):=\{(\text{Res}_{P_1}(\omega),\hdots,\text{Res}_{P_n}(\omega))|\omega\in {\Omega}(G-D)\},
$$
where $\text{Res}_{P}(\omega)$ denotes the residue of $\omega$ at point $P.$

The parameters of an algebraic geometry code $C_{\cal L}(D,G)$ is given as follows.
\begin{thm}\cite[Corollary 2.2.3]{Stich}\label{thm:distance} Assume that $2g -2 < deg(G) < n.$ Then, the code $C_{\cal L}(D,G)$  has parameters $[n,k,d]$ satisfying
\begin{equation}
k=\deg (G)-g+1\text{ and } d\ge n-\deg (G).
\label{eq:distance}
\end{equation}

\end{thm}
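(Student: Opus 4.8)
The plan is to study the \emph{evaluation map}
$$\mathrm{ev}_D\colon {\cal L}(G)\longrightarrow \F_q^n,\qquad f\longmapsto (f(P_1),\dots,f(P_n)),$$
whose image is, by definition, $C_{\cal L}(D,G)$. Everything will come down to controlling how many of the $P_i$ a function in ${\cal L}(G)$ can vanish at, together with the elementary remark that a divisor $A$ of negative degree satisfies ${\cal L}(A)=\{0\}$ (if $0\ne f$ satisfies $(f)+A\ge 0$, then $0=\deg((f))\ge -\deg(A)$, forcing $\deg(A)\ge 0$).

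\textbf{Dimension.} First I would identify $\ker\mathrm{ev}_D$. If $f\in{\cal L}(G)$ vanishes at every $P_i$, then $v_{P_i}(f)\ge 1$ for all $i$; since $supp(D)\cap supp(G)=\emptyset$ this is exactly the condition $(f)+G-D\ge 0$, i.e. $f\in{\cal L}(G-D)$, and the reverse inclusion is immediate. Hence $\ker\mathrm{ev}_D={\cal L}(G-D)$. Because $\deg(G-D)=\deg(G)-n<0$ by the hypothesis $\deg(G)<n$, the remark above gives ${\cal L}(G-D)=\{0\}$, so $\mathrm{ev}_D$ is injective and $k=\dim C_{\cal L}(D,G)=\ell(G)$. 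Now I invoke Riemann--Roch (Theorem \ref{thm:Riemann-Roch}): $\ell(G)=\deg(G)+1-g+i(G)$ with $i(G)=\ell(W-G)$. Since $\deg(W)=2g-2$, the second hypothesis $\deg(G)>2g-2$ gives $\deg(W-G)<0$, hence $i(G)=0$ and $k=\deg(G)-g+1$.

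\textbf{Minimum distance.} Let $c=\mathrm{ev}_D(f)$ be any nonzero codeword, of Hamming weight $w$; then $f$ is nonzero and vanishes at the $n-w$ coordinates outside the support of $c$. Writing $D'$ for the sum of those $n-w$ rational points (a subdivisor of $D$, hence still disjoint from $supp(G)$), the same local computation as above yields $0\ne f\in{\cal L}(G-D')$. Therefore $\ell(G-D')\ge 1$, which by the remark forces $\deg(G-D')\ge 0$, i.e. $\deg(G)-(n-w)\ge 0$, that is $w\ge n-\deg(G)$. As $c$ was arbitrary, $d\ge n-\deg(G)$.

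I do not expect a genuine obstacle: the proof is careful bookkeeping with divisors. The two points demanding attention are (i) using each hypothesis exactly where it is needed --- $\deg(G)<n$ for injectivity of $\mathrm{ev}_D$ (and thus for the dimension formula), and $\deg(G)>2g-2$ for the vanishing of the index of speciality --- and (ii) using the disjointness $supp(D)\cap supp(G)=\emptyset$ to pass freely between ``$f(P_i)=0$'' and ``$(f)+G-P_i\ge 0$''.
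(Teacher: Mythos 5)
Your proof is correct and is exactly the standard argument behind the cited result (Stichtenoth, Corollary 2.2.3), which the paper quotes without proof: identify $\ker(\mathrm{ev}_D)={\cal L}(G-D)$ and kill it by degree, compute $\ell(G)$ via Riemann--Roch using $\deg(G)>2g-2$ to force $i(G)=0$, and bound the weight of a nonzero codeword by the nonvanishing of ${\cal L}(G-D')$. Each hypothesis is used precisely where it is needed, so there is nothing to add.
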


The dual of the algebraic geometry code $C_{\cal L}(D,G)$ can be described as follows.

\begin{lem}\cite[Theorem 2.2.8]{Stich}\label{lem:dual1} With above notation, the two codes $C_{\cal L}(D,G)$ and $C_{\Omega}(D,G)$ are dual to each other.
\end{lem}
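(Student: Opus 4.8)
The plan is to establish the duality in two independent parts: first the containment $C_\Omega(D,G)\subseteq C_{\cal L}(D,G)^\perp$, then the equality $\dim C_\Omega(D,G)=n-\dim C_{\cal L}(D,G)$, which together force the two codes to be exact duals. For the containment, fix $f\in{\cal L}(G)$ and $\omega\in\Omega(G-D)$. Since $supp(D)\cap supp(G)=\emptyset$, each $f$ is regular at every $P_i$, so $f(P_i)$ is defined and $\text{Res}_{P_i}(f\omega)=f(P_i)\,\text{Res}_{P_i}(\omega)$. From $(f)\ge -G$ and $(\omega)\ge G-D$ we get $(f\omega)=(f)+(\omega)\ge -D$, so $f\omega$ is a differential whose only poles lie in $supp(D)$, each of order at most one. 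The residue theorem --- the sum of all residues of a differential form on a projective curve over $\F_q$ is zero --- then gives $\sum_{i=1}^n f(P_i)\,\text{Res}_{P_i}(\omega)=\sum_{i=1}^n\text{Res}_{P_i}(f\omega)=0$, which is precisely the orthogonality of the evaluation vector of $f$ and the residue vector of $\omega$. Letting $f$ and $\omega$ range over ${\cal L}(G)$ and $\Omega(G-D)$ yields the inclusion.

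For the dimension count, I would first note that the evaluation map ${\cal L}(G)\to\F_q^n$ has kernel ${\cal L}(G-D)$, so $\dim C_{\cal L}(D,G)=\ell(G)-\ell(G-D)$, which by Riemann--Roch (Theorem \ref{thm:Riemann-Roch}) equals $\deg D+i(G)-i(G-D)=n+i(G)-i(G-D)$. For the differential code, I would use that multiplication by a fixed nonzero differential $\eta$ with $(\eta)=W$ a canonical divisor gives an isomorphism ${\cal L}(W-A)\cong\Omega(A)$, so $\dim\Omega(A)=i(A)$; under this, the residue map $\Omega(G-D)\to\F_q^n$ has image $C_\Omega(D,G)$ and kernel exactly $\Omega(G)$ --- a differential in $\Omega(G-D)$ has at worst simple poles at the $P_i$, and a simple pole forces a nonzero residue, so vanishing of all residues is equivalent to $(\omega)\ge G$. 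Hence $\dim C_\Omega(D,G)=\dim\Omega(G-D)-\dim\Omega(G)=i(G-D)-i(G)$. Adding the two gives $\dim C_{\cal L}(D,G)+\dim C_\Omega(D,G)=n$, so the inclusion of the first paragraph is an equality.

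The only genuinely deep ingredient is the residue theorem; everything else reduces to manipulating divisor inequalities and invoking Riemann--Roch. The point I expect to need the most care is the identification of the kernel of the residue map on $\Omega(G-D)$ with $\Omega(G)$ (equivalently, with ${\cal L}(W-G)$ via the differential $\eta$), since the clean dimension formula for $C_\Omega(D,G)$ --- and hence the final count $\dim C_{\cal L}(D,G)+\dim C_\Omega(D,G)=n$ --- rests on it; once that identification is secured the argument closes at once.
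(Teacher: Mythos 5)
Your proof is correct and is essentially the standard argument of the cited reference (Stichtenoth, Theorem 2.2.8, together with Theorems 2.2.2 and 2.2.7): orthogonality via the residue theorem applied to $f\omega$ with $(f\omega)\ge -D$, plus the dimension counts $\dim C_{\cal L}(D,G)=\ell(G)-\ell(G-D)$ and $\dim C_\Omega(D,G)=i(G-D)-i(G)$, which sum to $n$ by Riemann--Roch. The paper itself gives no proof, only the citation, and your identification of the kernel of the residue map with $\Omega(G)$ (using that a simple pole has nonzero residue) is exactly the step the textbook proof also relies on.
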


Moreover, the differential code $C_{\Omega}(D,G)$ is determined as follows.
\begin{lem}\cite[Proposition 2.2.10] {Stich}\label{lem:dual2} With the above notation, assume that there exists a differential form $\omega $ satisfying
\begin{enumerate}
\item  $v_{P_i}(\omega)=-1,1\le i \le n$ and 
\item $\text{Res}_{P_i}(\omega)=\text{Res}_{P_j}(\omega)$ for $1\le i \le n.$ 
\end{enumerate}
Then, $C_{\Omega}(D,G)={\bf a}\cdot C_{\cal L}(D,D-G+(\omega))$ for some ${\bf a}\in ({\F^*_q}){^n}.$ 
\end{lem}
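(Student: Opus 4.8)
The plan is to reduce the claim to a statement about Riemann--Roch spaces by dividing through by the fixed differential $\omega$. Recall that the space of differentials of ${\cal X}$ is one-dimensional over $\F_q({\cal X})$, so any nonzero $\eta\in\Omega$ can be written uniquely as $\eta=f\omega$ with $f\in\F_q({\cal X})^{*}$, and then $(\eta)=(f)+(\omega)$. The first step is to translate membership: $\eta\in\Omega(G-D)$ means $(\eta)\geq G-D$, i.e. $(f)+(\omega)\geq G-D$, i.e. $(f)+\bigl(D-G+(\omega)\bigr)\geq 0$, i.e. $f\in{\cal L}\bigl(D-G+(\omega)\bigr)$. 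Hence $\eta\mapsto f=\eta/\omega$ is an $\F_q$-linear isomorphism of $\Omega(G-D)$ onto ${\cal L}\bigl(D-G+(\omega)\bigr)$.

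The second step is to compute the residues along this correspondence. Since $supp(D)\cap supp(G)=\emptyset$ and $v_{P_i}(\omega)=-1$ by hypothesis~(1), the coefficient of $P_i$ in the divisor $D-G+(\omega)$ is $1-0+(-1)=0$; consequently every $f\in{\cal L}\bigl(D-G+(\omega)\bigr)$ satisfies $v_{P_i}(f)\geq 0$, so $f$ is regular at $P_i$ and the evaluation $f(P_i)$ is defined. Moreover $\omega$ has a pole of order exactly one at each $P_i$, so its residue $a_i:=\text{Res}_{P_i}(\omega)$ lies in $\F_q^{*}$, and the local expansion of $f\omega$ in a uniformizer at $P_i$ yields the standard formula $\text{Res}_{P_i}(f\omega)=f(P_i)\,\text{Res}_{P_i}(\omega)=a_i f(P_i)$. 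Set ${\bf a}=(a_1,\dots,a_n)\in(\F_q^{*})^{n}$.

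Putting the pieces together, as $\eta$ ranges over $\Omega(G-D)$ the function $f=\eta/\omega$ ranges over all of ${\cal L}\bigl(D-G+(\omega)\bigr)$, and
$$\bigl(\text{Res}_{P_1}(\eta),\dots,\text{Res}_{P_n}(\eta)\bigr)=\bigl(a_1 f(P_1),\dots,a_n f(P_n)\bigr)={\bf a}\cdot\bigl(f(P_1),\dots,f(P_n)\bigr),$$
so $C_{\Omega}(D,G)={\bf a}\cdot C_{\cal L}\bigl(D,D-G+(\omega)\bigr)$, which is the assertion. I would also remark that only hypothesis~(1) is actually used here; if in addition hypothesis~(2) holds then $a_1=\cdots=a_n$, so ${\bf a}$ is a scalar multiple of the all-ones vector and ${\bf a}\cdot C_{\cal L}=C_{\cal L}$, giving the sharper identity $C_{\Omega}(D,G)=C_{\cal L}\bigl(D,D-G+(\omega)\bigr)$.

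There is no serious obstacle: the argument is essentially a change of variables between differentials and functions. The only non-formal ingredients are two classical facts from the theory of function fields --- the one-dimensionality of $\Omega$ over $\F_q({\cal X})$ and the residue identity $\text{Res}_P(f\omega)=f(P)\,\text{Res}_P(\omega)$ for $f$ regular at a simple pole $P$ of $\omega$, both found in \cite{Stich}. The one place demanding a moment's care is verifying that $f(P_i)$ is well defined, which is exactly the coefficient computation $1-0+(-1)=0$ and is precisely where the hypothesis $v_{P_i}(\omega)=-1$ enters.
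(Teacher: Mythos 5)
Your proof is correct and is exactly the standard argument behind the cited result (Stichtenoth, Prop.\ 2.2.10), which the paper does not reprove: divide by $\omega$ to identify $\Omega(G-D)$ with ${\cal L}(D-G+(\omega))$ via $\eta\mapsto\eta/\omega$, then use $\text{Res}_{P_i}(f\omega)=f(P_i)\,\text{Res}_{P_i}(\omega)$ at the simple poles $P_i$, where regularity of $f$ at $P_i$ follows from the coefficient computation $1-0-1=0$. Your side remark is also accurate and worth noting: only hypothesis (1) is used, and under hypothesis (2) the vector ${\bf a}$ is constant so the twist is vacuous; the general form you prove (with $a_i=\text{Res}_{P_i}(\omega)$ not all equal) is the one the paper actually invokes later, e.g.\ in Lemma \ref{lem:sufficient}.
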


\section{First characterization and construction of $1$-$dim$ hull codes }\label{section:construction}
\subsection{First characterization }
In this section, we give a characterization of an algebraic geometry code to have one dimensional hull.
We begin with the following useful lemma.
\begin{lem}\label{lem:sufficient0} Let $(a_i)_{1\le i\le n}\in \F_q^*$ and $C$ be a linear code in $\F_q^n$. If $C^\perp={\bf e}\cdot C'$ with ${\bf e}=(a_1^2,\hdots,a_n^2)$ and $C\cap C'$ is a one-dimensional code, then $({\bf a}\cdot C)^\perp ={\bf a}\cdot C'$ and ${\bf a}\cdot C$ is a $1$-$dim$ hull code, where ${\bf a}=(a_1,\hdots ,a_n).$
\end{lem}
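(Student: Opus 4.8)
The plan is to unwind the definitions and track how componentwise multiplication by $\mathbf a$ interacts with the Euclidean dual. First I would recall the elementary identity: for any $\mathbf a=(a_1,\dots,a_n)\in(\F_q^*)^n$ and any linear code $C\subseteq\F_q^n$, one has $(\mathbf a\cdot C)^\perp=\mathbf a^{-1}\cdot C^\perp$, where $\mathbf a^{-1}=(a_1^{-1},\dots,a_n^{-1})$. This follows because $(\mathbf a\cdot\mathbf c)*(\mathbf a^{-1}\cdot\mathbf d)=\sum_i a_ic_i a_i^{-1}d_i=\mathbf c*\mathbf d$, so $\mathbf a^{-1}\cdot C^\perp\subseteq(\mathbf a\cdot C)^\perp$, and equality holds by comparing dimensions (both have dimension $n-\dim C$, since scaling by a unit vector preserves dimension, as noted in the preliminaries).

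Next, substitute the hypothesis $C^\perp=\mathbf e\cdot C'$ with $\mathbf e=(a_1^2,\dots,a_n^2)$ into this identity:
$$(\mathbf a\cdot C)^\perp=\mathbf a^{-1}\cdot C^\perp=\mathbf a^{-1}\cdot(\mathbf e\cdot C')=(\mathbf a^{-1}\cdot\mathbf e)\cdot C'=\mathbf a\cdot C',$$
since $\mathbf a^{-1}\cdot\mathbf e=(a_1^{-1}a_1^2,\dots,a_n^{-1}a_n^2)=\mathbf a$. This gives the first assertion. For the hull, compute directly:
$$\mathrm{hull}(\mathbf a\cdot C)=(\mathbf a\cdot C)\cap(\mathbf a\cdot C)^\perp=(\mathbf a\cdot C)\cap(\mathbf a\cdot C').$$
Now I would use that componentwise multiplication by the fixed unit vector $\mathbf a$ is an $\F_q$-linear bijection of $\F_q^n$, hence it preserves intersections: $(\mathbf a\cdot C)\cap(\mathbf a\cdot C')=\mathbf a\cdot(C\cap C')$. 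Since $C\cap C'$ is one-dimensional by hypothesis and the bijection preserves dimension, $\dim\mathrm{hull}(\mathbf a\cdot C)=\dim(C\cap C')=1$, so $\mathbf a\cdot C$ is a $1$-$dim$ hull code.

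There is really no serious obstacle here; the only point requiring a line of justification is that $\mathbf v\mapsto\mathbf a\cdot\mathbf v$ is a bijection that commutes with taking intersections and with the Euclidean duality up to the inversion $\mathbf a\mapsto\mathbf a^{-1}$, and that the two relevant dual codes have equal dimension so the inclusions upgrade to equalities. Everything else is a routine substitution using $a_i^{-1}\cdot a_i^2=a_i$. I would present the argument in the order above: first the duality identity $(\mathbf a\cdot C)^\perp=\mathbf a^{-1}\cdot C^\perp$, then the substitution yielding $(\mathbf a\cdot C)^\perp=\mathbf a\cdot C'$, and finally the intersection computation giving the one-dimensional hull.
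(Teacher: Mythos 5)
Your proof is correct and follows essentially the same route as the paper's: the same inner-product computation $\sum_i a_i^2 c_i c_i'=0$ (you just package it as the general identity $(\mathbf a\cdot C)^\perp=\mathbf a^{-1}\cdot C^\perp$ before substituting $C^\perp=\mathbf e\cdot C'$), the same dimension count to upgrade the inclusion to an equality, and the same observation that multiplication by $\mathbf a$ is a bijection so it commutes with intersection. No gaps.
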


\begin{proof}
Since $C^\perp={\bf e}\cdot C'$, we have $\dim (C')+\dim(C)=n.$ For any ${\bf c}=(c_1,\hdots,c_n)\in C$, ${\bf c'}=(c'_1,\hdots,c'_n)\in C',$ we have $a{\bf c}*a{\bf c'}=a_1^2c_1c'_1+\cdots+a_n^2c_nc'_n=0$, and thus $({\bf a}\cdot C)^\perp={\bf a}\cdot C' .$ Now, we have $({\bf a}\cdot C)\cap ({\bf a}\cdot C)^\perp=({\bf a}\cdot C)\cap ({\bf a}\cdot C')={\bf a}\cdot(C\cap C')$, and thus the result follows.
\end{proof}
Now, an algebraic geometry code having one dimensional hull can be characterized as follows.
\begin{lem}\label{lem:sufficient}Let $E$ be a smooth algebraic curve over $\F_q$ with genus $g$  and $G, D = P_1 + P_2 + \cdots+ P_n$ be two divisors over $E$, where $0 < deg(G) < n$. Assume that there exists a differential form $\omega$ such that $(\omega) = G + H - D$ for some divisor $H$ with $Supp(G) \cap Supp(D) = Supp(H) \cap Supp(D) = \emptyset$. Assume further that
\begin{enumerate}
\item $G\wedge H$ is a principal divisor;
\item $\deg (G\vee H -D)= 0$;
\item $(\omega)$ is a principal divisor;
\item There is a vector ${\bf a}=(a_1,\hdots,a_n)\in (\F_q^*)^n$ with $Res_{P_i}(\omega)=a_i^2$.
\end{enumerate}
Then, ${\bf a}\cdot C_{\cal L}(D, G)$ is either a $1$-$dim$ or a $2$-$dim$  hull code with parameters $[n,\deg (G)-g+1, \ge n-\deg (G)]$, and its dual has parameters $[n,\deg (H)-g+1, \ge n-\deg (H)]$.
\end{lem}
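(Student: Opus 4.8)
The plan is to pin down the Euclidean dual of $C_{\mathcal L}(D,G)$ explicitly as a multiplier-equivalent copy of $C_{\mathcal L}(D,H)$, and then to squeeze the dimension of $C_{\mathcal L}(D,G)\cap C_{\mathcal L}(D,H)$ between $1$ and $2$ by means of the divisors $G\wedge H$ and $G\vee H-D$.

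First I would record that, since $Supp(G)\cap Supp(D)=Supp(H)\cap Supp(D)=\emptyset$, the divisor $(\omega)=G+H-D$ has coefficient exactly $-1$ at each $P_i$, i.e. $v_{P_i}(\omega)=-1$ for $1\le i\le n$. Writing an arbitrary differential uniquely as $f\omega$ with $f\in\F_q(E)$, one has $f\omega\in\Omega(G-D)$ iff $(f)+(\omega)\ge G-D$ iff $(f)\ge -H$ iff $f\in\mathcal L(H)$, so $\Omega(G-D)=\{f\omega:f\in\mathcal L(H)\}$. As $\omega$ has a simple pole at $P_i$ while $f$ is regular there ($P_i\notin Supp(H)$), the residue formula gives $\text{Res}_{P_i}(f\omega)=f(P_i)\,\text{Res}_{P_i}(\omega)=a_i^2 f(P_i)$ by hypothesis (4), whence
\[
C_{\Omega}(D,G)=\{(a_1^2 f(P_1),\dots,a_n^2 f(P_n)):f\in\mathcal L(H)\}={\bf e}\cdot C_{\mathcal L}(D,H),\qquad {\bf e}=(a_1^2,\dots,a_n^2).
\]
By Lemma \ref{lem:dual1}, $C_{\mathcal L}(D,G)^\perp={\bf e}\cdot C_{\mathcal L}(D,H)$. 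Running the computation in the proof of Lemma \ref{lem:sufficient0} (the identity $({\bf a}\cdot C)^\perp={\bf a}\cdot C'$ there does not use one-dimensionality of $C\cap C'$) then gives $({\bf a}\cdot C_{\mathcal L}(D,G))^\perp={\bf a}\cdot C_{\mathcal L}(D,H)$ and $hull({\bf a}\cdot C_{\mathcal L}(D,G))={\bf a}\cdot\bigl(C_{\mathcal L}(D,G)\cap C_{\mathcal L}(D,H)\bigr)$, so the problem reduces to bounding $\dim\bigl(C_{\mathcal L}(D,G)\cap C_{\mathcal L}(D,H)\bigr)$.

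Next I would observe that hypothesis (3), namely $(\omega)=G+H-D$ principal, forces $\deg(G)+\deg(H)=n$ and $2g-2=\deg((\omega))=0$, so $0<\deg(H)<n$ and $2g-2<\deg(G),\deg(H)<n$; Theorem \ref{thm:distance} then yields parameters $[n,\deg(G)-g+1,\ge n-\deg(G)]$ for $C_{\mathcal L}(D,G)$ and $[n,\deg(H)-g+1,\ge n-\deg(H)]$ for $C_{\mathcal L}(D,H)$, and multiplication by ${\bf a}$ preserves length, dimension and minimum distance. Since $\deg(G),\deg(H)<n$, the evaluation maps are injective on $\mathcal L(G)$ and on $\mathcal L(H)$, so the projection $(f,f')\mapsto(f(P_1),\dots,f(P_n))$ identifies $C_{\mathcal L}(D,G)\cap C_{\mathcal L}(D,H)$ with $K:=\{(f,f')\in\mathcal L(G)\times\mathcal L(H):f(P_i)=f'(P_i)\text{ for all }i\}$. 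I would then study the linear map $\psi\colon K\to\mathcal L(G\vee H-D)$, $(f,f')\mapsto f-f'$, which is well defined because $(f-f')\ge-(G\vee H)$ and $f-f'$ vanishes at every $P_i$. Its kernel is $\{(f,f):f\in\mathcal L(G)\cap\mathcal L(H)\}=\mathcal L(G\wedge H)$, of dimension $1$ by hypothesis (1) together with Lemma \ref{lem:dim-principal}; its image lies in $\mathcal L(G\vee H-D)$, a space of dimension at most $1$ since $\deg(G\vee H-D)=0$ by hypothesis (2). Hence $1\le\dim K\le 2$. Combined with the first paragraph, $hull({\bf a}\cdot C_{\mathcal L}(D,G))$ has dimension $1$ or $2$, and ${\bf a}\cdot C_{\mathcal L}(D,G)$ and its dual ${\bf a}\cdot C_{\mathcal L}(D,H)$ have the asserted parameters.

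The step I expect to be the main obstacle is the explicit identification $C_{\Omega}(D,G)={\bf e}\cdot C_{\mathcal L}(D,H)$: it hinges on pinning down $v_{P_i}(\omega)=-1$ and on the residue identity $\text{Res}_{P_i}(f\omega)=f(P_i)\text{Res}_{P_i}(\omega)$, i.e. on reconciling the differential description of $\Omega(G-D)$ with the functional description of $\mathcal L(H)$ --- in effect a hands-on version of Lemma \ref{lem:dual2} valid for a non-constant residue vector. The only other delicate point is that one only gets $\dim K\le 2$, not $\dim K=1$, because $\mathcal L(G)+\mathcal L(H)$ need not exhaust $\mathcal L(G\vee H)$ and so $\psi$ need not be surjective; the hull turns out to be $1$-dimensional exactly when $\mathcal L(G\vee H-D)=\{0\}$ and $2$-dimensional otherwise.
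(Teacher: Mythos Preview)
Your proof is correct and follows essentially the same line as the paper's: identify $C_{\mathcal L}(D,G)^\perp = {\bf e}\cdot C_{\mathcal L}(D,H)$ via the differential $\omega$, reduce the hull to $C_{\mathcal L}(D,G)\cap C_{\mathcal L}(D,H)$, and bound the latter through $\mathcal L(G\wedge H)$ and $\mathcal L(G\vee H-D)$. Your packaging via the linear map $\psi$ is a cleaner rendition of the paper's case split on $e=f-g$, and you make explicit the point $2g-2=\deg((\omega))=0$ that the paper uses tacitly when invoking Theorem~\ref{thm:distance}.

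One caveat on your closing remark (not on the proof itself): under the lemma's hypotheses, $G\vee H-D=(\omega)-G\wedge H$ is a difference of two principal divisors, hence principal, so $\ell(G\vee H-D)=1$ \emph{always}. Your proposed dichotomy ``the hull is $1$-dimensional exactly when $\mathcal L(G\vee H-D)=\{0\}$'' would therefore force the hull to be $2$-dimensional in every instance of the lemma, contradicting Example~\ref{exam:1}. The correct dichotomy is the one you yourself flagged just before: the hull is $1$-dimensional precisely when $\psi$ has zero image, which can (and does) happen even though its target is $1$-dimensional. This slip lives only in your commentary; the inequality $1\le\dim K\le 2$ that you actually prove is all the lemma claims.
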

\begin{proof} First, note that $G\vee H-D=(\omega)-G\wedge H $.

Let $c\in C_{\cal L}(D,G)\cap C_{\cal L}(D,H).$ Then we have that $c=(f(P_1),\hdots,f(P_n))$ $=(g(P_1),\hdots,g(P_n))$ for some $f\in {\cal L}(G)$ and $g\in {\cal L}(H).$ Set $e=f-g.$ If $e=0$, then $f=g\in {\cal L}(G)\cap {\cal L}(H)$. Thus $f=g\in {\cal L}(G\wedge H)$, and the fact that $\ell (G\wedge H)=1$ follows from Lemma \ref{lem:dim-principal}. If $e\not=0,$ then $e\in {\cal L}(G\vee H),$ and since $v_{P_i}(e)\ge 1$, it implies that $e\in {\cal L}(G\vee H-D).$ Since $(G\vee H-D)$ is principal and has degree 0, it follows that $\ell (G\vee H -D)=1$. Hence $f$ and $g$ are in one dimensional subspace. Now, take $C=C_{\cal L}(D,G)$ and $C'=C_{\cal L}(D,H)$, and then the fact that ${\bf a}\cdot C_{\cal L}(D,G)$ is either a $1$-$dim$ or a $2$-$dim$ hull code follows from Lemma  \ref{lem:dual2} and Lemma \ref{lem:sufficient0}. Finally, the dimension and the minimum distance follow from Thoerem \ref{thm:distance}.
\end{proof}
\begin{exam} \label{exam:1}Consider the elliptic curve over $\F_{5^2}$ defined by $$y^2z+yz^2=x^3+z^3.$$

The points $
O=(0:1:0),
   P_1= ( 4: 0 :1),
    P_2=( 4: 4 :1),
    P_3=( 1: 1 :1),
    P_4=( 1: 3 :1),
    P_5=( 2: w^8 :1),
    P_6=( 2: w^{16} :1),
    P_7=( 3: w^{13} :1),
    P_8=( 3:w^{17}:1)
$ are on the curve. 
Take $U=\{\alpha\in \F_{5^2}|\alpha^4=1\}$. Set $h(x)=\prod\limits_{\alpha\in U}(x-\alpha)$, $D=(h)_0$ and $\omega=\frac{z^4}{x^4-z^4}d(\frac{x}{z}).$ Then, the residue of $\omega$ at points $(P_i)_{1\le i\le 8}$ are in $\F_5$ and thus square elements in $\F_{5^2}.$
 Take $G=2O+P_{9}+P_{10},$ where  $P_{9}=(w^4:0:1),P_{10}=(w^4:4:1)$. We have that 
 $$
 \begin{array}{ll}
 \left(\frac{x-z}{z}\right)&=P_3+P_4-2O,\\
 \left(\frac{y-z}{x-w^4z}\right)&=P_3+(w^8:1:1)+(w^{16}:1:1)-O-P_9-P_{10},\\
  \left(\frac{x-y}{x-w^4z}\right)&=(4:1:1)+P_3-O-P_9-P_{10}.\\
 \end{array}
$$
Then, $1,\frac{x-z}{z},\frac{y-z}{x-w^4z},\frac{x-y}{x-w^4z}$ are in ${\cal L}(2O+P_{9}+P_{10})$ and thus form a basis for $C_{\cal L}(D,G)$. The generator matrix of $C_{\cal L}(D,G)$ is given by
$$
\left(
\begin{array}{cccccccc}
 1   & 1  &  1  &  1  &  1 &   1  &  1   & 1\\
   3  &  3 &   0   & 0   & 1   & 1  &  2   & 2\\
   w  &  0  &  0& w^{22} &w^{14} & w^2 &w^{17} &   w\\
   w & w^7  &  0 &w^{10} &   4 &w^{16}  &  w &w^{17}\\
\end{array}
\right).
$$

It can be checked from \cite{Mag} that the code $C_{\cal L}(D,G)$ has parameters $[8, 4, 4]$, and the code ${\bf a}\cdot C_{\cal L}(D,G) $, where ${\bf a}=(
    1,
    1,
    3,
    3,
    w^{21},
    w^{21},
    w^{15},
    w^{15}
)$, is a $1$-$dim$ hull code over $\F_{5^2}.$
\end{exam}
\subsection{First construction from elliptic curves}

First, let us recall some notation and definition about elliptic curves over finite fields. Let $E$ be an elliptic curve over $\F_q$ and $O$ be the point at infinity of $E(\overline {\F}_q)$, where $\overline {\F}_q$ is the algebraic closure of $\F_q$, and $E(\overline {\F}_q)$ is the set of all points on $E$. For a non-negative integer $r$, we define $E[r]:=\{P\in E(\overline{\F}_q):P\oplus\cdots \oplus P=O\}$

\begin{lem}\label{thm:01}Let $E$ be an elliptic curve over $\F_q$ and $G, D = P_1 + P_2 + \cdots+ P_n$ be divisors over $E$, where $0 < deg(G) < n$. Assume that there exists a differential form $\omega$ such that $(\omega) = G + H - D$ for some divisor $H$ with $Supp(G) \cap Supp(D) = Supp(H) \cap Supp(D) = \emptyset$. Assume further that
\begin{enumerate}

\item $G\wedge H$ is a principal divisor;
\item $\deg (G\vee H -D)=0$;
\item There is a vector ${\bf a}=(a_1,\hdots,a_n)\in (\F_q^*)^n$ with $Res_{P_i}(\omega)=a_i^2$.
\end{enumerate}
Then, ${\bf a}\cdot C_{\cal L}(D, G)$ is either a $1$-$dim$ or a $2$-$dim$ hull code with parameters $[n,\deg (G), \ge n-\deg (G)]$, and its dual has parameters $[n,n-\deg (G), \ge \deg (G)]$.
\end{lem}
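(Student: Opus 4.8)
The plan is to read this off as the genus-one special case of Lemma \ref{lem:sufficient}. Since $E$ is an elliptic curve we have $g=1$, hence $2g-2=0$ and every differential on $E$ has a divisor of degree $0$. Conditions (1), (2), (3) in the present statement are literally conditions (1), (2), (4) of Lemma \ref{lem:sufficient}, together with the same standing assumptions $(\omega)=G+H-D$, the support conditions, and $0<\deg(G)<n$; so the only thing left to supply is condition (3) of Lemma \ref{lem:sufficient}, namely that $(\omega)$ be a principal divisor. I would check that this is automatic on an elliptic curve. The zero divisor is principal, so $\ell(0)=1$ by Lemma \ref{lem:dim-principal}; feeding $G=0$ into Riemann--Roch (Theorem \ref{thm:Riemann-Roch}) then gives, for a canonical divisor $W$,
$$1=\ell(0)=\deg(0)+1-g+i(0)=1-g+\ell(W),$$
using $\deg(0)=0$ and $i(0)=\ell(W)$, so that $\ell(W)=g=1$. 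As $\deg(W)=2g-2=0$, the converse direction of Lemma \ref{lem:dim-principal} shows $W$ is principal; since $(\omega)$ is a canonical divisor it is linearly equivalent to $W$ and hence is itself principal.

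With all four hypotheses of Lemma \ref{lem:sufficient} verified, I would apply that lemma directly: ${\bf a}\cdot C_{\cal L}(D,G)$ is a $1$-$dim$ or a $2$-$dim$ hull code with parameters $[n,\deg(G)-g+1,\ge n-\deg(G)]$ and its dual has parameters $[n,\deg(H)-g+1,\ge n-\deg(H)]$. Putting $g=1$ collapses the first to $[n,\deg(G),\ge n-\deg(G)]$. For the dual I would take degrees in $(\omega)=G+H-D$: since $\deg(\omega)=0$ and $\deg(D)=n$ we get $\deg(H)=n-\deg(G)$, so the dual has parameters $[n,\,n-\deg(G),\,\ge\deg(G)]$. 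The hypothesis $0<\deg(G)<n$ simultaneously forces $0<\deg(H)<n$, which is precisely the range in which Theorem \ref{thm:distance} applies to $C_{\cal L}(D,H)$, as it must inside the proof of Lemma \ref{lem:sufficient}.

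I do not anticipate a genuine obstacle: essentially all the work is already encapsulated in Lemma \ref{lem:sufficient}, and the only new observation is that the canonical class of an elliptic curve is trivial, which makes hypothesis (3) of Lemma \ref{lem:sufficient} redundant in this setting. It is worth noting that the hypotheses of Lemma \ref{lem:sufficient} are in fact only consistent when $g=1$ anyway: conditions (1) and (2) together with $(\omega)=G+H-D$ give $\deg(G\wedge H)=\deg(G)+\deg(H)-n=\deg(\omega)=2g-2$, and a principal divisor has degree $0$, so $g=1$. Thus restricting to elliptic curves costs nothing, and the present lemma is simply the clean reformulation of Lemma \ref{lem:sufficient} for curves of genus one.
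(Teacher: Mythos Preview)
Your proposal is correct and follows exactly the paper's approach: the paper's proof simply states that on an elliptic curve the divisor $(\omega)$ is principal (the fact you verify via Riemann--Roch and Lemma~\ref{lem:dim-principal}) and then invokes Lemma~\ref{lem:sufficient}. Your additional remarks on the dual parameters and on the hypotheses of Lemma~\ref{lem:sufficient} forcing $g=1$ are accurate elaborations, but the core argument is the same.
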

\begin{proof} It is well-known that on an elliptic curve the Weil differential $(\omega) = G + H - D$ is principal. The rest follows from Lemma \ref{lem:sufficient}.
\end{proof}

\begin{thm}\label{thm:02}Let $D = P_1 + P_2 + \cdots+ P_n$ and $\overline{D}$ be such that $2O-\overline{D}$ is principal, $\overline{D}\notin E[r]$ with $1\le r<  n/4$ and $O, \overline{D}\notin supp(D)$. Set $G=2rO+r\overline{D}$. Assume that there exists a differential form $\omega$ such that $(\omega) = G + H - D$ for some divisor $H$ with $Supp(G) \cap Supp(D) = Supp(H) \cap Supp(D) = \emptyset$, and there exists a vector ${\bf a}=(a_1,\hdots,a_n)\in (\F_q^*)^n$ with $Res_{P_i}(\omega)=a_i^2$.
Then, ${\bf a}\cdot C_{\cal L}(D, G)$ is either a $1$-$dim$ or a $2$-$dim$ hull code with parameters $[n,4r, \ge n-4r]$, and its dual has parameters $[n,n-4r, \ge 4r]$.
\end{thm}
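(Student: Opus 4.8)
The plan is to derive Theorem~\ref{thm:02} as a direct specialization of Lemma~\ref{thm:01}, since the hypotheses have been arranged precisely so that the three conditions of that lemma become easy to check for the particular choice $G = 2rO + r\overline{D}$. First I would observe that $\deg(G) = 2r\deg(O) + r\deg(\overline{D}) = 2r + 2r = 4r$, using that $\overline{D}$ has degree $2$ (it is linearly equivalent to $2O$ because $2O - \overline{D}$ is principal, and principal divisors have degree $0$). The condition $1 \le r < n/4$ then guarantees $0 < \deg(G) = 4r < n$, which is exactly the range hypothesis of Lemma~\ref{thm:01}. Once that lemma applies, its conclusion gives that ${\bf a}\cdot C_{\cal L}(D,G)$ is a $1$-$dim$ or $2$-$dim$ hull code with parameters $[n, \deg(G), \ge n - \deg(G)] = [n, 4r, \ge n - 4r]$ and dual $[n, n - 4r, \ge 4r]$, matching the claimed parameters.

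The remaining work is to verify the three numbered hypotheses of Lemma~\ref{thm:01}. The existence of the differential form $\omega$ with $(\omega) = G + H - D$ and the square-residue vector ${\bf a}$ are assumed outright in the statement of Theorem~\ref{thm:02}, so conditions (1')'s ambient setup and condition (3) of Lemma~\ref{thm:01} are handed to us; I only need to check condition (1), that $G \wedge H$ is principal, and condition (2), that $\deg(G \vee H - D) = 0$. For condition (2), note that $G \vee H + G \wedge H = G + H$ as divisors, so $\deg(G \vee H - D) = \deg(G) + \deg(H) - \deg(G \wedge H) - n$; since $(\omega) = G + H - D$ is principal on an elliptic curve (the canonical divisor is trivial), $\deg(G) + \deg(H) = \deg(D) = n$, hence $\deg(G \vee H - D) = -\deg(G \wedge H)$, and condition (2) is equivalent to $\deg(G \wedge H) = 0$, which follows once condition (1) is established since a principal divisor has degree zero. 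So everything reduces to showing $G \wedge H$ is principal.

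To handle condition (1), the key point is to compute $H$ explicitly from $(\omega) = G + H - D$, i.e. $H = (\omega) - G + D = (\omega) - 2rO - r\overline{D} + D$, and then argue that $G \wedge H$, the common part of $G = 2rO + r\overline{D}$ and $H$, is supported only at $O$ and at the support of $\overline{D}$, with multiplicities there governed by the structure of $r\overline{D}$ and the hypothesis $\overline{D} \notin E[r]$. The role of $\overline{D} \notin E[r]$ is that $r\overline{D}$ is \emph{not} linearly equivalent to $rO + (\text{something in }\mathcal{L})$ in a way that would create unwanted coincidences; more concretely, I expect one shows that the divisor $G \wedge H$ has degree zero and is linearly equivalent to zero by tracking that the $O$-coefficient and the $\overline{D}$-coefficients of $H$ are exactly what is needed so that $G \wedge H = r\overline{D} - rO$ (or a similarly balanced expression) which is principal precisely because $\overline{D} \sim 2O$ would force $r\overline{D} \sim 2rO$—wait, that would make it $2rO - rO = rO$, not principal; so the actual computation must be more delicate, and I anticipate that $\overline{D} \notin E[r]$ is exactly the condition ensuring $G$ and $H$ share no common support at all beyond what forces $G \wedge H = 0$, the zero divisor, which is trivially principal. \emph{This verification of condition (1)—pinning down $H$ and showing the meet $G \wedge H$ is principal (most likely zero), with $\overline{D} \notin E[r]$ being the hypothesis that rules out accidental overlaps—is the main obstacle and the only non-formal step;} the rest is bookkeeping with degrees and an appeal to Lemma~\ref{thm:01}.
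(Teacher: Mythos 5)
Your overall strategy---specializing Lemma~\ref{thm:01} after computing $\deg(G)=4r$ from $\deg(\overline{D})=2$---is exactly the paper's, and your reduction of condition (2) to condition (1) via $\deg(G\vee H)+\deg(G\wedge H)=\deg(G)+\deg(H)=\deg(D)=n$ is correct (the paper makes the equivalent observation $G\vee H-D=(\omega)-G\wedge H$ inside Lemma~\ref{lem:sufficient}). But the step you flag as the ``main obstacle'' is a genuine gap, and your guess about how it resolves is wrong. You speculate that $\overline{D}\notin E[r]$ forces $G$ and $H$ to have disjoint support so that $G\wedge H=0$. That cannot happen: $H=D-G+(\omega)$ must have $\deg H=n-4r$ concentrated away from $supp(D)$, and in the intended setup (as in Corollary~\ref{cor:explicit1}, where $\omega=dx/h$ gives $(\omega)=nO-D$) one has $H=(n-2r)O-r\overline{D}$, so $G=2rO+r\overline{D}$ and $H$ overlap both at $O$ and on $supp(\overline{D})$. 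Taking minima coefficient-wise, and using $r<n/4$ so that $2r<n-2r$, one gets
$$G\wedge H=\min(2r,\,n-2r)\,O+\min(r,\,-r)\,\overline{D}=2rO-r\overline{D}=r\,(2O-\overline{D}),$$
which is principal precisely because $2O-\overline{D}$ is assumed principal and principal divisors form a group. This is the hypothesis you under-used: you invoked ``$2O-\overline{D}$ principal'' only to get $\deg\overline{D}=2$, and tried to extract principality of the meet from $\overline{D}\notin E[r]$ instead, which is not its role. (Then $G\vee H-D=(n-2r)O+r\overline{D}-D$ has degree $0$, confirming condition (2).)

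A secondary point: the theorem's statement only posits \emph{some} $H$ with $(\omega)=G+H-D$, so strictly speaking the identification $H=(n-2r)O-r\overline{D}$ amounts to assuming $(\omega)=nO-D$; the paper's proof simply ``takes'' this $H$ without comment, and your instinct to compute $H$ from $(\omega)$ was the right one---you just needed to carry it through with the concrete $\omega$ of the construction rather than leaving $G\wedge H$ undetermined.
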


\begin{proof}Take $H=(n-2r)O-r\overline{D}$. Under the assumptions in the theorem, we have that $G\wedge H=2rO-r\overline{D}$ is principal and $G\vee H-D=(n-2r)O+r\overline{D}-D$ has degree zero. The result follows from Lemma \ref{thm:01}.
\end{proof}

We now derive explicit constructions of $1$-$dim$ hull codes from elliptic curves in even characteristic.

First, we will consider elliptic curves in Weierstrass form to construct one dimensional hull. Let $q = p^m$ and an elliptic curve be defined by the equation

\begin{equation}
{\cal E}_{a,b,c}:~y^2 + ay = x^3 + bx + c,
\label{eq:elliptic-curve}
\end{equation}
where $a, b, c \in\F_q .$ Let denote the number of rational points on ${\cal E}_{a,b,c}$ by ${N}_{a,b,c}$  Let $S$ be the set of $x$-components of the affine points of ${\cal E}_{a,b,c}$ over $\F_q $, that is, 
\begin{equation}
S_{a,b,c} := \{\alpha\in \F_q| \exists \beta \in \F_q\text{ such that } \beta^2+a\beta=\alpha^2+b\alpha+c\}.
\label{eq:set-elliptic}
\end{equation}

For $q=2^m,$ any $\alpha\in S_{1,b,c}$ gives exactly two points with $x$-component $\alpha$, and we denote these two points corresponding to $\alpha$ by $P^{(1)}_\alpha$ and $P^{(2)}_\alpha.$ Then, the set of all rational points of ${\cal E}_{1,b,c}$ over $\F_q$ is $\{P^{(1)}_\alpha|\alpha\in S_{1,b,c}\}\cup \{P^{(2)}_\alpha|\alpha\in S_{1,b,c}\}\cup \{O \}.$ 
The numbers ${N}_{1,b,c}$ of rational points of elliptic curves ${\cal E}_{1,b,c}$ over $\F_q$ are given in Table \ref{table:size-elliptic}.
\begin{table}
\caption{Numbers of rational points of elliptic curves}\label{table:size-elliptic}
\begin{center}
\begin{tabular}{c|c|c}
\text{Elliptic curve }${\cal E}_{1,b,c}$&$m$&${N}_{1,b,c}$\\
\hline
&$m\text{ odd }$&$q+1-2\sqrt{q}$\\
$y^2 + y = x^3$&$m\equiv 0 \pmod 4$&$q+1-2\sqrt{q}$\\
&$m\equiv 2 \pmod 4$&$q+1+2\sqrt{q}$\\
\hline
\multirow{2}{8em} 
{$y^2+y=x^3+x$}& $m\equiv 1,7 \pmod 8$ & $q+1+\sqrt{2q}$\\
&$m\equiv 3,5 \pmod 8$ & $q+1-\sqrt{2q}$\\
\hline
\multirow{2}{8em}{$y^2 + y = x^3 + x+1$}&$m\equiv 1,7 \pmod 8$&$q+1-\sqrt{2q}$\\
&$m\equiv 3,5 \pmod 8$&$q+1+\sqrt{2q}$\\
\hline
$y ^2 + y = x ^3 + bx ( T r_1^m ( b) = 1 )$&$m\text{ even }$&$q+1$\\
\hline
\multirow{2}{13em}{$y ^2 + y = x ^3 + c ~( T r_1^ m ( c ) = 1 )$}&$m\equiv 0 \pmod 4$&$q+1+2\sqrt{q}$\\
&$m\equiv 2 \pmod 4$&$q+1-2\sqrt{q}$\\
\end{tabular}
\end{center}
\end{table}

\begin{lem} Let $q$ be even and $s$ be a positive integer, $\{\alpha_1, \hdots , \alpha_s\}$ be a subset of $S_{1,b,c}$ with cardinality $s,$ and $D=\sum\limits_{i=1}^s(P_{\alpha_i}^{(1)}+P_{\alpha_i}^{(2)}).$ Let $h=\prod\limits_{i=1}^s(x+\alpha_i)$ and $\omega=\frac{dx}{h}$. Then, $(\omega)=2sO-D$ and 
$$
Res_{P_{\alpha_i}^{(1)}}(\omega)=Res_{P_{\alpha_i}^{(2)}}(\omega)=
\frac{1}{\prod\limits_{j=1,j\not= i}^s(\alpha_i+\alpha_j)},
$$
for any $i\in \{1,\hdots,s\}.$
\end{lem}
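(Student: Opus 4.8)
The plan is to compute the divisor of the differential form $\omega = dx/h$ on the elliptic curve ${\cal E}_{1,b,c}$ directly, place by place, and then extract the residues at the points $P_{\alpha_i}^{(j)}$ from a local expansion. First I would analyze $(dx)$: on an elliptic curve given in Weierstrass form $y^2+ay=x^3+bx+c$ with $a=1$ (so that $2y+a=1\neq 0$ in characteristic $2$), the function $x$ has a single pole of order $2$ at $O$ and is unramified everywhere on the affine part, while at $O$ a local uniformizer can be taken as $t=x/y$; a standard computation (cf.\ Stichtenoth) gives $(dx)=2(g-1)O=0\cdot O$ for $g=1$, i.e.\ $dx$ has neither zeros nor poles — more precisely $(dx)$ is the canonical divisor $W$, which on an elliptic curve has degree $0$, and in fact $dx$ is a global differential with empty divisor. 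Then $(\omega)=(dx)-(h)$. Since $h=\prod_{i=1}^s(x+\alpha_i)$ and each $\alpha_i\in S_{1,b,c}$, the function $x+\alpha_i$ vanishes exactly at the two points $P_{\alpha_i}^{(1)},P_{\alpha_i}^{(2)}$ lying above $x=\alpha_i$, each to order $1$ (the fiber has two distinct points since the curve is smooth and, for $a=1$ in even characteristic, there are no ramified affine points), and $x+\alpha_i$ has a pole of order $2$ at $O$. Hence $(x+\alpha_i)=P_{\alpha_i}^{(1)}+P_{\alpha_i}^{(2)}-2O$, so $(h)=\sum_{i=1}^s\big(P_{\alpha_i}^{(1)}+P_{\alpha_i}^{(2)}\big)-2sO = D-2sO$, and therefore $(\omega)=0-(D-2sO)=2sO-D$, as claimed.

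For the residues, I would work locally at $P=P_{\alpha_i}^{(j)}$. Since $P$ is an affine point above $x=\alpha_i$ and the fiber is unramified, $x-\alpha_i$ (equivalently $x+\alpha_i$ in characteristic $2$) is a local uniformizer $t$ at $P$. Write $h = (x+\alpha_i)\cdot u$ where $u=\prod_{k\neq i}(x+\alpha_k)$; then $u$ is a unit at $P$ with value $u(P)=\prod_{k\neq i}(\alpha_i+\alpha_k)$. Thus near $P$,
$$\omega = \frac{dx}{h} = \frac{d(x+\alpha_i)}{(x+\alpha_i)\,u} = \frac{1}{u}\cdot\frac{dt}{t},$$
and expanding $1/u$ in powers of $t$ gives $\omega = \big(u(P)^{-1}+O(t)\big)\,dt/t$. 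The residue is the coefficient of $dt/t$, namely
$$\mathrm{Res}_{P_{\alpha_i}^{(j)}}(\omega)=\frac{1}{u(P)}=\frac{1}{\prod_{k=1,\,k\neq i}^{s}(\alpha_i+\alpha_k)},$$
which is independent of $j\in\{1,2\}$ since it depends only on the $x$-coordinate $\alpha_i$. This establishes both displayed formulas.

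The one point requiring care — the main obstacle, such as it is — is the claim that $dx$ is a differential with empty divisor (equivalently that $(dx)$ is a canonical divisor supported nowhere), which hinges on the curve having $a=1\neq 0$ so that there are no affine ramification points and on the correct local analysis at $O$ to see that $dx$ neither vanishes nor has a pole there. I would justify this by the standard Weierstrass-form computation: with $t=x/y$ a uniformizer at $O$, one has $x=t^{-2}(1+\cdots)$ and $y=t^{-3}(1+\cdots)$, and differentiating the curve equation shows $dx = (2y+a)^{-1}(3x^2+b)\,dy$ type relations reduce, in characteristic $2$ with $a=1$, to $dx$ being regular and nonzero everywhere — indeed the invariant differential $dx/(2y+a)=dx$ on an elliptic curve has divisor $0$. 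Everything else is a routine local expansion, and the independence of the residue from the choice of point in the fiber is immediate once the formula is in hand.
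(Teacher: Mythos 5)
The paper states this lemma without proof, so there is nothing to compare against directly; your argument is the standard one and is correct. The divisor computation ($(dx)=0$ since $dx$ coincides with the invariant differential $dx/(2y+a_3)$ in characteristic $2$ with $a_3=1$, and $(x+\alpha_i)=P_{\alpha_i}^{(1)}+P_{\alpha_i}^{(2)}-2O$ because the fiber is unramified, the two Artin--Schreier solutions $\beta,\beta+1$ being distinct) and the local expansion $\omega=u^{-1}\,dt/t$ with $t=x+\alpha_i$ both check out, and the independence of the residue from the choice of point in the fiber is immediate since $u$ depends only on $x$. The one slightly shaky parenthetical is your sketch of the expansion at $O$: differentiating $x=t^{-2}(1+\cdots)$ term by term gives a leading coefficient $-2=0$ in characteristic $2$, so $v_O(dx)=0$ does not fall out of that computation directly; it is cleaner to either cite the invariant-differential fact outright (as you also do) or note that $dx$ has no affine zeros or poles (since $x-x_0$ is a uniformizer at every affine point, $\partial F/\partial y=1$ being nonzero) and $\deg(dx)=2g-2=0$ forces $v_O(dx)=0$.
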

The above residue formula will be used for calculation in the following two corollaries. 

\begin{cor}\label{cor:explicit1}Let $q$ be even, $n=2s<{N}_{1,b,c}$. Assume that there exists a set $S=\{\alpha_0,\alpha_1,\hdots,\alpha_s\}\subset S_{1,b,c}$. Set
$D=P_{\alpha_1} + P_{\alpha_2}+\cdots+P_{\alpha_s}$ and $G=2rO+rP_{\alpha_0}$ with $1\le r<  \frac{n}{4}$ and $P_{\alpha_0}\notin E[r]$, where $P_{\alpha_i}=P^{(1)}_{\alpha_i}+P^{(2)}_{\alpha_i},0\le i\le s.$ Then for some ${\bf a}=(a_{11},a_{12},\hdots a_{s1},a_{s2})\in (\F_q^*)^n$, ${\bf a}\cdot C_{\cal L}(D, G)$ is either a $1$-$dim$ or a $2$-$dim$ hull code with parameters $[n,4r, \ge n-4r]$, and its dual has parameters $[n,n-4r, \ge 4r]$.
\end{cor}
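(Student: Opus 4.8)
The plan is to verify that the hypotheses of Theorem \ref{thm:02} are met for the particular divisors $D = P_{\alpha_1}+\cdots+P_{\alpha_s}$ and $G = 2rO + r\overline{D}$ with $\overline{D} := P_{\alpha_0} = P^{(1)}_{\alpha_0}+P^{(2)}_{\alpha_0}$, and then simply invoke that theorem. The underlying curve here is ${\cal E}_{1,b,c}$ in Weierstrass form, which in even characteristic is an elliptic curve, so all the elliptic-curve machinery applies. First I would record that $\deg(D) = 2s = n$ and $\deg(G) = 2r + 2r = 4r$, and that $0 < 4r < n$ exactly when $1 \le r < n/4$, matching the length-dimension bookkeeping in the statement. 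Next I would check $\overline{D} \notin E[r]$: this is a direct hypothesis of the corollary ($P_{\alpha_0}\notin E[r]$), using the group law on $E$. The support disjointness $Supp(G)\cap Supp(D) = \emptyset$ follows because $\alpha_0,\alpha_1,\dots,\alpha_s$ are distinct elements of $S_{1,b,c}$, so the places $P^{(1)}_{\alpha_0}, P^{(2)}_{\alpha_0}, O$ appearing in $G$ do not occur among the $P^{(j)}_{\alpha_i}$ ($1\le i\le s$) in $D$.

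The remaining two hypotheses of Theorem \ref{thm:02} are the existence of a differential $\omega$ with $(\omega) = G + H - D$ for a suitable $H$ with $Supp(H)\cap Supp(D) = \emptyset$, and the existence of ${\bf a}\in(\F_q^*)^n$ with $\mathrm{Res}_{P_i}(\omega) = a_i^2$. Here I would use the preceding Lemma with $\{\alpha_1,\dots,\alpha_s\}\subset S_{1,b,c}$ and $h = \prod_{i=1}^s(x+\alpha_i)$, $\omega = dx/h$: that lemma gives $(\omega) = 2sO - D = nO - D$ and the explicit residue formula $\mathrm{Res}_{P^{(1)}_{\alpha_i}}(\omega) = \mathrm{Res}_{P^{(2)}_{\alpha_i}}(\omega) = \big(\prod_{j\neq i}(\alpha_i+\alpha_j)\big)^{-1}$. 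To see this fits the shape $(\omega) = G + H - D$, set $H := nO - D - G + D = nO - G$; concretely, with $G = 2rO + r\overline{D}$ we get $H = (n-2r)O - r\overline{D}$, which is precisely the $H$ used inside the proof of Theorem \ref{thm:02}, and $\deg H = (n-2r) - 2r = n-4r$. Its support is contained in $\{O, P^{(1)}_{\alpha_0}, P^{(2)}_{\alpha_0}\}$, which is disjoint from $Supp(D)$, as needed. Then $(\omega) = nO - D = (2rO + r\overline{D}) + ((n-2r)O - r\overline{D}) - D = G + H - D$.

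For the residue-square condition I would argue that in characteristic $2$ every element of $\F_q^*$ is a square (the Frobenius $x\mapsto x^2$ is a bijection), so each residue $r_i := \big(\prod_{j\neq i}(\alpha_i+\alpha_j)\big)^{-1} \in \F_q^*$ has a unique square root $a_i := r_i^{1/2}$, and we may take ${\bf a} = (a_{11},a_{12},\dots,a_{s1},a_{s2})$ with $a_{i1} = a_{i2} = r_i^{1/2}$, giving $\mathrm{Res}_{P^{(j)}_{\alpha_i}}(\omega) = a_{ij}^2$. With all hypotheses of Theorem \ref{thm:02} verified, that theorem yields directly that ${\bf a}\cdot C_{\cal L}(D,G)$ is either a $1$-$\dim$ or a $2$-$\dim$ hull code with parameters $[n, 4r, \ge n-4r]$ and dual parameters $[n, n-4r, \ge 4r]$, completing the proof. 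I do not expect any genuine obstacle here: the only mildly delicate points are confirming that the $H$ produced by the Lemma matches the $H$ required by Theorem \ref{thm:02} (a one-line divisor identity) and noting that squareness of residues is automatic in even characteristic; everything else is bookkeeping on degrees and supports.
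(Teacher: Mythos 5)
Your proposal is correct and follows essentially the same route as the paper: the paper's proof likewise takes $\omega = dx/h(x)$ with $h=\prod_{i=1}^s(x-\alpha_i)$, sets $H = D - G + (\omega)$, observes that residues are automatically squares in even characteristic, and invokes Theorem \ref{thm:02}; you have merely spelled out the degree and support bookkeeping that the paper leaves implicit. The only hypothesis of Theorem \ref{thm:02} you (and the paper) do not explicitly verify is that $2O-\overline{D}$ is principal, which holds here because $(x-\alpha_0)=P^{(1)}_{\alpha_0}+P^{(2)}_{\alpha_0}-2O$.
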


\begin{proof} Take $\omega=\frac{dx}{h(x)}$, where $h(x)=\prod\limits_{i=1}^s(x-\alpha_i)$, and $H=D-G+(\omega)$. Obviously, the residue of the differential form 
$\omega$ at each point $P_{\alpha_i}$ is a nonzero square element in $\F_q$ for $q$ even. Take $a_{i1}^2=Res_{P_{\alpha_i}^{(1)}}(\omega), a_{i2}^2=Res_{P_{\alpha_i}^{(2)}}(\omega)$. Thus, the result follows from Theorem \ref{thm:02}.
\end{proof}

The following lemma \cite{JinXin} guarantees existence of some square elements in $\F_q$ for $q$ odd.
\begin{lem}\label{lem:JinXin}
For any given positive integer $m$, if $q \ge 4^m \times m^2$, then there exists a subset $S = \{\alpha_1,\hdots,\alpha_m\}$ of $\F_q$ such that $\alpha_j -
\alpha_i$ are nonzero square elements for all $1 \le i < j \le m.$
\end{lem}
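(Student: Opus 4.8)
The plan is to prove this via Weil's bound on multiplicative character sums together with a greedy construction of $S$. We may assume $q$ is odd, since for even $q$ every nonzero element is a square and it suffices to take any $m$ distinct elements, which exist because $q\ge 4^m m^2\ge m$. So let $\eta$ denote the quadratic multiplicative character of $\F_q$, extended by $\eta(0)=0$, so that $x\mapsto\frac{1+\eta(x)}{2}$ is the indicator function of the set of nonzero squares of $\F_q$.

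The core of the argument is an extension step: if $\alpha_1,\ldots,\alpha_k\in\F_q$ with $1\le k\le m-1$ are such that $\alpha_j-\alpha_i$ is a nonzero square for all $i<j$ (so in particular the $\alpha_i$ are distinct), then there exists $\alpha_{k+1}\in\F_q$ with $\alpha_{k+1}-\alpha_i$ a nonzero square for every $i\le k$. To prove it, let $N_k$ be the number of such $\alpha_{k+1}$. The product $\prod_{i=1}^{k}\frac{1+\eta(x-\alpha_i)}{2}$ is the indicator of the set we want, except at the $k$ points $x=\alpha_1,\ldots,\alpha_k$, where it contributes at most $\frac12$; hence
\[
N_k\;\ge\;\sum_{x\in\F_q}\prod_{i=1}^{k}\frac{1+\eta(x-\alpha_i)}{2}-k
\;=\;\frac{1}{2^{k}}\sum_{T\subseteq\{1,\ldots,k\}}\;\sum_{x\in\F_q}\eta\!\Big(\prod_{i\in T}(x-\alpha_i)\Big)-k ,
\]
in which the term $T=\emptyset$ contributes $q/2^{k}$.

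For a nonempty $T$ the polynomial $f_T(x)=\prod_{i\in T}(x-\alpha_i)$ is squarefree of degree $|T|\ge 1$, hence not a perfect square in $\F_q[x]$, so Weil's bound gives $\big|\sum_{x\in\F_q}\eta(f_T(x))\big|\le(|T|-1)\sqrt q\le(k-1)\sqrt q$. Summing over the $2^{k}-1$ nonempty subsets yields
\[
N_k\;\ge\;\frac{1}{2^{k}}\Big(q-(2^{k}-1)(k-1)\sqrt q\Big)-k\;\ge\;\frac{1}{2^{k}}\Big(q-2^{k}\big((k-1)\sqrt q+k+1\big)\Big).
\]
Since $q\ge 4^m m^2$ forces $\sqrt q\ge 2^m m\ge m$, we have $(k-1)\sqrt q+k+1\le(m-2)\sqrt q+m\le(m-1)\sqrt q$ and $2^{k}\le 2^{m-1}$, so $2^{k}\big((k-1)\sqrt q+k+1\big)\le 2^{m-1}(m-1)\sqrt q\le 2^{m}m\sqrt q\le q$, whence $N_k\ge 1$. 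Starting from an arbitrary $\alpha_1\in\F_q$ and applying the extension step $m-1$ times produces a set $S=\{\alpha_1,\ldots,\alpha_m\}$ all of whose pairwise differences $\alpha_j-\alpha_i$ ($i<j$) are nonzero squares.

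The main obstacle is the character-sum estimate for the $f_T$: it rests on Weil's theorem (equivalently the Hasse--Weil bound for the hyperelliptic curve $y^{2}=f_T(x)$), and one must verify the hypothesis that $f_T$ is not a perfect square — immediate here from squarefreeness — before invoking the bound $(|T|-1)\sqrt q$. The rest is bookkeeping of the $2^{k}-1$ error terms and of the correction at the $k$ excluded points; the hypothesis $q\ge 4^m m^2$ is deliberately generous, so once the estimate is available the closing inequality and the greedy induction are routine.
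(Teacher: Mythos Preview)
The paper does not give its own proof of this lemma; it is quoted verbatim from Jin--Xing \cite{JinXin} and used as a black box. So there is no in-paper argument to compare against.

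Your proof is correct and is in fact the standard argument (and essentially the one in the cited source): expand the indicator $\prod_i\frac{1+\eta(x-\alpha_i)}{2}$, separate the main term $q/2^k$, and control the $2^k-1$ nonempty character sums via Weil's bound for squarefree $f_T$. The greedy extension is the right framework, and the handling of the $k$ exceptional points $x=\alpha_j$ is fine (the overcount there is at most $k/2$, so subtracting $k$ is safe).

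One small point worth tightening: your final chain $2^{k}\big((k-1)\sqrt q+k+1\big)\le 2^{m-1}(m-1)\sqrt q\le 2^{m}m\sqrt q\le q$ as written only yields $N_k\ge 0$. You need strict inequality somewhere to conclude $N_k\ge 1$. This is immediate since $2^{m-1}(m-1)<2^{m}m$ is strict for all $m\ge 1$, so the chain is in fact $<q$ and hence $N_k>0$; since $N_k$ is an integer, $N_k\ge 1$. It would be cleaner to say so explicitly.
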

\begin{cor}\label{cor:q-odd}
Let $q$ be an odd prime power with $q \equiv 1 \pmod 4 $, then for every $n\equiv 0 \pmod 4$  satisfying  $q \ge  4^{s+4} (s + 4)^2,$ there exist either $1$-$dim$ or $2$-$dim$ hull codes with parameters $[2s,4r,\ge 2s-4r]$ and $[2s, 2(s-2r), \ge 4r]$ for $1\le r<\frac{s}{2}$.
\end{cor}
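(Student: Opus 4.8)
The plan is to derive Corollary \ref{cor:q-odd} by specializing Corollary \ref{cor:explicit1} (equivalently Theorem \ref{thm:02}) to a concrete elliptic curve over $\F_q$ with $q \equiv 1 \pmod 4$, after first securing enough rational points whose $x$-coordinates form a set with pairwise square differences via Lemma \ref{lem:JinXin}. First I would fix a suitable elliptic curve ${\cal E}$ in short Weierstrass form $y^2 = x^3 + bx + c$ over $\F_q$; since $q \equiv 1 \pmod 4$, $-1$ is a square, so the residue computation will produce square elements exactly when the relevant $x$-coordinate differences are squares. The set $S_{a,b,c}$ of $x$-components of affine points is essentially half of $\F_q$ (up to $O(\sqrt q)$ by Hasse), so for $q$ large one can first apply Lemma \ref{lem:JinXin} with $m = s+4$ to get $\{\alpha_1,\dots,\alpha_{s+4}\} \subset \F_q$ with all differences $\alpha_j - \alpha_i$ nonzero squares, and then argue that a translate/choice of the curve coefficients can be arranged so that all $s+4$ of these $\alpha_i$ lie in $S_{b,c}$; the bound $q \ge 4^{s+4}(s+4)^2$ is exactly the Lemma \ref{lem:JinXin} hypothesis with $m = s+4$.

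Next I would set up the divisors: pick $\alpha_0 \in S$ for the divisor $G = 2rO + rP_{\alpha_0}$, and use the remaining $s$ elements $\alpha_1,\dots,\alpha_s$ to build $D = P_{\alpha_1} + \cdots + P_{\alpha_s}$ with $P_{\alpha_i} = P^{(1)}_{\alpha_i} + P^{(2)}_{\alpha_i}$, so $\deg D = n = 2s$. Here the two points above each $\alpha_i$ are $(\alpha_i, \pm\beta_i)$. With $\omega = dx/h(x)$, $h(x) = \prod_{i=1}^s (x - \alpha_i)$, one gets $(\omega) = 2sO - D = G + H - D$ with $H = D - G + (\omega)$, and $\mathrm{Res}_{P_{\alpha_i}}(\omega) = 1/\prod_{j\ne i}(\alpha_i - \alpha_j)$; because the $\alpha_i$ were chosen with all pairwise differences square and $q \equiv 1 \pmod 4$ (so inverses and sign changes preserve squareness), each residue is a nonzero square, giving the vector ${\bf a} \in (\F_q^*)^n$. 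The condition $P_{\alpha_0} \notin E[r]$ for $1 \le r < s/2$ can be met by choosing $\alpha_0$ (or the curve) so that the corresponding point has large order — for instance avoiding the finitely many $r$-torsion $x$-coordinates, which is possible once $q$ is large enough. Then Corollary \ref{cor:explicit1} applies directly and yields a code ${\bf a}\cdot C_{\cal L}(D,G)$ that is either $1$-$dim$ or $2$-$dim$ hull with parameters $[2s, 4r, \ge 2s-4r]$ and dual $[2s, 2s-4r, \ge 4r] = [2s, 2(s-2r), \ge 4r]$, which is the claimed conclusion (noting $n - 4r = 2s - 4r$ and $n - 4r = 2(s - 2r)$).

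The condition $n \equiv 0 \pmod 4$, i.e. $2s \equiv 0 \pmod 4$, forces $s$ even; I would use this to ensure $\frac{s}{2}$ is an integer so the range $1 \le r < \frac{s}{2}$ for which $4r < n$ and $4r \le n - 4r$ fail to overlap is stated cleanly, and more importantly it is what lets the bounds $[2s, 4r, \cdot]$ genuinely fall in the regime $0 < \deg G = 4r < n$ required by Theorem \ref{thm:distance}. (One should double check whether $s$ even is really needed or only $4r < n$; if only the latter, the hypothesis $n \equiv 0 \pmod 4$ is there to make $2s - 4r$ a multiple of $4$ matching a later tabulation, but the proof only uses $4r < n$.)

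The main obstacle I expect is the simultaneous satisfaction of three arithmetic constraints on the chosen $x$-coordinates: (i) the $s+4$ values must all lie in $S_{b,c}$ for one common curve, (ii) their pairwise differences must all be nonzero squares, and (iii) the point $P_{\alpha_0}$ used in $G$ must avoid $r$-torsion for all relevant $r$. Constraint (ii) is handed to us by Lemma \ref{lem:JinXin}, but reconciling it with (i) requires an argument that the square-difference set produced by that lemma can be placed inside the $x$-spectrum of an elliptic curve — the cleanest route is probably to not fix the curve first but to choose the $\alpha_i$ first and then solve for $(b,c)$, or to use the extra slack of $4$ points ($m = s+4$ rather than $s$) to discard those $\alpha_i$ that are not in $S_{b,c}$ after fixing a generic curve, using that $|S_{b,c}| \ge (q - 1 - 2\sqrt q)/2$ is a large fraction of $\F_q$. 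Making that counting argument precise, while keeping the bound at exactly $q \ge 4^{s+4}(s+4)^2$, is the delicate point; everything downstream is a mechanical application of Corollary \ref{cor:explicit1}.
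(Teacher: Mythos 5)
Your outline follows the paper's route (Lemma \ref{lem:JinXin} with $m=s+4$, an elliptic curve in the form $y^2=f(x)$, then Theorem \ref{thm:02}), and you correctly isolate the crux: forcing all the JinXin elements into the $x$-spectrum of a single curve. But neither of the two escape routes you propose for that crux works, so the proof as sketched has a genuine gap exactly where you flag it. ``Solving for $(b,c)$'' cannot impose $s+4$ membership conditions with only two free coefficients, and the ``slack of $4$'' is hopeless: for a generic curve roughly half of any prescribed set of $x$-values fails to lift to rational points, so you would lose about $(s+4)/2$ elements, not $4$. The paper's resolution is different and is the one idea your sketch is missing: take \emph{three of the JinXin elements themselves}, say $a_1,a_2,a_3$, as the roots of the cubic and set ${\cal E}:y^2=(x-a_1)(x-a_2)(x-a_3)$. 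Then for each remaining $\alpha_i$ the value $(\alpha_i-a_1)(\alpha_i-a_2)(\alpha_i-a_3)$ is a product of three nonzero squares, hence a nonzero square, so $(\alpha_i,\pm\beta_i)$ are automatically rational points. That is what the ``$+4$'' in $m=s+4$ is for: three roots plus $\alpha_0$ for the divisor $G$, not slack for discarding bad elements.

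A secondary omission: you take $\omega=dx/h(x)$ with residue $1/h'(\alpha_i)$, but that residue formula is the even-characteristic one. In odd characteristic $(dx)=Q_1+Q_2+Q_3-3O$ on this curve, so the paper instead uses $\omega=y\,dx/\prod_{i}(x-\alpha_i)$, whose residues at $P_{\alpha_i}^{(1)},P_{\alpha_i}^{(2)}$ are $\pm\beta_i/\prod_{j\ne i}(\alpha_i-\alpha_j)$. The denominator is a square by the JinXin choice and $-1$ is a square since $q\equiv 1\pmod 4$, but the quadratic character of $\beta_i$ is not controlled; the paper handles this by splitting into the cases where at least $s$ of the $\beta_i$ are squares or at least $s$ are non-squares, and in the latter case rescaling $\omega$ by a fixed non-square. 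Your sketch contains no counterpart to this step. (Your remarks on the torsion condition and on the role of $n\equiv 0\pmod 4$ are at the same level of informality as the paper's own treatment, so I would not count those against you.)
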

\begin{proof}From Lemma \ref{lem:JinXin}, there exists $s+4$ distinct elements $a_1,a_2,a_3,\alpha_0,\alpha_1,\hdots, \alpha_s$ such that $\alpha_i-\alpha_j$, $\alpha_i-a_1$, $\alpha_i-a_2$, $\alpha_i-a_3$ are squares in $\F_q$ for $0\le i,j\le s$. Consider the elliptic curve defined by
\begin{equation}
{\cal E}: y^2=(x-a_1)(x-a_2)(x-a_3)
\end{equation}
Set $Q_1=(a_1,0),Q_2=(a_2,0),Q_3=(a_3,0)$. Note that if $(\alpha,\beta)$ is on ${\cal E}$, then so is $(\alpha,-\beta)$. For $0\le i\le s$, set $P_{\alpha_i}^{(1)}=(\alpha_i,\beta_i)$ and $P_{\alpha_i}^{(2)}=(\alpha_i,-\beta_i)$ to be points on ${\cal E}$. Put $A=\{1\le i\le s:\beta_i \text{ is a square in }\F_q\}$ and $B=\{1\le i\le s:\beta_i \text{ is not a square in }\F_q\}$. Then, $|A|+|B|=n$ and thus one of the two sets has at least $s$ elements.

Assume that $|A|\ge s$, say $A$ contains $\{1,2,\hdots,s\}$. Take $D=\sum\limits_{i=1}^{s}\left(P_{\alpha_i}^{(1)}+P_{\alpha_i}^{(2)}\right)$, ${\overline D}=P_{\alpha_0}^{(1)}+P_{\alpha_0}^{(2)}$, $G=2rO+r\overline D$ and $$\omega=\frac{ydx}{\prod\limits_{i=1}^s(x-\alpha_i)}.$$
Then, we get

$$Res_{P_{\alpha_i}^{(1)}}(\omega)=\frac{\beta_i}{\prod\limits_{j=1,j\not= i}^s(\alpha_i-\alpha_j)},Res_{P_{\alpha_i}^{(2)}}(\omega)=\frac{-\beta_i}{\prod\limits_{j=1,j\not= i}^s(\alpha_i-\alpha_j)}.$$
Since $q\equiv 1 \pmod 4$, $-1$ is a square in $\F_q$, and thus $Res_{P_{\alpha_i}^{(j)}}$ is a square in $\F_q$ for $1\le i\le s$ and $1\le j\le 2.$
Since $(dx)=(y),$ we get that $(\omega)=(2Q_1+2Q_2+2Q_3-6O)+(nO-D)$. Take $H=D-G+(\omega)$ and $a_{ij}^2=Res_{P_{\alpha_i}^{(j)}}(\omega)$. Thus, the result follows from Theorem \ref{thm:02}.

Now assume that $|B|\ge s$, and  $B$ contains $\{1,2,\hdots,s\}.$ Let $a$ be a non-square element in $\F_q^*.$

Take $$\omega=\frac{aydx}{\prod\limits_{i=1}^s(x-\alpha_i)}$$ and the same setting as above. Then, the result follows with the same reasoning as the case $|A|\ge s.$
\end{proof}
\section{Second characterization and construction of $1$-$dim$ hull codes }\label{section:construction2}
In this section, we give another characterization of an algebraic geometry code to have one dimensional hull.
\subsection{Second characterization}
The following lemma gives sufficient conditions for an AG code to have one dimensional hull.
\begin{lem}\label{lem:sufficient2}Let $E$ be a smooth algebraic curve over $\F_q$ with genus $g$ and $G, D = P_1 + P_2 + \cdots+ P_n$ be two divisors over $E$, where $0 < deg(G) < n$. Assume that there exists a differential form $\omega$ such that $(\omega) = G + H - D$ for some divisor $H$ with $Supp(G) \cap Supp(D) = Supp(H) \cap Supp(D) = \emptyset$. Assume further that
\begin{enumerate}
\item $\deg (G\vee H -D)< 0$;
\item $\deg(G\wedge H)=g-i(G\wedge H);$
\item There is a vector ${\bf a}=(a_1,\hdots,a_n)\in (\F_q^*)^n$ with $Res_{P_i}(\omega)=a_i^2$.
\end{enumerate}
Then, ${\bf a}\cdot C_{\cal L}(D, G)$ is a $1$-$dim$ hull code with dimension $k=\deg (G)-g+1\text{ and minimum distance } d\ge n-\deg (G).$
\end{lem}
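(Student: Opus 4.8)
The plan is to mirror the structure of the proof of Lemma \ref{lem:sufficient}, replacing the "degree zero and principal" bookkeeping on $G\wedge H$ and $G\vee H-D$ with the slightly more general speciality conditions (1)–(2), and then invoke Lemma \ref{lem:sufficient0} together with Lemma \ref{lem:dual2} to transport everything to the twisted code ${\bf a}\cdot C_{\cal L}(D,G)$. Concretely, set $C=C_{\cal L}(D,G)$ and $C'=C_{\cal L}(D,H)$. From $(\omega)=G+H-D$ one gets $G\vee H-D=(\omega)-G\wedge H$, exactly as in the earlier lemma, which is the identity that links the two auxiliary divisors.

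First I would bound $\dim\bigl(C\cap C'\bigr)$. Take $c\in C\cap C'$, write $c=(f(P_1),\dots,f(P_n))=(g(P_1),\dots,g(P_n))$ with $f\in{\cal L}(G)$, $g\in{\cal L}(H)$, and set $e=f-g$. If $e=0$ then $f=g\in{\cal L}(G)\cap{\cal L}(H)={\cal L}(G\wedge H)$; by Riemann–Roch, $\ell(G\wedge H)=\deg(G\wedge H)+1-g+i(G\wedge H)$, and hypothesis (2) forces this to equal $1$, so the "$e=0$ part" of $C\cap C'$ is at most one-dimensional. If $e\neq 0$, then $e\in{\cal L}(G\vee H)$ and $v_{P_i}(e)\ge 1$ for all $i$ (both $f(P_i)$ and $g(P_i)$ agree), so $e\in{\cal L}(G\vee H-D)$; but $\deg(G\vee H-D)<0$ by hypothesis (1), hence ${\cal L}(G\vee H-D)=\{0\}$, a contradiction. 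Therefore $e=0$ always, and $\dim(C\cap C')\le 1$. Next I would use Lemma \ref{lem:dual2}: since $(\omega)=G+H-D$, we have $H=D-G+(\omega)$, so $C_{\Omega}(D,G)={\bf a}\cdot C_{\cal L}(D,D-G+(\omega))={\bf a}\cdot C'$ for a suitable ${\bf a}$, once the two residue conditions of Lemma \ref{lem:dual2} are checked — and condition (3) of the present lemma, $Res_{P_i}(\omega)=a_i^2\in\F_q^*$, is exactly what supplies a differential with all nonzero residues; I would note that one may rescale so that these residues are all equal, or argue directly that the relevant code $C^\perp=C_{\Omega}(D,G)$ equals ${\bf e}\cdot C'$ with ${\bf e}=(a_1^2,\dots,a_n^2)$ as required by Lemma \ref{lem:sufficient0}. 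By Lemma \ref{lem:sufficient0}, $({\bf a}\cdot C)^\perp={\bf a}\cdot C'$ and $hull({\bf a}\cdot C)={\bf a}\cdot(C\cap C')$, so the hull has dimension $\le 1$.

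The remaining point — and the one genuine gain over Lemma \ref{lem:sufficient}, which only concluded "$1$- or $2$-dim" — is to show the hull is not zero-dimensional, i.e. $\dim(C\cap C')=1$ exactly. Here I would argue that ${\cal L}(G\wedge H)$ is one-dimensional and nonzero (it always contains the constants, or more precisely $\ell(G\wedge H)=1$ by the computation above), pick a nonzero $f\in{\cal L}(G\wedge H)\subseteq{\cal L}(G)\cap{\cal L}(H)$, and observe that its evaluation vector $ev_D(f)$ lies in $C\cap C'$; it is nonzero because $\deg(G\wedge H)<n$ (which follows from $\deg(G\wedge H)\le\deg G<n$) forces the evaluation map to be injective on ${\cal L}(G\wedge H)$, so $ev_D(f)\neq 0$. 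Hence $C\cap C'$ is exactly one-dimensional, and $hull({\bf a}\cdot C)$ is $1$-dimensional. Finally, the parameters $[n,\deg(G)-g+1,\ge n-\deg(G)]$ come directly from Theorem \ref{thm:distance}, using $0<\deg(G)<n$ together with $2g-2<\deg(G)$ (the latter should be extractable from the hypotheses, e.g. from condition (2) forcing $i(G\wedge H)$ finite and $\deg(G)\ge\deg(G\wedge H)$, or can simply be added as an implicit standing assumption as in Theorem \ref{thm:distance}).

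I expect the main obstacle to be the sharp "$=1$, not $0$" claim: one must be careful that the chain ${\cal L}(G)\cap{\cal L}(H)={\cal L}(G\wedge H)$ is genuinely nonzero and that evaluation at $D=P_1+\dots+P_n$ does not kill it, which needs $\deg(G\wedge H)<n$; verifying that this inequality (or a suitable substitute) is forced by hypotheses (1)–(2) is the only delicate bookkeeping. Everything else is a direct recombination of Lemmas \ref{lem:dim-principal}(replaced by Riemann–Roch), \ref{lem:dual2}, \ref{lem:sufficient0}, and Theorem \ref{thm:distance}.
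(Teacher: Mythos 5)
Your proposal is correct and follows essentially the same route as the paper's proof: the $e=f-g$ dichotomy, with hypothesis (1) killing the $e\neq 0$ case via $\deg(G\vee H-D)<0$ and hypothesis (2) plus Riemann--Roch giving $\ell(G\wedge H)=1$, followed by Lemma \ref{lem:dual2} and Lemma \ref{lem:sufficient0}. Your extra step verifying that the hull is exactly one-dimensional (a nonzero $f\in{\cal L}(G\wedge H)$ has nonzero evaluation vector because $\deg(G\wedge H)\le\deg G<n$) is a point the paper's proof passes over silently, and your remark that the dimension formula really needs $2g-2<\deg G$ for Theorem \ref{thm:distance} is likewise a fair observation; neither changes the substance of the argument.
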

\begin{proof} Let $c\in C_{\cal L}(D,G)\cap C_{\cal L}(D,H).$ Then, we have that $c=(f(P_1),\hdots,f(P_n))$ $=(g(P_1),\hdots,g(P_n))$ for some $f\in {\cal L}(G)$ and $g\in {\cal L}(H).$ Set $e=f-g.$ If $e=0$, then $f=g\in {\cal L}(G)\cap {\cal L}(H)$. Thus $f=g\in {\cal L}(G\wedge H)$. The fact that $\ell(G\wedge H)=1$ follows from $\deg(G\wedge H)=g-i(G\wedge H)$ and Theorem \ref{thm:Riemann-Roch}.  If $e\not=0,$ then $e\in {\cal L}(G\vee H),$ and since $v_{P_i}(e)\ge 1$, it implies that $e\in {\cal L}(G\vee H-D).$ If $\deg (G\vee H-D)<0$, then $e=0$, a contradiction. Now, take $C=C_{\cal L}(D,G)$ and $C'=C_{\cal L}(D,H)$, and then the fact that ${\bf a}\cdot C_{\cal L}(D,G)$ is $1$-$dim$ hull code follows from Lemma  \ref{lem:dual2} and Lemma \ref{lem:sufficient0}. Finally, the dimension and the minimum distance follow from Thoerem \ref{thm:distance}.
\end{proof}

\begin{lem}\label{lem:new02} Let $E$ be an elliptic curve over $\F_q$ and $G, D = P_1 + P_2 + \cdots+ P_n$ be divisors over $E$, where $0 < deg(G) < n$. Assume that there exists a differential form $\omega$ such that $(\omega) = G + H - D$ for some divisor $H$ with $Supp(G) \cap Supp(D) = Supp(H) \cap Supp(D) = \emptyset$. Assume further that
\begin{enumerate}

\item $\deg (G\wedge H)=1$;
\item There is a vector ${\bf a}=(a_1,\hdots,a_n)\in (\F_q^*)^n$ with $Res_{P_i}(\omega)=a_i^2$.
\end{enumerate}
Then, ${\bf a}\cdot C_{\cal L}(D, G)$ is a $1$-$dim$ hull code with parameters $[n,\deg (G), \ge n-\deg (G)]$, and its dual has parameters $[n,n-\deg (G), \ge \deg (G)]$.
\end{lem}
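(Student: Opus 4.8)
The plan is to obtain the statement as the elliptic-curve specialization of Lemma~\ref{lem:sufficient2}: on an elliptic curve one has $g=1$, so a canonical divisor $W$ has degree $2g-2=0$, and this is what makes the hypotheses of Lemma~\ref{lem:sufficient2} follow from the single new assumption $\deg(G\wedge H)=1$. First I would extract two degree identities from $(\omega)=G+H-D$. Taking degrees and using $\deg(\omega)=2g-2=0$ gives $\deg(H)=n-\deg(G)$, whence $0<\deg(G)<n$ also forces $0<\deg(H)<n$ (so Theorem~\ref{thm:distance} applies both to $G$ and to $H$). And from the elementary identity $G\wedge H+G\vee H=G+H$ together with $(\omega)=G+H-D$ one gets $G\vee H-D=(\omega)-G\wedge H$.

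Next I would verify the hypotheses of Lemma~\ref{lem:sufficient2}. Condition~(1): $\deg(G\vee H-D)=\deg(\omega)-\deg(G\wedge H)=0-1=-1<0$. Condition~(2), $\deg(G\wedge H)=g-i(G\wedge H)$, reads $1=1-i(G\wedge H)$ for $g=1$, i.e. $i(G\wedge H)=0$; since $i(G\wedge H)=\ell(W-G\wedge H)$ and $\deg(W-G\wedge H)=0-1=-1<0$, while the principal divisor of any nonzero rational function has degree $0$, the space $\mathcal{L}(W-G\wedge H)$ is trivial and Condition~(2) holds. Condition~(3) is exactly hypothesis~(2) of the present lemma. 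Therefore Lemma~\ref{lem:sufficient2} applies and yields that ${\bf a}\cdot C_{\cal L}(D,G)$ is a $1$-$dim$ hull code of dimension $k=\deg(G)-g+1=\deg(G)$ and minimum distance $d\ge n-\deg(G)$.

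For the dual, the proof of Lemma~\ref{lem:sufficient2} (via Lemma~\ref{lem:dual2} and Lemma~\ref{lem:sufficient0}) already identifies $C_{\cal L}(D,G)^\perp$ with ${\bf e}\cdot C_{\cal L}(D,H)$ for ${\bf e}=(a_1^2,\dots,a_n^2)$, so Lemma~\ref{lem:sufficient0} gives $\bigl({\bf a}\cdot C_{\cal L}(D,G)\bigr)^\perp={\bf a}\cdot C_{\cal L}(D,H)$; as scaling by ${\bf a}\in(\F_q^*)^n$ preserves length, dimension and minimum distance, the dual has the parameters of $C_{\cal L}(D,H)$, and Theorem~\ref{thm:distance} (applicable since $0<\deg(H)<n$) gives dimension $\deg(H)-g+1=n-\deg(G)$ and minimum distance $\ge n-\deg(H)=\deg(G)$, as asserted. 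I do not anticipate a genuine difficulty: the only two steps that need attention are the vanishing $i(G\wedge H)=0$ — precisely the place where the elliptic hypothesis $\deg(G\wedge H)=1>2g-2$ is used — and the bookkeeping that it is $C_{\cal L}(D,H)$, not $C_{\cal L}(D,G)$, whose parameters govern the dual.
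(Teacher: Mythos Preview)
Your proof is correct and follows the same route as the paper: both reduce the statement to Lemma~\ref{lem:sufficient2} by checking that on an elliptic curve $\deg(G\vee H-D)=\deg(\omega)-\deg(G\wedge H)=-1<0$ and that $i(G\wedge H)=0$. Your justification of $i(G\wedge H)=0$ via $\deg(W-G\wedge H)=-1<0$ is in fact cleaner than the paper's ``not principal, thus not special'', and you also spell out the dual parameters (applying Theorem~\ref{thm:distance} to $H$), which the paper leaves implicit in ``the rest follows from Lemma~\ref{lem:sufficient2}''.
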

\begin{proof}  First, note that $G\vee H-D=(\omega)-G\wedge H $. It is well-known that on an elliptic curve the Weil differential $(\omega)$ is principal. 
Since $\deg (G\wedge H)=1$, the divisor $G\wedge H$ is not principal, thus not special and hence $i (G\wedge H)=0$. The rest follows from Lemma \ref{lem:sufficient2}.
\end{proof}

\begin{thm}\label{thm:new02}Let $D = P_1 + P_2 + \cdots+ P_s$ with $P_i=P_i^{(1)}+P_i^{(2)}$ and $\overline{D}$ be such that $\deg \overline{D}=1$,  $\overline{D}\notin E[r+1]$ with $1\le r\le s-1$ and $O, \overline{D}\notin supp(D)$. Set $G=(r+1)O+r\overline{D}$. Assume that there exists a differential form $\omega$ such that $(\omega) = G + H - D$ for some divisor $H$ with $Supp(G) \cap Supp(D) = Supp(H) \cap Supp(D) = \emptyset$, and there exists a vector ${\bf a}=(a_1,\hdots,a_{2s})\in (\F_q^*)^{2s}$ with $Res_{P_i}(\omega)=a_i^2$.
Then, ${\bf a}\cdot C_{\cal L}(D, G)$ is a $1$-$dim$ hull code with parameters $[2s,2r+1, \ge 2s-2r-1]$, and its dual has parameters $[2s,2s-2r-1, \ge 2r+1]$.
\end{thm}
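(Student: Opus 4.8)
The plan is to reduce this statement to Lemma \ref{lem:new02}, in exact parallel with the way Theorem \ref{thm:02} was deduced from Lemma \ref{thm:01}. Since $E$ is elliptic the Weil differential $(\omega)$ is automatically principal of degree $0$, and the residue condition $Res_{P_i}(\omega)=a_i^2$ is already in the hypothesis, so the whole argument should come down to exhibiting the companion divisor $H$ and verifying the single numerical condition $\deg(G\wedge H)=1$.

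First I would write $n=2s$ and take $H=(n-r-1)O-r\overline{D}$. Then $G+H-D=(r+1)O+r\overline{D}+(n-r-1)O-r\overline{D}-D=nO-D$, a principal divisor of degree $0$; this is the divisor $H$ attached to the hypothesized $\omega$ (equivalently, $\omega$ is the differential with $(\omega)=nO-D$ and the prescribed residues), and since $O,\overline{D}\notin\mathrm{supp}(D)$ one has $\mathrm{Supp}(H)\cap\mathrm{Supp}(D)=\mathrm{Supp}(G)\cap\mathrm{Supp}(D)=\emptyset$. Moreover $O\neq\overline{D}$, because $O$ lies in every $E[r+1]$ while $\overline{D}\notin E[r+1]$, so $G=(r+1)O+r\overline{D}$ and $H=(n-r-1)O-r\overline{D}$ are supported on the two distinct points $O$ and $\overline{D}$.

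Next I would compute $G\wedge H$ place by place. At $\overline{D}$ the coefficient is $\min(r,-r)=-r$. At $O$ it is $\min(r+1,\,n-r-1)$, and this is exactly where the range hypothesis $1\le r\le s-1$ enters: $n-r-1=2s-r-1\ge r+1$ is equivalent to $r\le s-1$, so the minimum equals $r+1$. At every remaining place both $G$ and $H$ have coefficient $0$ (in particular along $\mathrm{supp}(D)$, by the disjointness just noted). Hence $G\wedge H=(r+1)O-r\overline{D}$, of degree $(r+1)-r=1$. In particular $G\wedge H$ is non-principal (a degree‑$1$ divisor on a genus‑$1$ curve cannot be principal), so $i(G\wedge H)=0$ and $\ell(G\wedge H)=\deg(G\wedge H)+1-g+i(G\wedge H)=1$ by Riemann--Roch, while $G\vee H-D=(\omega)-G\wedge H$ has degree $0-1=-1<0$; these are precisely the facts underpinning Lemma \ref{lem:new02} through Lemma \ref{lem:sufficient2}.

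Finally I would invoke Lemma \ref{lem:new02} for the pair $(G,H)$: its hypotheses are $\deg(G\wedge H)=1$, verified above, and the residue condition, which is assumed; hence ${\bf a}\cdot C_{\cal L}(D,G)$ is a $1$-$dim$ hull code with parameters $[n,\deg G,\ge n-\deg G]=[2s,2r+1,\ge 2s-2r-1]$ and dual parameters $[n,n-\deg G,\ge \deg G]=[2s,2s-2r-1,\ge 2r+1]$, which is the assertion. The only delicate point is the middle step: one must be sure that the differential handed down by the hypothesis really has simple poles exactly along $D$, the prescribed orders at $O$ and $\overline{D}$, and no other zeros or poles, so that the companion divisor is genuinely $(n-r-1)O-r\overline{D}$ and no stray place inflates $\deg(G\wedge H)$ past $1$; granting that, the inequality $r\le s-1$ does the rest by forcing the minimum at $O$ to be $r+1$ rather than $n-r-1$.
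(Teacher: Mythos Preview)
Your proposal is correct and follows exactly the paper's approach: the paper also takes $H=(2s-r-1)O-r\overline{D}$, observes that $G\wedge H=(r+1)O-r\overline{D}$ has degree one, and invokes Lemma \ref{lem:new02}. You have simply fleshed out the details the paper leaves implicit, in particular the role of the bound $r\le s-1$ in forcing $\min(r+1,2s-r-1)=r+1$ and the subtlety that the hypothesized $\omega$ must have $(\omega)=2sO-D$ so that $H$ genuinely has the stated form.
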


\begin{proof}Take $H=(2s-r-1)O-r\overline{D}$. Under the assumptions in the theorem, we have that $G\wedge H=(r+1)O-r\overline{D}$ has degree one. The result follows from Lemma \ref{lem:new02}.
\end{proof}

\subsection{Second construction from elliptic curves}

We now derive explicit constructions of $1$-$dim$ hull codes from elliptic curves in even characteristic.

\begin{cor}\label{cor:newexplicit1}Let $q$ be even, $n=2s<{N}_{1,b,c}$. Assume that there exists a set $S=\{\alpha_0,\alpha_1,\hdots,\alpha_s\}\subset S_{1,b,c}$. Set
$D=P_{\alpha_1} + P_{\alpha_2}+\cdots+P_{\alpha_s}$ and $G=(r+1)O+rP_{\alpha_0}^{(1)}$ with $1\le r\le s-1$ and $P_{\alpha_0}^{(1)}\notin E[r+1]$, where $P_{\alpha_i}=P^{(1)}_{\alpha_i}+P^{(2)}_{\alpha_i},0\le i\le s.$ Then for some ${\bf a}=(a_{11},a_{12},\hdots, a_{s1},a_{s2})\in (\F_{q}^*)^{n}$, ${\bf a}\cdot C_{\cal L}(D, G)$ is a 
$1$-$dim$ hull code with parameters $[n,2r+1, \ge n-2r-1]$, and its dual has parameters $[n,n-2r-1, \ge 2r+1]$.
\end{cor}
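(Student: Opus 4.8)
The plan is to obtain this statement as the explicit even-characteristic instance of Theorem~\ref{thm:new02}, taking $\overline{D}=P_{\alpha_0}^{(1)}$ and reusing the differential form that already served for Corollary~\ref{cor:explicit1}. Concretely, I would set
$$
h(x)=\prod_{i=1}^{s}(x-\alpha_i),\qquad \omega=\frac{dx}{h(x)},\qquad H=D-G+(\omega),
$$
recalling that $x-\alpha_i=x+\alpha_i$ since $q$ is even.

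The first step is the divisor-and-residue data for $\omega$ on ${\cal E}_{1,b,c}$, which is exactly the (unnumbered) lemma stated just before Corollary~\ref{cor:explicit1}: it yields $(\omega)=2sO-D=nO-D$, so in particular $v_{P_{\alpha_i}^{(j)}}(\omega)=-1$ for all $i$ and all $j\in\{1,2\}$, together with
$$
Res_{P_{\alpha_i}^{(1)}}(\omega)=Res_{P_{\alpha_i}^{(2)}}(\omega)=\Big(\prod_{j=1,\,j\neq i}^{s}(\alpha_i-\alpha_j)\Big)^{-1}\in\F_q^{*}.
$$
Since $q$ is even, every element of $\F_q^{*}$ is a square, so I may choose $a_{i1},a_{i2}\in\F_q^{*}$ with $a_{i1}^{2}=Res_{P_{\alpha_i}^{(1)}}(\omega)$ and $a_{i2}^{2}=Res_{P_{\alpha_i}^{(2)}}(\omega)$; this produces the vector ${\bf a}=(a_{11},a_{12},\hdots,a_{s1},a_{s2})\in(\F_q^{*})^{n}$ of the statement and verifies the residue hypothesis of Theorem~\ref{thm:new02}.

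The second step is to verify the remaining hypotheses of Theorem~\ref{thm:new02} for $\overline{D}=P_{\alpha_0}^{(1)}$ and $G=(r+1)O+r\overline{D}$. By construction $H=D-G+(\omega)=(2s-r-1)O-rP_{\alpha_0}^{(1)}$, so $(\omega)=G+H-D$; one has $\deg\overline{D}=1$ because $P_{\alpha_0}^{(1)}$ is a rational point; $\overline{D}\notin E[r+1]$ and $1\le r\le s-1$ are hypotheses; $O,\overline{D}\notin supp(D)$ because $\alpha_0,\alpha_1,\hdots,\alpha_s$ are pairwise distinct and $O$ is the point at infinity, while $n=2s<N_{1,b,c}$ guarantees that ${\cal E}_{1,b,c}$ has enough rational points for such a set $S$ to exist; and the supports of $G$ and of $H$ both lie in $\{O,P_{\alpha_0}^{(1)}\}$, which is disjoint from $supp(D)$. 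Theorem~\ref{thm:new02} then gives at once that ${\bf a}\cdot C_{\cal L}(D,G)$ is a $1$-$dim$ hull code with parameters $[2s,2r+1,\ge 2s-2r-1]$ and dual $[2s,2s-2r-1,\ge 2r+1]$, which is the claim after writing $n=2s$.

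I do not anticipate a genuine obstacle: the mathematical content lies entirely in Theorem~\ref{thm:new02} (and, through it, in Lemma~\ref{lem:new02} and Lemma~\ref{lem:sufficient2}), and this corollary merely feeds that theorem a Weierstrass curve ${\cal E}_{1,b,c}$, the divisor $D$ assembled from $s$ of its $x$-fibres, and $\omega=dx/h(x)$ — exactly parallel to the way Corollary~\ref{cor:explicit1} instantiates Theorem~\ref{thm:02}. The only mildly delicate point is to keep the parameters within the range $2g-2<\deg G<n$ underlying the dimension formula of Theorem~\ref{thm:distance}: with $g=1$ and $\deg G=(r+1)+r=2r+1$, the assumption $1\le r\le s-1$ gives $0<2r+1\le 2s-1<2s=n$, so $\dim C_{\cal L}(D,G)=\deg G-g+1=2r+1$, as claimed.
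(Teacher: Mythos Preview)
Your proposal is correct and follows exactly the paper's approach: set $\omega=dx/h(x)$ with $h(x)=\prod_{i=1}^{s}(x-\alpha_i)$, observe that in even characteristic every residue is automatically a nonzero square, and invoke Theorem~\ref{thm:new02} with $\overline{D}=P_{\alpha_0}^{(1)}$. Your write-up is in fact more detailed than the paper's, explicitly computing $H=(2s-r-1)O-rP_{\alpha_0}^{(1)}$, checking the support-disjointness conditions, and verifying the degree range $0<\deg G=2r+1<2s$ needed for the dimension formula.
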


\begin{proof} Take $\omega=\frac{dx}{h(x)}$, where $h(x)=\prod\limits_{i=1}^s(x-\alpha_i)$. Obviously, the residue of the differential form 
$\omega$ at each point $P_{\alpha_i}$ is a nonzero square element in $\F_q$ for $q$ even. Take $a_{i1}^2=Res_{P_{\alpha_i}^{(1)}}(\omega), a_{i2}^2=Res_{P_{\alpha_i}^{(2)}}(\omega)$. Thus, the result follows from Theorem \ref{thm:new02}.
\end{proof}

\begin{cor}\label{cor:newexplicit2} Let $q$ be even, $n=2s< {N}_{1,b,c}$. Assume that there exists a set $S=\{\alpha_0,\alpha_1,\hdots,\alpha_s\}\subset S_{1,b,c}$. Set 
$D=P_{\alpha_1} + P_{\alpha_2}+\cdots+P_{\alpha_s}$ and $G=(r+1)O+rP_{\alpha_0}^{(1)}$ with $1\le r\le s-1$, where $P_{\alpha_i}=P^{(1)}_{\alpha_i}+P^{(2)}_{\alpha_i},0\le i\le s.$ Assume further that $\gcd(r+1,{N}_{1,b,c})=1$. Then for some ${\bf a}=(a_{11},a_{12},\hdots, a_{s1},a_{s2})\in(\F_{q}^*)^{n}$, ${\bf a}\cdot C_{\cal L}(D, G)$ is a 
$1$-$dim$ hull code with parameters $[n,2r+1, \ge n-2r-1]$, and its dual has parameters $[n,n-2r-1, \ge 2r+1]$, where ${\bf a}=(a_{11},a_{12},\hdots a_{s1},a_{s2})$ with $a_{i1}^2=Res_{P_{\alpha_i}^{(1)}}(\omega), a_{i2}^2=Res_{P_{\alpha_i}^{(2)}}(\omega)$.
\end{cor}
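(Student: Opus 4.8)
The plan is to deduce the statement from Corollary~\ref{cor:newexplicit1} by trading its torsion hypothesis for the numerical hypothesis assumed here. I would first observe that Corollary~\ref{cor:newexplicit1} already yields exactly the asserted parameters for ${\bf a}\cdot C_{\cal L}(D,G)$ and its dual once one knows that $P^{(1)}_{\alpha_0}\notin E[r+1]$; hence the only genuinely new point to prove is the implication that $\gcd(r+1,N_{1,b,c})=1$ forces $P^{(1)}_{\alpha_0}\notin E[r+1]$.

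To prove this implication I would use that the set $E(\F_q)$ of $\F_q$-rational points of ${\cal E}_{1,b,c}$ is a finite abelian group under the chord-and-tangent law, with identity $O$ and order $|E(\F_q)|=N_{1,b,c}$, and that $P^{(1)}_{\alpha_0}\in E(\F_q)$ since its $x$-coordinate $\alpha_0$ lies in $S_{1,b,c}\subset\F_q$; being an affine point, $P^{(1)}_{\alpha_0}\neq O$. Assume for contradiction that $P^{(1)}_{\alpha_0}\in E[r+1]$. Then the order $t$ of $P^{(1)}_{\alpha_0}$ in $E(\F_q)$ divides $r+1$, and by Lagrange's theorem $t$ also divides $|E(\F_q)|=N_{1,b,c}$; therefore $t\mid\gcd(r+1,N_{1,b,c})=1$, so $t=1$ and $P^{(1)}_{\alpha_0}=O$, a contradiction. (Strictly, $E[r+1]$ is taken over $\overline{\F}_q$, but as $P^{(1)}_{\alpha_0}$ is $\F_q$-rational one may run this argument inside $E(\F_q)$.) Thus $P^{(1)}_{\alpha_0}\notin E[r+1]$.

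Having verified the torsion condition, I would then finish exactly as in the proof of Corollary~\ref{cor:newexplicit1}: put $h(x)=\prod_{i=1}^s(x-\alpha_i)$, $\omega=\frac{dx}{h(x)}$ and $H=D-G+(\omega)$, so that $(\omega)=G+H-D$ with $Supp(G)\cap Supp(D)=Supp(H)\cap Supp(D)=\emptyset$; since $q$ is even, $x-\alpha_i$ is a uniformizer at each point $P^{(j)}_{\alpha_i}$ (the map $x$ is unramified away from $O$), the residue $Res_{P^{(j)}_{\alpha_i}}(\omega)=1/\prod_{k\neq i}(\alpha_i-\alpha_k)$ is a nonzero element of $\F_q$, and in characteristic $2$ every field element is a square, so one may write it as $a_{ij}^2$ with $a_{ij}\in\F_q^*$; Theorem~\ref{thm:new02} then applies and shows that ${\bf a}\cdot C_{\cal L}(D,G)$ is a $1$-$dim$ hull code with the stated parameters $[n,2r+1,\ge n-2r-1]$ and dual $[n,n-2r-1,\ge 2r+1]$. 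I do not expect a real obstacle here: the corollary is a repackaging of Corollary~\ref{cor:newexplicit1} in which the somewhat opaque condition ``$P^{(1)}_{\alpha_0}$ is not an $(r+1)$-torsion point'' is replaced by the easily checkable arithmetic condition $\gcd(r+1,N_{1,b,c})=1$, and the only new ingredient is the short order-divisibility argument above; the one subtlety worth flagging is precisely the need to pass to $E(\F_q)$ so that Lagrange's theorem is available.
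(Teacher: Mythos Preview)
Your proposal is correct and follows essentially the same approach as the paper: both reduce to Corollary~\ref{cor:newexplicit1} by showing that $\gcd(r+1,N_{1,b,c})=1$ forces $P_{\alpha_0}^{(1)}\notin E[r+1]$. The paper phrases this via B\'ezout's identity (writing $a(r+1)+bN_{1,b,c}=1$ and computing $a(r+1)P_{\alpha_0}^{(1)}=P_{\alpha_0}^{(1)}$), whereas you phrase it via Lagrange's theorem on the order of $P_{\alpha_0}^{(1)}$ in $E(\F_q)$; these are two standard packagings of the same group-theoretic fact, and the paper then simply invokes Corollary~\ref{cor:newexplicit1} rather than re-sketching its proof.
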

\begin{proof}Under the assumption that $\gcd(r+1,{N}_{1,b,c})=1$ by Bezout's theorem, there exist rational integers $a,b$ such that $a(r+1)+b{N}_{1,b,c}=1.$ Then, $(a(r+1)P_{\alpha_0}^{(1)})=P_{\alpha_0}^{(1)}\ominus (b{N}_{1,b,c})P_{\alpha_0}^{(1)}=P_{\alpha_0}^{(1)}\not= O.$ Hence, $P_{\alpha_0}^{(1)}\notin E[r+1]$, and the result follows from Corollary \ref{cor:newexplicit1}.
\end{proof}

%
%
%
%
%
%

\begin{thm}\label{thm:new03} Let $D =P+ P_1 + P_2 + \cdots+ P_s$ with $\deg P =1$, $P_i=P_i^{(1)}+P_i^{(2)}$ and $\overline{D}$ be such that $\deg \overline{D}=1$,  $\overline{D}\notin E[r+1]$ with $1\le r\le s-1$ and $O, \overline{D}\notin supp(D)$. Set $G=(r+1)O+(r-1)\overline{D}$. Assume that there exists a differential form $\omega$ such that $(\omega) = G + H - D$ for some divisor $H$ with $Supp(G) \cap Supp(D) = Supp(H) \cap Supp(D) = \emptyset$, and there exists a vector ${\bf a}=(a,a_1,\hdots,a_{2s})\in (\F_q^*)^{n}$ with $Res_{P}(\omega)=a^2$ and $Res_{P_i}(\omega)=a_i^2$.
Then, ${\bf a}\cdot C_{\cal L}(D, G)$ is a $1$-$dim$ hull code with parameters $[2s,2r+1, \ge 2s-2r-1]$, and its dual has parameters $[2s,2s-2r-1, \ge 2r+1]$.
\end{thm}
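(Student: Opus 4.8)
The plan is to repeat, essentially verbatim, the argument used for Theorem~\ref{thm:new02}, the only new feature being the single rational place $P$ that sits inside $D$ alongside the $s$ pairs $P_i=P_i^{(1)}+P_i^{(2)}$: it will contribute one more square coordinate $a^2$ to the twisting vector ${\bf a}$, but will not otherwise alter the structure of the proof.

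First I would make the companion divisor $H$ explicit. Since $(\omega)$ is a Weil differential on the elliptic curve $E$, it is principal of degree $2g-2=0$, so the relation $(\omega)=G+H-D$ forces $\deg H=\deg D-\deg G$; accordingly I would take $H$ supported on $\{O,\overline{D}\}$, with coefficient $-r$ at $\overline{D}$ and the complementary coefficient $\deg D-\deg G+r$ at $O$. With this choice $Supp(H)\cap Supp(D)=\emptyset$ (because $O,\overline{D}\notin Supp(D)$) and $G+H-D$ is a divisor of degree $0$, so a differential $\omega$ with $(\omega)=G+H-D$ can exist, which is exactly what the hypothesis of the theorem records. Because $Supp(G)\cap Supp(D)=Supp(H)\cap Supp(D)=\emptyset$ and $D$ is a sum of \emph{distinct} degree-one places, one obtains $v_Q(\omega)=-1$ at every place $Q$ in $Supp(D)$, in particular at $P$, so the residues $Res_P(\omega)$ and $Res_{P_i^{(j)}}(\omega)$ are well-defined nonzero elements of $\F_q$, hence squares by hypothesis. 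Lemmas~\ref{lem:dual1} and~\ref{lem:dual2} then identify $C_{\cal L}(D,G)^{\perp}=C_{\Omega}(D,G)$ with ${\bf e}\cdot C_{\cal L}(D,H)$, where ${\bf e}$ is the coordinatewise square of ${\bf a}$, placing us in position to invoke Lemma~\ref{lem:sufficient0}.

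The crux is the meet computation $G\wedge H$. The coefficient of $O$ in $G$ is $r+1$, while that of $O$ in $H$ is $\deg D-\deg G+r\ge r+1$ (the inequality $\deg G<\deg D$ following from $r\le s-1$), so the meet at $O$ equals $r+1$; and since $r\ge 1$ the meet at $\overline{D}$ equals $-r$. The hypothesis $\overline{D}\notin E[r+1]$ is used here, if only to exclude the degenerate case $\overline{D}=O$ (note $O\in E[r+1]$), which would collapse $G$ and $H$ onto a single place and spoil the computation. Hence $G\wedge H=(r+1)O-r\overline{D}$ has degree $1$, and a degree-one divisor on an elliptic curve is non-special; together with $\deg(G\vee H-D)=\deg(\omega)-\deg(G\wedge H)=-1<0$, all the hypotheses of Lemma~\ref{lem:new02} hold, so ${\bf a}\cdot C_{\cal L}(D,G)$ is a genuine $1$-$dim$ hull code. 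Finally, $1\le r\le s-1$ keeps $\deg G$ and $\deg H$ strictly between $2g-2=0$ and $\deg D$, so Theorem~\ref{thm:distance} applied to $G$ and to $H$ gives the stated parameters of the code and of its dual.

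I expect the only genuine work to be the simultaneous tuning of $H$ — making its support avoid $Supp(D)$, making $G+H-D$ principal so that the prescribed $\omega$ can exist, and forcing $\deg(G\wedge H)=1$ — together with verifying that the range $1\le r\le s-1$ and the condition $\overline{D}\notin E[r+1]$ are precisely what make the meet and join of $G$ and $H$ come out as required. Once that is in place, the conclusion follows from Lemma~\ref{lem:new02} exactly as in Theorem~\ref{thm:new02}.
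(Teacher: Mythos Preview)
Your approach is correct and essentially identical to the paper's: the paper simply sets $H=(2s+1-r)O-r\overline{D}$ (which is exactly your $(\deg D-\deg G+r)\,O-r\overline{D}$ once the degrees are plugged in) and then invokes the reasoning of Theorem~\ref{thm:new02}, i.e., checks $\deg(G\wedge H)=1$ and appeals to Lemma~\ref{lem:new02}. You spell out considerably more detail than the paper's one-line proof---the support conditions, the meet and join computations, and the range check on $r$---but the strategy and the key divisor $H$ are the same.
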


\begin{proof} By setting $H=(2s+1-r)O-r\overline{D}$, the result follows with the same reasoning as in the proof of Theorem \ref{thm:new02}. 
\end{proof}

\begin{cor}\label{cor:newexplicit3}Let $q$ be even, $n=2s+1<{N}_{1,b,c}$. Assume that there exists a set $S=\{\alpha_0,\alpha_1,\hdots,\alpha_s\}\subset S_{1,b,c}$. Set
$D=P_{\alpha_0}^{(2)}+P_{\alpha_1} + P_{\alpha_2}+\cdots+P_{\alpha_s}$ and $G=(r+1)O+(r-1)P_{\alpha_0}^{(1)}$ with $1\le r\le s-1$ and $P_{\alpha_0}^{(1)}\notin E[r+1]$, where $P_{\alpha_i}=P^{(1)}_{\alpha_i}+P^{(2)}_{\alpha_i},0\le i\le s.$ Then for some ${\bf a}=(a_{02}, a_{11},a_{12},\hdots a_{s1},a_{s2})\in (\F_q^*)^n$, ${\bf a}\cdot C_{\cal L}(D, G)$ is a 
$1$-$dim$ hull code with parameters $[n,2r+1, \ge n-2r-1]$, and its dual has parameters $[n,n-2r-1, \ge 2r+1]$.
\end{cor}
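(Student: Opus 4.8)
The plan is to read Corollary~\ref{cor:newexplicit3} as the explicit even-characteristic incarnation of Theorem~\ref{thm:new03}, in complete analogy with the way Corollary~\ref{cor:newexplicit1} is obtained from Theorem~\ref{thm:new02}. So the only real work is to exhibit one suitable differential form and to check that the hypotheses of Theorem~\ref{thm:new03} hold; no new curve-theoretic argument is needed.

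Concretely, I would work on the elliptic curve $E={\cal E}_{1,b,c}$ over $\F_q$ with $q$ even, and make the identifications $\overline{D}=P_{\alpha_0}^{(1)}$, $P=P_{\alpha_0}^{(2)}$, $D=P+\sum_{i=1}^{s}P_{\alpha_i}$ and $G=(r+1)O+(r-1)P_{\alpha_0}^{(1)}$, so that the data matches that of Theorem~\ref{thm:new03}. Here $\deg\overline{D}=1$; the place $O$ is not among the affine points $P_{\alpha_i}^{(j)}$; and for each $\alpha_i\in S_{1,b,c}$ the two points $P_{\alpha_i}^{(1)},P_{\alpha_i}^{(2)}$ are genuinely distinct, as recalled before Table~\ref{table:size-elliptic}. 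Hence, once the $\alpha_i$ are chosen distinct, $Supp(G)$, $Supp(H)$ and $Supp(D)$ are pairwise disjoint. I would also record that $\deg G=2r$, so $0<\deg G<n=2s+1$ follows from $1\le r\le s-1$, and that the bound $n<N_{1,b,c}$ is precisely what lets one choose a set $S=\{\alpha_0,\dots,\alpha_s\}\subset S_{1,b,c}$ of the required size.

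The one substantive step is to produce $\omega$. I would take $\omega=dx/h(x)$ with $h(x)=\prod_{i=0}^{s}(x-\alpha_i)$, the product running over \emph{all} $s+1$ of the $\alpha_i$, so that $\omega$ acquires a simple pole not only at the $2s$ places $P_{\alpha_i}^{(1)},P_{\alpha_i}^{(2)}$ with $1\le i\le s$ but also at both $P_{\alpha_0}^{(1)}$ and $P_{\alpha_0}^{(2)}$. Using that $(dx)=0$ on ${\cal E}_{1,b,c}$ in characteristic $2$ (the differential $dx$ being holomorphic and nowhere vanishing for this model, as in the earlier even-characteristic constructions), one gets $(\omega)=2(s+1)O-P_{\alpha_0}^{(1)}-P_{\alpha_0}^{(2)}-\sum_{i=1}^{s}P_{\alpha_i}$. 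Setting $H=D-G+(\omega)$ then forces $H=(2s+1-r)O-rP_{\alpha_0}^{(1)}$, so that $(\omega)=G+H-D$ holds by construction and $v_{P_i}(\omega)=-1$ at every place $P_i\in Supp(D)$. At such a place $x-\alpha_i$ is a local parameter ($x$ being unramified away from $O$), so a one-line computation gives $Res_{P_{\alpha_i}^{(j)}}(\omega)=1/\prod_{k\neq i}(\alpha_i-\alpha_k)\in\F_q^{*}$; since $q$ is even, every nonzero element is a square, and I may therefore choose $a_{02},a_{11},a_{12},\dots,a_{s1},a_{s2}\in\F_q^{*}$ with $a_{02}^{2}=Res_{P_{\alpha_0}^{(2)}}(\omega)$, $a_{i1}^{2}=Res_{P_{\alpha_i}^{(1)}}(\omega)$ and $a_{i2}^{2}=Res_{P_{\alpha_i}^{(2)}}(\omega)$. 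Together with the hypothesis $P_{\alpha_0}^{(1)}\notin E[r+1]$, all assumptions of Theorem~\ref{thm:new03} are then in force, and applying it gives the asserted $1$-$dim$ hull code with parameters $[n,2r+1,\ge n-2r-1]$ whose dual has parameters $[n,n-2r-1,\ge 2r+1]$.

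I do not foresee a genuine obstacle. The two points that deserve a moment of care are (i) the identity $(dx)=0$ for this particular curve model, which is where even characteristic is really used in the pole/zero bookkeeping, and (ii) the observation that, again by evenness of $q$, every residue is automatically a square, so no counting lemma of the type of Lemma~\ref{lem:JinXin} is needed here. Apart from that, the residue computation is exactly the one already carried out for Corollary~\ref{cor:newexplicit1}, with the single extra factor $x-\alpha_0$ inserted into $h(x)$.
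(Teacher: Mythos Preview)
Your proposal is correct and follows essentially the same approach as the paper: take $\omega=dx/h(x)$ with $h(x)=\prod_{i=0}^{s}(x-\alpha_i)$, identify $P=P_{\alpha_0}^{(2)}$ and $\overline{D}=P_{\alpha_0}^{(1)}$, note that every residue is a square because $q$ is even, and apply Theorem~\ref{thm:new03}. Your write-up simply fills in more of the bookkeeping (the explicit divisor of $\omega$, the resulting $H$, and the degree check $0<\deg G<n$) than the paper does, but the argument is the same.
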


\begin{proof} Take $\omega=\frac{dx}{h(x)}$, where $h(x)=\prod\limits_{i=0}^s(x-\alpha_i)$ and set $P=P_0^{(2)}$ and $\overline {D}=P_0^{(1)}$ in Theorem \ref{thm:new03}. Obviously, the residue of the differential form 
$\omega$ at each point $P_{\alpha_i}$ is a nonzero square element in $\F_q$ for $q$ even. Take $a_{i1}^2=Res_{P_{\alpha_i}^{(1)}}(\omega), a_{i2}^2=Res_{P_{\alpha_i}^{(2)}}(\omega)$. Thus, the result follows from Theorem \ref{thm:new03}.
\end{proof}

\begin{cor}\label{cor:newexplicit4} Let $q$ be even, $n=2s+1< {N}_{1,b,c}$. Assume that there exists a set $S=\{\alpha_0,\alpha_1,\hdots,\alpha_s\}\subset S_{1,b,c}$. Set 
$D=P_{\alpha_0}^{(2)}+P_{\alpha_1} + P_{\alpha_2}+\cdots+P_{\alpha_s}$ and $G=(r+1)O+(r-1)P_{\alpha_0}^{(1)}$ with $1\le r\le s-1$, where $P_{\alpha_i}=P^{(1)}_{\alpha_i}+P^{(2)}_{\alpha_i},0\le i\le s.$ Assume further that $\gcd(r+1,{N}_{1,b,c})=1$. Then for some ${\bf a}=(a_{02}, a_{11},a_{12},\hdots a_{s1},a_{s2})\in (\F_q^*)^n$, ${\bf a}\cdot C_{\cal L}(D, G)$ is a 
$1$-$dim$ hull code with parameters $[n,2r, \ge n-2r]$, and its dual has parameters $[n,n-2r, \ge 2r]$, where ${\bf a}=(a_{02},a_{11},a_{12},\hdots a_{s1},a_{s2})$ with $a_{i1}^2=Res_{P_{\alpha_i}^{(1)}}(\omega), a_{i2}^2=Res_{P_{\alpha_i}^{(2)}}(\omega)$.
\end{cor}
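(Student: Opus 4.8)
The plan is to deduce Corollary~\ref{cor:newexplicit4} from Corollary~\ref{cor:newexplicit3} by converting the arithmetic hypothesis $\gcd(r+1,{N}_{1,b,c})=1$ into the torsion condition $P_{\alpha_0}^{(1)}\notin E[r+1]$ required there. This is the same reduction that takes Corollary~\ref{cor:newexplicit2} back to Corollary~\ref{cor:newexplicit1}, so most of the work has already been done; the new content is purely the group‑theoretic translation.

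First I would observe that $P_{\alpha_0}^{(1)}$ is an affine rational point of ${\cal E}_{1,b,c}$, hence an element of the group $E(\F_q)$, whose order is ${N}_{1,b,c}$; consequently $({N}_{1,b,c})P_{\alpha_0}^{(1)}=O$, while $P_{\alpha_0}^{(1)}\neq O$ because its $x$-coordinate $\alpha_0$ lies in $\F_q$. Since $\gcd(r+1,{N}_{1,b,c})=1$, Bezout's theorem gives integers $a,b$ with $a(r+1)+b\,{N}_{1,b,c}=1$, and applying this scalar identity to the point $P_{\alpha_0}^{(1)}$ yields $a(r+1)P_{\alpha_0}^{(1)} = P_{\alpha_0}^{(1)} \ominus (b\,{N}_{1,b,c})P_{\alpha_0}^{(1)} = P_{\alpha_0}^{(1)} \neq O$. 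It follows that $(r+1)P_{\alpha_0}^{(1)}\neq O$, i.e. $P_{\alpha_0}^{(1)}\notin E[r+1]$, since otherwise multiplication by $a$ would force $a(r+1)P_{\alpha_0}^{(1)}=O$.

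With this in hand, all the hypotheses of Corollary~\ref{cor:newexplicit3} hold for the given $D$, for $G=(r+1)O+(r-1)P_{\alpha_0}^{(1)}$, and for $\omega=\frac{dx}{\prod_{i=0}^{s}(x-\alpha_i)}$: the $\alpha_i$ being pairwise distinct keeps $Supp(G)$, $Supp(H)$ and $Supp(D)$ disjoint (and $O$ lies on none of the affine divisors), and for $q$ even every nonzero residue $Res_{P_{\alpha_i}^{(j)}}(\omega)$ is automatically a square, so we may take $a_{ij}^2=Res_{P_{\alpha_i}^{(j)}}(\omega)$. Corollary~\ref{cor:newexplicit3} (via Theorem~\ref{thm:new03}) then shows that ${\bf a}\cdot C_{\cal L}(D,G)$ is a $1$-$dim$ hull code, and its parameters are read off from $\deg G=(r+1)+(r-1)=2r$ by Theorem~\ref{thm:distance}, giving $[n,2r,\geq n-2r]$ for the code and $[n,n-2r,\geq 2r]$ for its dual, with $n=2s+1$.

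I do not expect a genuine obstacle here: the only step needing a word of care is the passage from $a(r+1)P_{\alpha_0}^{(1)}\neq O$ to $(r+1)P_{\alpha_0}^{(1)}\neq O$, which is an immediate consequence of the $\Z$-module structure of $E(\F_q)$, together with the bookkeeping that the dimension equals $\deg G=2r$. Everything else is a direct invocation of the already-established Corollary~\ref{cor:newexplicit3}.
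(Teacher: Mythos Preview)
Your proposal is correct and follows essentially the same route as the paper: the paper's proof simply says ``the result follows from the same reasoning as that in Corollary~\ref{cor:newexplicit2},'' i.e.\ one uses the Bezout identity $a(r+1)+b\,{N}_{1,b,c}=1$ to conclude $a(r+1)P_{\alpha_0}^{(1)}=P_{\alpha_0}^{(1)}\neq O$, hence $P_{\alpha_0}^{(1)}\notin E[r+1]$, and then invokes Corollary~\ref{cor:newexplicit3}. Your additional remark that the dimension must be read off as $\deg G=2r$ (rather than the $2r+1$ appearing in Corollary~\ref{cor:newexplicit3}) is a helpful clarification of the bookkeeping.
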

\begin{proof}The result follows from the same reasoning as that in Corollary \ref{cor:newexplicit2}.
\end{proof}

For elliptic curves over $\F_q$, $q$ odd, we have the following result.
\begin{cor}\label{cor:oddnew1}Let $s|\frac{(q-1)}{2}$, $s$ a square element in $\F_q$ and $n=2s$. Set
$D=P_{\alpha_1} + P_{\alpha_2}+\cdots+P_{\alpha_s}$, where $P_{\alpha_i}=P^{(1)}_{\alpha_i}+P^{(2)}_{\alpha_i},1\le i\le s$ and $G=(r+1)O+rP_{\alpha_0}$ with $1\le r\le s-1$ and $P_{\alpha_0}\notin E[r+1]$. Then for some ${\bf a}=(a_1,\hdots, a_n)\in \left(\F_q^*\right)^n$, ${\bf a}\cdot C_{\cal L}(D, G)$ is a 
$1$-$dim$ hull code with parameters $[n,2r+1, \ge n-2r-1]$, and its dual has parameters $[n,n-2r-1, \ge 2r+1]$.
\end{cor}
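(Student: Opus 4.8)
The plan is to specialize Corollary~\ref{cor:newexplicit1} (equivalently Theorem~\ref{thm:new02}) to a concrete family of elliptic curves over $\F_q$ with $q$ odd, exactly parallel to how Corollary~\ref{cor:newexplicit1} was deduced in even characteristic, but now using a curve of the form $y^2=f(x)$ so that the residues of the chosen differential can be forced to be squares. The key observation driving the choice $s\mid \tfrac{q-1}{2}$ is that the set $U_s:=\{\alpha\in\F_q^*:\alpha^s=1\}$ is then a subgroup of the squares $(\F_q^*)^2$, and the elementary symmetric-type product $\prod_{j\neq i}(\alpha_i-\alpha_j)$ over an $s$-th root-of-unity set has a clean closed form: if $h(x)=x^s-1=\prod_{i=1}^s(x-\alpha_i)$ then $\prod_{j\neq i}(\alpha_i-\alpha_j)=h'(\alpha_i)=s\,\alpha_i^{s-1}$. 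Since $s$ is assumed to be a square and $\alpha_i^{s-1}=\alpha_i^{-1}\alpha_i^{s}=\alpha_i^{-1}$ is a square (as $\alpha_i\in U_s\subseteq(\F_q^*)^2$), the quantity $h'(\alpha_i)$ is a nonzero square in $\F_q$; this is what lets us satisfy condition~(2) of Theorem~\ref{thm:new02}.

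Concretely I would proceed as follows. First choose the $\alpha_i$ to be the $s$-th roots of unity together with one extra point $\alpha_0$; pick an elliptic curve $\mathcal E$ over $\F_q$ of the shape $y^2=g(x)$ passing through points with these $x$-coordinates — for instance absorb the factor so that each $\alpha_i$ gives two rational points $P^{(1)}_{\alpha_i}=(\alpha_i,\beta_i)$, $P^{(2)}_{\alpha_i}=(\alpha_i,-\beta_i)$ with $n=2s<N_{\mathcal E}$. Set $D=\sum_{i=1}^s(P^{(1)}_{\alpha_i}+P^{(2)}_{\alpha_i})$, take $\overline D=P_{\alpha_0}$ of degree $2$ coming from $\alpha_0$ (or a degree-one point as in Theorem~\ref{thm:new02}; adjust bookkeeping accordingly), and $G=(r+1)O+r\overline D$. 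Second, take the differential $\omega=\dfrac{y\,dx}{h(x)}$ with $h(x)=\prod_{i=1}^s(x-\alpha_i)=x^s-1$; since $(dx)=(y)$ on $\mathcal E$ one computes $(\omega)=nO-D$ plus the divisor of $y/\text{(stuff)}$, giving $(\omega)=G+H-D$ for an explicit $H=(2s-r-1)O-r\overline D$, which is principal on an elliptic curve, so $G\wedge H=(r+1)O-r\overline D$ has degree one. Third, compute $\mathrm{Res}_{P^{(j)}_{\alpha_i}}(\omega)=\pm\beta_i/h'(\alpha_i)=\pm\beta_i/(s\alpha_i^{s-1})$; by the root-of-unity computation above and the hypotheses ($s$ a square, $s\mid\tfrac{q-1}{2}$), up to multiplying $\omega$ by a fixed constant and, if needed, handling the sign of $\beta_i$ exactly as in the $|A|\ge s$ / $|B|\ge s$ split of Corollary~\ref{cor:q-odd}, every residue becomes a nonzero square $a_i^2$. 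Fourth, invoke Theorem~\ref{thm:new02} (with the condition $P_{\alpha_0}\notin E[r+1]$, which is assumed): this yields that ${\bf a}\cdot C_{\mathcal L}(D,G)$ is a $1$-$dim$ hull code, and Theorem~\ref{thm:distance} gives the parameters $[n,2r+1,\ge n-2r-1]$ and $[n,n-2r-1,\ge 2r+1]$ for the dual.

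The main obstacle is the residue/square bookkeeping: one must verify that the $\beta_i$ (the $y$-coordinates of the chosen rational points), whose squareness is not controlled by the hypothesis, do not obstruct the construction. I expect this to be resolved exactly as in Corollary~\ref{cor:q-odd}: partition $\{1,\dots,s\}$ according to whether $\beta_i$ is a square, note one class has size $\ge s$ after possibly enlarging the root-of-unity set slightly or choosing $\alpha_0$ suitably, and in the non-square case replace $\omega$ by $a\omega$ for a fixed non-square $a$ so that all residues flip into the squares simultaneously; the factor $s$ and the factors $\alpha_i^{s-1}$ are globally squares so they never interfere. A secondary point to check is simply that $n=2s<N_{1,b,c}$ can be arranged (so that $0<\deg G<n$ and Theorem~\ref{thm:distance} applies), which follows because $s\mid\tfrac{q-1}{2}$ forces $s$ to be small relative to $q$ while elliptic curves over $\F_q$ have roughly $q$ points. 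Everything else is a direct citation of Theorem~\ref{thm:new02} and Theorem~\ref{thm:distance}.
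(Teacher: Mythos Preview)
Your key computation---that on the set $U=\{\alpha:\alpha^s=1\}$ one has $h'(\alpha_i)=s\alpha_i^{s-1}$, which is a nonzero square under the hypotheses $s\mid\frac{q-1}{2}$ and $s$ a square---is exactly the heart of the paper's argument. But your execution diverges from the paper in a way that leaves a real gap.

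The paper does not work on a generic Weierstrass curve with $\omega=\dfrac{y\,dx}{h(x)}$. It fixes the single curve $\mathcal E:(y-b)^2=x^3$, takes $P_{\alpha_0}=(0,b)$ (a \emph{degree-one} point, as Theorem~\ref{thm:new02} requires; your $\overline D$ ``of degree~$2$'' does not fit that theorem), sets $h(x)=x^s-1$, and uses
\[
\omega=\frac{dx}{x^s-1}
\]
\emph{without} the factor $y$. For each $\alpha_i\in U$ we have $\alpha_i\neq 0$, so $x-\alpha_i$ is a local parameter at both points above $\alpha_i$, and hence $\mathrm{Res}_{P_{\alpha_i}^{(j)}}(\omega)=1/h'(\alpha_i)=1/(s\alpha_i^{s-1})$ for $j=1,2$. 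No $\beta_i$ ever appears; the residues are squares on the nose, and Theorem~\ref{thm:new02} applies immediately.

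Your route forces the factor $\pm\beta_i$ into every residue, and your proposed repair---the $|A|/|B|$ split from Corollary~\ref{cor:q-odd}---does not transfer here. In Corollary~\ref{cor:q-odd} one starts with more $\alpha$'s than are needed and can afford to keep only those whose $\beta_i$ has a fixed quadratic character. In the present corollary the $\alpha_i$ must be \emph{all} the $s$-th roots of unity, or else $h(x)\neq x^s-1$ and the identity $h'(\alpha_i)=s\alpha_i^{s-1}$ is lost; you cannot discard any of them. Multiplying $\omega$ by a global nonsquare helps only if all $\beta_i$ share the same quadratic character, which nothing in the hypotheses guarantees. So the $\beta_i$-obstacle you flagged is not a bookkeeping detail: it is precisely the issue the paper sidesteps by its choice of curve and of differential.
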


\begin{proof}Consider the elliptic curve defined by $${\cal E:}~(y-b)^2=x^3.$$ 
Set $U=\{ \alpha \in \F_q|\alpha^s=1\}$, $h(x)=\prod\limits_{\alpha\in U}(x-\alpha)$, $D=(h)_0$, $P_{\alpha_0}=(0,b)$, $\omega=\frac{dx}{x^s-1}$, $G=(r+1)O+rP_{\alpha_0}$ and $H=D-G+(\omega)=(n-r-1)O-rP_{\alpha_0}$.
We have that $Res_{P_{\alpha_i}}(\omega)=1/(s\alpha_i^{s-1})$ is a square element in $\F_{q}$ for $1\le i\le s$. Thus the result follows by taking $a_i^2=Res_{P_{\alpha_i}}(\omega)$ and from Theorem \ref{thm:new02}.
\end{proof}

\begin{cor}\label{cor:oddnew2}Let $(s+1)|\frac{(q-1)}{2}$, $(s+1)$ a square element in $\F_q$ and $n=2s+1$. Set
$D=P_{\alpha_0}^{(2)}+P_{\alpha_1} + P_{\alpha_2}+\cdots+P_{\alpha_s}$, where $P_{\alpha_i}=P^{(1)}_{\alpha_i}+P^{(2)}_{\alpha_i},0\le i\le s$ and $G=(r+1)O+(r-1)P_{\alpha_0}^{(1)}$ with $1\le r\le s-1$ and $P_{\alpha_0}^{(1)}\notin E[r+1]$. Then for some ${\bf a}=(a_1,\hdots, a_n)\in \left(\F_q^*\right)^n$, ${\bf a}\cdot C_{\cal L}(D, G)$ is a $1$-$dim$ hull code with parameters $[n,2r+1, \ge n-2r-1],$ and its dual has parameters $[n,n-2r-1, \ge 2r+1]$.
\end{cor}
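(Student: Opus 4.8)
The plan is to adapt the argument of Corollary \ref{cor:oddnew1} to odd length $n=2s+1$, replacing the appeal to Theorem \ref{thm:new02} by one to Theorem \ref{thm:new03} (the version that allows the evaluation divisor $D$ to contain a single place of degree one). As in Corollary \ref{cor:oddnew1}, I would work on the curve ${\cal E}: (y-b)^2=x^3$ and take $U=\{\alpha\in\F_q:\alpha^{s+1}=1\}$. Since $(s+1)\mid\frac{q-1}{2}\mid q-1$, the set $U$ has exactly $s+1$ elements $\alpha_0,\dots,\alpha_s$, and there are enough $\F_q$-rational points on ${\cal E}$ to form all the divisors below; moreover $(q-1)/(s+1)$ is even, so each $\alpha_i$ is a square in $\F_q^*$, and hence the fibre of the coordinate function $x$ over $\alpha_i$ splits into two $\F_q$-rational points $P_{\alpha_i}^{(1)},P_{\alpha_i}^{(2)}$. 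Put $h(x)=\prod_{i=0}^{s}(x-\alpha_i)=x^{s+1}-1$ and $\omega=\frac{dx}{h(x)}$.

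Next I would split the pair over $\alpha_0$: take $P=P_{\alpha_0}^{(2)}$ as the degree-one place, $\overline{D}=P_{\alpha_0}^{(1)}$, $D=P+P_{\alpha_1}+\cdots+P_{\alpha_s}$ (so $\deg D=2s+1=n$), $G=(r+1)O+(r-1)\overline{D}$, and $H=D-G+(\omega)$. By the computation of the divisor of $dx$ on ${\cal E}$ performed exactly as in Corollary \ref{cor:oddnew1} — which makes $(\omega)$ supported only on $O$ and on the points over $U$ — one obtains $(\omega)=G+H-D$ with $H=(2s+1-r)O-r\overline{D}$; consequently $G\wedge H=(r+1)O-r\overline{D}$, a divisor of degree $1$ because $1\le r\le s-1$ forces $\min\{r+1,\,2s+1-r\}=r+1$ and $\min\{r-1,\,-r\}=-r$. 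The supports of $G$ and $H$ lie in $\{O,\overline{D}\}$, which is disjoint from $\mathrm{supp}(D)$ since $0\notin U$ and $O$ is the point at infinity; together with the hypothesis $\overline{D}=P_{\alpha_0}^{(1)}\notin E[r+1]$, this is precisely the setting required by Theorem \ref{thm:new03}.

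It remains to verify the residue condition. As in Corollary \ref{cor:oddnew1}, $x-\alpha_i$ is a local uniformizer at each of the two points above $\alpha_i$, so $\text{Res}_{P_{\alpha_i}^{(1)}}(\omega)=\text{Res}_{P_{\alpha_i}^{(2)}}(\omega)=1/h'(\alpha_i)=1/\big((s+1)\alpha_i^{\,s}\big)=\alpha_i/(s+1)$, the last step using $\alpha_i^{s+1}=1$. This is a nonzero square in $\F_q$, since $\alpha_i$ is a square (first step) and $s+1$ is a square by hypothesis; so I may choose $a_{ij}\in\F_q^*$ with $a_{ij}^2=\text{Res}_{P_{\alpha_i}^{(j)}}(\omega)$ and set $\mathbf{a}=(a_{02},a_{11},a_{12},\dots,a_{s1},a_{s2})\in(\F_q^*)^n$. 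All hypotheses of Theorem \ref{thm:new03} then hold, and it delivers the conclusion: $\mathbf{a}\cdot C_{\cal L}(D,G)$ is a $1$-$dim$ hull code with parameters $[n,2r+1,\ge n-2r-1]$ whose dual has parameters $[n,n-2r-1,\ge 2r+1]$.

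The only step that is not routine bookkeeping is the divisor identity $(\omega)=G+H-D$ with the stated $H$ — that is, pinning down the divisor of $dx$ on the cuspidal cubic ${\cal E}$ so that the two minima above yield $\deg(G\wedge H)=1$ exactly. This is the same computation that underlies Corollary \ref{cor:oddnew1} and its even-characteristic companion Corollary \ref{cor:newexplicit3}, so I expect no genuine obstacle; the two arithmetic hypotheses ``$(s+1)\mid\frac{q-1}{2}$'' and ``$s+1$ a square in $\F_q$'' enter only to make every residue of $\omega$ a square, which is exactly what is needed to invoke Lemma \ref{lem:sufficient0}.
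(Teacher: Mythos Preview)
Your proposal is correct and follows essentially the same approach as the paper: the same curve ${\cal E}:(y-b)^2=x^3$, the same differential $\omega=dx/h(x)$ with $h(x)=\prod_{i=0}^s(x-\alpha_i)$, and the same reduction to the reasoning of Corollary~\ref{cor:oddnew1}. In fact your citation of Theorem~\ref{thm:new03} is the right one for this odd-length setup with $G=(r+1)O+(r-1)\overline{D}$; the paper's reference to Theorem~\ref{thm:new02} appears to be a slip (compare Corollary~\ref{cor:newexplicit3}, which has the identical structure and invokes Theorem~\ref{thm:new03}).
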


\begin{proof} Take $\omega=\frac{dx}{h(x)}$, where $h(x)=\prod\limits_{i=0}^s(x-\alpha_i)$. The result follows from Theorem \ref{thm:new02} and the same reasoning as in the proof of Corollary \ref{cor:oddnew1}.
\end{proof}

\begin{cor}
Let $q$ be an odd prime power with $q \equiv 1 \pmod 4 $. Then for every $n\equiv 0 \pmod 4$  satisfying  $q \ge  4^{s+4} (s + 4)^2,$ there exist $1$-$dim$ hull codes with parameters $[2s,2r+1,\ge 2(s-r)-1]$ and $[2s,2s-2r-1,\ge 2r+1]$ for $1\le r\le s-1$.
\end{cor}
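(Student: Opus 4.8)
The plan is to instantiate the general construction of Corollary~\ref{cor:oddnew1} (and, for the odd-length bookkeeping that feeds into it, the curve used in Corollary~\ref{cor:q-odd}) on a carefully chosen elliptic curve over $\F_q$ with $q\equiv 1\pmod 4$, and then count which values of $s$ and $r$ are realizable. First I would invoke Lemma~\ref{lem:JinXin}: since $q\ge 4^{s+4}(s+4)^2$, there is a set of $s+4$ elements of $\F_q$ whose pairwise differences are all nonzero squares; label them $a_1,a_2,a_3,\alpha_0,\alpha_1,\dots,\alpha_s$. As in the proof of Corollary~\ref{cor:q-odd}, put ${\cal E}\colon y^2=(x-a_1)(x-a_2)(x-a_3)$, so that for each $i$ the fibre over $\alpha_i$ consists of two rational points $P_{\alpha_i}^{(1)}=(\alpha_i,\beta_i)$, $P_{\alpha_i}^{(2)}=(\alpha_i,-\beta_i)$, and $-1$ being a square (because $q\equiv1\pmod 4$) the two residues at a fibre differ only by a square factor. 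Splitting $\{1,\dots,s\}$ according to whether $\beta_i$ is a square, one of the two halves has size $\ge s$; absorbing a global non-square scalar into $\omega$ in the bad case exactly as in Corollary~\ref{cor:q-odd}, we may assume all $2s$ residues of $\omega=\dfrac{y\,dx}{\prod_{i=1}^s(x-\alpha_i)}$ are squares in $\F_q$.

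The second step is to set up the divisors so that Theorem~\ref{thm:new02} applies rather than Theorem~\ref{thm:02}. Take $D=\sum_{i=1}^s\bigl(P_{\alpha_i}^{(1)}+P_{\alpha_i}^{(2)}\bigr)$, $\overline D=P_{\alpha_0}^{(1)}$ (a degree-one divisor), $G=(r+1)O+r\overline D$, and $H=D-G+(\omega)=(2s-r-1)O-r\overline D$; since $(dx)=(y)$ on ${\cal E}$ one computes $(\omega)=(2Q_1+2Q_2+2Q_3-6O)+(nO-D)$ with $n=2s$, which is a principal divisor, and $G\wedge H=(r+1)O-r\overline D$ has degree one. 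The only remaining hypothesis of Theorem~\ref{thm:new02} is $\overline D=P_{\alpha_0}^{(1)}\notin E[r+1]$, i.e. $(r+1)P_{\alpha_0}^{(1)}\ne O$ in the group law. This I would arrange by a suitable choice of $\alpha_0$: among the $s+4$ available elements from Lemma~\ref{lem:JinXin} there is freedom to pick which one plays the role of $\alpha_0$, and a point $P_{\alpha_0}^{(1)}=(\alpha_0,\beta_0)$ with $\beta_0\ne0$ has order $>2$, so for $r\le s-1<s+1\le \#{\cal E}(\F_q)$ one can choose $\alpha_0$ so that the order of $P_{\alpha_0}^{(1)}$ does not divide $r+1$ — alternatively, if $\gcd(r+1,\#{\cal E}(\F_q))=1$ this is automatic by the Bézout argument of Corollary~\ref{cor:newexplicit2}. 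With this, Theorem~\ref{thm:new02} yields that ${\bf a}\cdot C_{\cal L}(D,G)$ is a $1$-$\dim$ hull code of length $n=2s$, dimension $2r+1$, minimum distance $\ge 2s-2r-1=2(s-r)-1$, and its dual has parameters $[2s,2s-2r-1,\ge 2r+1]$, for every $1\le r\le s-1$.

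Finally I would package the hypotheses: the constraints are $q\equiv1\pmod4$, $n\equiv 0\pmod 4$ so that $n=2s$ with $s$ even, and $q\ge 4^{s+4}(s+4)^2$ guaranteeing enough square-difference elements; these are exactly the hypotheses stated, so the corollary follows. I expect the main obstacle to be the order condition $P_{\alpha_0}^{(1)}\notin E[r+1]$: unlike in Corollary~\ref{cor:oddnew1}, where the curve $(y-b)^2=x^3$ has a convenient $\F_q$-point $(0,b)$ of infinite-looking behaviour on the smooth locus, here ${\cal E}$ is a genuine elliptic curve whose group structure we do not control tightly, so one must argue that the freedom in choosing $\alpha_0$ (or a $\gcd$ hypothesis, or simply that $\#{\cal E}(\F_q)>s+1>r+1$ forces some non-vanishing multiple) suffices for every $r$ in the stated range; I would handle this exactly as in Corollary~\ref{cor:newexplicit2}, noting that if $(r+1)P_{\alpha_0}^{(1)}=O$ for a given $\alpha_0$ one simply replaces $\alpha_0$ by another element of the Lemma~\ref{lem:JinXin} set, which changes none of the residue-square computations.
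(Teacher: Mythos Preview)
Your approach is precisely what the paper intends: its own proof is the single sentence ``the result follows from the same reasoning as that in Corollary~\ref{cor:q-odd}'', i.e.\ reuse the curve $y^2=(x-a_1)(x-a_2)(x-a_3)$ and the differential from that corollary, pass to the degree-one divisor $\overline D=P_{\alpha_0}^{(1)}$ with $G=(r+1)O+r\overline D$, and invoke the second characterization.

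One slip to correct: since $(\omega)=2Q_1+2Q_2+2Q_3+(2s-6)O-D$ (as you yourself compute), the divisor $H=D-G+(\omega)$ equals $2Q_1+2Q_2+2Q_3+(2s-r-7)O-rP_{\alpha_0}^{(1)}$, not $(2s-r-1)O-r\overline D$. Consequently Theorem~\ref{thm:new02} does not literally apply --- its proof tacitly assumes $(\omega)=2sO-D$ --- and you should appeal to Lemma~\ref{lem:new02} directly instead. This is harmless: the $Q_i$-components of $H$ contribute nothing to $G\wedge H$ (because $G$ has no $Q_i$-part), so $G\wedge H=(r+1)O-rP_{\alpha_0}^{(1)}$ still has degree one (at least while $r+1\le 2s-r-7$), and the lemma delivers the $1$-$\dim$ hull. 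A side benefit of going through Lemma~\ref{lem:new02} is that its only hypothesis on $G\wedge H$ is $\deg(G\wedge H)=1$, so the torsion condition $P_{\alpha_0}^{(1)}\notin E[r+1]$ you worked to arrange is in fact not needed here.
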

\begin{proof} The result follows from the same reasoning as that in Corollary \ref{cor:q-odd}.
\end{proof}


\subsection{Second construction from hyper-elliptic curves}

The hyper-elliptic curve over $\mathbb F_{q^2}$ with $q=2^m$ is defined by
\begin{equation}
{\cal C}: ~y^2+y=x^{q+1}.
\label{eq:hyper-elliptic}
\end{equation}
For any $\alpha\in \mathbb F_{q^2}$, there exactly exist two rational points $P_{\alpha}^{(1)}, P_{\alpha}^{(2)}$ with $x$-component $\alpha$.
The set ${\cal C}(\mathbb F_{q^2})$ of all rational points of ${\cal C}$ equal $\{P_{\alpha}^{(1)}| \alpha \in \mathbb F_{q^2}\}\cup \{P_{\alpha}^{(2)}|\alpha \in \mathbb F_{q^2}\} \cup \{ O\}$.

\begin{thm}\label{thm:hyper-elliptic}
Let $q=2^m, m\ge 2$. Assume that $P$ is a point of ${\cal C}$ of degree 1 with $supp(P)\notin D$, $D=\sum\limits_{\alpha\in \F_{q^2}}(P_{\alpha}^{(1)}+P_{\alpha}^{(2)})-P$ and $G=(r+\frac{q}{2}+1)O+r P$ with $\frac{q}{4}\le r \le q^2-\frac{q}{4}-2$. Then, $C_{\mathcal L}(D,G)$ is a $1$-$dim$  hull code with parameters $[2q^2-1,2(r+1), \ge 2(q^2-r-1)-\frac{q}{2}]$, and its dual has parameters $ [2q^2-1, 2(q^2-r-1)-1,\ge 2r+3-\frac{q}{2}]$.
\end{thm}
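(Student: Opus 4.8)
The plan is to run, by hand, the argument underlying Lemma~\ref{lem:sufficient2} on the genus-$\tfrac q2$ curve ${\cal C}$, the point being to exhibit one explicit Weil differential whose residues along $D$ are all squares. First I would fix $\omega=\frac{dx}{\prod_{\alpha\in\F_{q^2}}(x-\alpha)}=\frac{dx}{x^{q^2}-x}$ and recall the standard facts: ${\cal C}$ has genus $g=\tfrac q2$, the function $x$ has $O$ as its only pole with $v_O(x)=-2$, and $(dx)=(2g-2)O=(q-2)O$. Setting $h(x)=x^{q^2}-x$, which has a simple zero at each of the $2q^2$ affine rational points and a pole of order $2q^2$ at $O$, I obtain $(h)=\sum_{\alpha\in\F_{q^2}}(P_\alpha^{(1)}+P_\alpha^{(2)})-2q^2O=(D+P)-2q^2O$, hence $(\omega)=(dx)-(h)=(2q^2+q-2)O-D-P$; in particular $v_{P_i}(\omega)=-1$ for every $P_i$ occurring in $D$. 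Since ${\cal C}$ is unramified over the $x$-line at every affine point, $x-\alpha$ is a uniformizer at $P_\alpha^{(j)}$, so $\text{Res}_{P_\alpha^{(j)}}(\omega)=h'(\alpha)^{-1}$, and in characteristic $2$ one has $h'(x)=q^2x^{q^2-1}-1=-1=1$; thus every residue of $\omega$ along $D$ equals $1=1^2$, a square, and all of them coincide.

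Next I would identify the dual. Because $v_{P_i}(\omega)=-1$ and the residues $\text{Res}_{P_i}(\omega)$ are all equal (to $1$), Lemma~\ref{lem:dual2} applies with trivial scalar vector and yields $C_{\cal L}(D,G)^{\perp}=C_{\Omega}(D,G)=C_{\cal L}(D,H)$ where $H=D-G+(\omega)=\bigl(2q^2+\tfrac q2-3-r\bigr)O-(r+1)P$; equivalently, since all residues are squares, I would put $C'=C_{\cal L}(D,H)$ and invoke Lemma~\ref{lem:sufficient0} with ${\bf a}={\bf e}={\bf 1}$, after which it remains only to show that the intersection $C_{\cal L}(D,G)\cap C_{\cal L}(D,H)$ is one-dimensional. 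Here $\mathrm{supp}(G)=\mathrm{supp}(H)=\{O,P\}$ is disjoint from $\mathrm{supp}(D)$, and the hypothesis $\tfrac q4\le r\le q^2-\tfrac q4-2$ guarantees $2g-2<\deg G,\ \deg H<n$ for $n=2q^2-1$, $\deg G=2r+\tfrac q2+1$ and $\deg H=2q^2-2r+\tfrac q2-4$; in particular $\deg(G-D)<0$ and $\deg(H-D)<0$, so evaluation is injective on both ${\cal L}(G)$ and ${\cal L}(H)$.

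For the hull I would then mimic the proof of Lemma~\ref{lem:sufficient2}; write $\mathrm{ev}_D(f)=(f(P_1),\dots,f(P_n))$. A codeword $c$ in the intersection satisfies $c=\mathrm{ev}_D(f)=\mathrm{ev}_D(g)$ for uniquely determined $f\in{\cal L}(G)$ and $g\in{\cal L}(H)$; put $e=f-g$. If $e=0$ then $f=g\in{\cal L}(G)\cap{\cal L}(H)={\cal L}(G\wedge H)$ with $G\wedge H=(r+\tfrac q2+1)O-(r+1)P$ of degree $\tfrac q2=g$; if $e\ne 0$ then $e\in{\cal L}(G\vee H-D)$, and from $(\omega)=G+H-D$ one gets $G\vee H-D=(\omega)-(G\wedge H)$, of degree $g-2$ and linearly equivalent to $W-(G\wedge H)$ for a canonical divisor $W$. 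By Riemann--Roch (Theorem~\ref{thm:Riemann-Roch}), both $\ell(G\wedge H)=1$ and $\ell(G\vee H-D)=0$ are equivalent to the single assertion $i(G\wedge H)=0$, i.e.\ that $W-(G\wedge H)\sim\bigl(\tfrac q2-3-r\bigr)O+(r+1)P$ is not linearly equivalent to an effective divisor. Granting this, the locus $e=0$ equals $\mathrm{ev}_D({\cal L}(G\wedge H))$ and has dimension $\ell(G\wedge H)=1$, while no $c$ with $e\ne 0$ can occur; hence $C_{\cal L}(D,G)\cap C_{\cal L}(D,H)$ is one-dimensional and, by Lemma~\ref{lem:sufficient0}, $C_{\cal L}(D,G)$ is a $1$-$dim$ hull code. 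To establish $i(G\wedge H)=0$ I would exploit the Weierstrass structure of the rational point $O$ — its numerical semigroup is $\langle 2,q+1\rangle$, so its gap sequence is $\{1,3,5,\dots,q-1\}$ — which tightly restricts the functions with a single pole at $O$ of bounded order, together with the range of $r$ and the position of $P$ relative to $O$, so as to exclude any nonzero element of ${\cal L}\bigl((\tfrac q2-3-r)O+(r+1)P\bigr)$.

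Finally, the parameters drop out of Theorem~\ref{thm:distance}: applied to $G$ it gives $C_{\cal L}(D,G)$ of type $[\,2q^2-1,\ \deg G-g+1,\ \ge n-\deg G\,]=[\,2q^2-1,\ 2(r+1),\ \ge 2(q^2-r-1)-\tfrac q2\,]$, and applied to $H$ (using $C_{\cal L}(D,G)^{\perp}=C_{\cal L}(D,H)$ from the second step) it gives the dual of type $[\,2q^2-1,\ \deg H-g+1,\ \ge n-\deg H\,]=[\,2q^2-1,\ 2(q^2-r-1)-1,\ \ge 2r+3-\tfrac q2\,]$. I expect the hard part to be the non-speciality of $G\wedge H$ in the third step: since $g\ge 2$ it cannot be read off from a degree inequality, so $i(G\wedge H)=0$ must be forced out of the special geometry of ${\cal C}$ (the Weierstrass point $O$) together with the constraints on $r$ (and, if necessary, a genericity condition on $P$); the remainder is routine bookkeeping with divisors, residues, and Riemann--Roch.
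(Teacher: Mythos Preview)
Your route is the paper's route: the same differential $\omega=dx/(x^{q^2}-x)$, the same computation $(\omega)=(2q^2+q-2)O-D-P$ with all residues equal to $1$, the same $H=D-G+(\omega)$, and the same $G\wedge H=(r+\tfrac q2+1)O-(r+1)P$ of degree $g=\tfrac q2$. The parameter bookkeeping is identical and correct.

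Where you and the paper part ways is in the justification, and here you are actually more careful than the paper. The paper simply writes that $G\wedge H$ ``is non-principal and thus non-special'' and invokes Lemma~\ref{lem:sufficient2}. But hypothesis~(1) of Lemma~\ref{lem:sufficient2}, namely $\deg(G\vee H-D)<0$, \emph{fails} here: as you compute, $G\vee H-D=(\omega)-G\wedge H$ has degree $2g-2-g=g-2\ge 0$ for $q\ge 4$. You repair this by going back to the proof of Lemma~\ref{lem:sufficient2} and observing that what is really needed is $\ell(G\vee H-D)=0$, and since $G\vee H-D$ is a canonical divisor minus $G\wedge H$, this equals $i(G\wedge H)$; hence both $\ell(G\wedge H)=1$ and $\ell(G\vee H-D)=0$ collapse to the single assertion $i(G\wedge H)=0$. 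That reduction is correct and is exactly what the paper is tacitly using.

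The remaining gap is the same in both proofs: establishing $i(G\wedge H)=0$. The paper's one-liner ``non-principal and thus non-special'' is not a valid implication---principal divisors have degree $0$, so every degree-$g$ divisor is trivially non-principal, yet special degree-$g$ divisors certainly exist on ${\cal C}$ (for instance $gO$ itself, since $\ell(gO)=\tfrac q4+1>1$). You are right to flag this as the genuine crux and to point toward the Weierstrass semigroup $\langle 2,q+1\rangle$ at $O$; but your sketch does not yet carry it through, and neither does the paper. In short: same strategy, your write-up is more honest about where the work lies, and the non-speciality of $G\wedge H$ remains to be proved in both.
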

\begin{proof}
Set $h(x)=\prod\limits_{a\in \F_{q^2}}(x+\alpha)$, $D_0=(h)_0$ and $\omega=\frac{dx}{h(x)}$. Since the curve ${\cal C}$ has genus $g=\frac{q}{2}$ and has $1+2q^2$ rational points, we get
$
(\omega)=2(q^2-1+\frac{q}{2}) O- D_0=2(q^2-1+\frac{q}{2}) - D-P,
$
and for any $\alpha\in \mathbb F_{q^2}$, $\text{Res}_{P_\alpha^{(1)}}(\omega)=\text{Res}_{P_\alpha^{(2)}}(\omega)=1$.

Set $H=D-G+(\omega)$. Then, we get $(G\wedge H)=(r+\frac{q}{2}+1)O-(r+1)P$ with $\deg (G\wedge H)=\frac{q}{2}$, which is non-principal and thus non-special. The result follows from Lemma \ref{lem:sufficient2}.
\end{proof}

\begin{cor}\label{cor:hyper-elliptic0}
Let $q=2^m, m\ge 2$. Assume that $D=P_{\alpha_1}+\cdots+P_{\alpha_s}+P_{\alpha_0}^{(2)} $ with $P_{\alpha_i}=(P_{\alpha_i}^{(1)}+P_{\alpha_i}^{(2)})$ for $0\le i\le s$ and $G=(r+\frac{q}{2}+1)O+r P_{\alpha_0}^{(1)}$ with $\frac{q}{4}\le r \le s-\frac{q}{4}-1$. Then for some ${\bf a}=(a_0,a_1,\hdots, a_{2s})\in \left(\F_q^*\right)^{2s+1}$, ${\bf a}\cdot C_{\mathcal L}(D,G)$ is a $1$-$dim$  hull code with parameters  $[2s+1,2(r+1), \ge 2(s-r)-\frac{q}{2}]$, and its dual has parameters $ [2s+1, 2(s-r)-1,\ge 2r+3-\frac{q}{2}]$.
\end{cor}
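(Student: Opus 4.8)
The plan is to follow the proof of Theorem~\ref{thm:hyper-elliptic}, replacing the ``global'' polynomial $h=\prod_{\alpha\in\F_{q^2}}(x+\alpha)$ there by the ``local'' one $h(x)=\prod_{i=0}^{s}(x-\alpha_i)$ adapted to the chosen $2s+1$ rational points. Set $\omega=\frac{dx}{h(x)}$, so that $\omega$ has a simple pole with nonzero residue at each of $P^{(1)}_{\alpha_i},P^{(2)}_{\alpha_i}$ $(0\le i\le s)$. Using the genus $g=\frac{q}{2}$ of $\mathcal C$, the identity $(dx)=(2g-2)O=(q-2)O$, and $(h)=\sum_{i=0}^{s}(P^{(1)}_{\alpha_i}+P^{(2)}_{\alpha_i})-2(s+1)O$, one computes
\[
(\omega)=(2s+q)O-\sum_{i=0}^{s}(P^{(1)}_{\alpha_i}+P^{(2)}_{\alpha_i})=(2s+q)O-D-P^{(1)}_{\alpha_0},
\]
because $\sum_{i=0}^{s}(P^{(1)}_{\alpha_i}+P^{(2)}_{\alpha_i})=D+P^{(1)}_{\alpha_0}$. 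Hence $(\omega)=G+H-D$ with $H:=D-G+(\omega)=(2s+\frac{q}{2}-r-1)O-(r+1)P^{(1)}_{\alpha_0}$, whose support, like that of $G$, lies in $\{O,P^{(1)}_{\alpha_0}\}$ and so misses $\mathrm{Supp}(D)$; moreover $0<\deg G=2r+\frac{q}{2}+1<2s+1$ by the hypothesis $\frac{q}{4}\le r\le s-\frac{q}{4}-1$.

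Next I would check the hypotheses of Lemma~\ref{lem:sufficient2}. For all $i,j$ one has $\mathrm{Res}_{P^{(j)}_{\alpha_i}}(\omega)=\bigl(\prod_{k\ne i}(\alpha_i-\alpha_k)\bigr)^{-1}\ne0$, and since $q$ is even the squaring map on $\F_{q^2}$ is a bijection, so each residue is a nonzero square, say $a_{ij}^2$; this yields the vector $\mathbf a=(a_{02},a_{11},a_{12},\dots,a_{s1},a_{s2})$. Since $r\le s-1$, the coefficient of $O$ in $G$ does not exceed that in $H$, so
\[
G\wedge H=(r+\tfrac{q}{2}+1)O-(r+1)P^{(1)}_{\alpha_0},\qquad \deg(G\wedge H)=\tfrac{q}{2}=g .
\]
The crux is to show that $G\wedge H$ is non-special, i.e.\ $i(G\wedge H)=0$; this is the analogue of the step ``$G\wedge H$ non-principal, hence non-special'' in Theorem~\ref{thm:hyper-elliptic}, but now it is a genuine claim about $\mathcal C$, since a divisor of degree $g$ on a curve of genus $g$ need not be non-special.

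To prove it I would take the canonical divisor $W=(dx)=(q-2)O$, so that
\[
i(G\wedge H)=\ell\bigl(W-(G\wedge H)\bigr)=\ell\bigl((\tfrac{q}{2}-r-3)O+(r+1)P^{(1)}_{\alpha_0}\bigr),
\]
and the task becomes to show that this divisor of degree $g-2$ is not linearly equivalent to an effective divisor. Here I would use that the Weierstrass semigroup of $\mathcal C$ at its unique ramified point $O$ is $\langle 2,q+1\rangle$, which makes every space $\mathcal L(mO)$ explicit, together with the freedom to choose $\alpha_0$ so that $P^{(1)}_{\alpha_0}$ is in general position; in the stated range of $r$ (in particular once $\frac{q}{2}-r-3<0$) this forces $\ell\bigl((\tfrac{q}{2}-r-3)O+(r+1)P^{(1)}_{\alpha_0}\bigr)=0$. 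I expect this verification to be the main obstacle: it is precisely the point where the hyperelliptic construction departs from the elliptic one of Theorem~\ref{thm:new03}, since on an elliptic curve the analogous divisor automatically has negative degree.

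Granting the non-speciality, condition~(2) of Lemma~\ref{lem:sufficient2} holds because $\deg(G\wedge H)=\frac{q}{2}=g-i(G\wedge H)$, and the consequence of condition~(1) that is actually used in the proof of that lemma, namely $\mathcal L(G\vee H-D)=\{0\}$, also holds: since $(\omega)$ is canonical and $G\vee H-D=(\omega)-(G\wedge H)$, we get $\ell(G\vee H-D)=\ell\bigl(W-(G\wedge H)\bigr)=i(G\wedge H)=0$. Lemma~\ref{lem:sufficient2} then gives that $\mathbf a\cdot C_{\mathcal L}(D,G)$ is a $1$-$dim$ hull code, of dimension $\deg G-g+1=2(r+1)$ and minimum distance $\ge n-\deg G=2(s-r)-\frac{q}{2}$ by Theorem~\ref{thm:distance} (with $g=\frac{q}{2}$). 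Finally, by Lemma~\ref{lem:dual2} and Lemma~\ref{lem:sufficient0} the dual code equals $\mathbf a\cdot C_{\mathcal L}(D,H)$, which by Theorem~\ref{thm:distance} has dimension $\deg H-g+1=2(s-r)-1$ and minimum distance $\ge n-\deg H=2r+3-\frac{q}{2}$, as claimed.
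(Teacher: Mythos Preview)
Your setup is exactly the paper's: take $h(x)=\prod_{i=0}^{s}(x-\alpha_i)$ and $\omega=dx/h$, observe that all residues are nonzero squares because $q$ is even, and then invoke the computation from Theorem~\ref{thm:hyper-elliptic} together with Lemma~\ref{lem:sufficient2}. You actually go further than the paper in two places. First, you notice that hypothesis~(1) of Lemma~\ref{lem:sufficient2} is \emph{not} satisfied here, since $\deg(G\vee H-D)=\deg(\omega)-\deg(G\wedge H)=g-2\ge0$; you correctly replace it by what is really used in the proof of that lemma, namely $\ell(G\vee H-D)=0$, and reduce this to $i(G\wedge H)=0$ via $G\vee H-D\sim W-G\wedge H$. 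Second, you flag that the paper's inference ``non-principal, hence non-special'' (used in Theorem~\ref{thm:hyper-elliptic}) is not a valid step for a divisor of degree $g$ on a curve of genus $g\ge2$.

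The difficulty is that this non-speciality claim is in fact \emph{false} at the bottom of the stated range once $q\ge16$, and your proposed fix cannot repair it. For $r<g$ one has $r+g+1<q+1$, so $\mathcal L\bigl((r+g+1)O\bigr)$ contains no $y$-term and equals $\langle 1,x,\dots,x^{\lfloor(r+g+1)/2\rfloor}\rangle$; since $x-\alpha_0$ is a uniformizer at $P_{\alpha_0}^{(1)}$, this gives
\[
\ell(G\wedge H)=\max\Bigl(0,\ \bigl\lfloor\tfrac{r+g+1}{2}\bigr\rfloor-r\Bigr),
\]
which is $\ge2$ for every $g/2\le r\le g-3$. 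Concretely, for $q=16$, $g=8$, $r=4$ one gets $G\wedge H=13O-5P_{\alpha_0}^{(1)}$ and $\ell(13O-5P_{\alpha_0}^{(1)})=2$ for \emph{every} affine point $P_{\alpha_0}^{(1)}$; so invoking ``freedom to choose $\alpha_0$ in general position'' cannot help, and in any case the hypotheses of the corollary do not permit such a choice. Since the evaluation map on $\mathcal L(G\wedge H)$ is injective, this already forces $\dim\bigl(C_{\mathcal L}(D,G)\cap C_{\mathcal L}(D,H)\bigr)\ge2$, so the hull is not one-dimensional in these cases. Thus the gap you isolated is genuine and is shared with the paper's own argument; the proof as structured only covers $r\ge g-2$, not the full stated range $r\ge q/4$.
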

\begin{proof}
Let $U=\{\alpha_0,\alpha_1,\hdots,\alpha_s\}$. Set $h(x)=\prod\limits_{a\in U\subset \F_{q^2}}(x+\alpha)$, $D_0=(h)_0$ and $\omega=\frac{dx}{h(x)}$. Since $q$ is even, the residues $\text{Res}_{P_\alpha^{(1)}}(\omega)$ and $\text{Res}_{P_\alpha^{(2)}}(\omega)$ are square elements in $\F_{q^2}.$ The rest follows from the same reasoning as in the proof of Theorem \ref{thm:hyper-elliptic}.
\end{proof}

\begin{cor} \label{cor:hyper-elliptic}
Let $q=2^m,m\ge 3$ and $U_0=\{\alpha\in \F_q| \text{ Tr}(\alpha^5)=0\}.$ Then, there exists a $1$-$dim$ hull code over $\F_q$ with parameters $[2s+1,2(r+1),\ge 2(s-r-1)]$, and it dual has parameters $[2s+1,2(s-r)-1, \ge 2r+1]$ for $1\le s\le |U_0|$ and $1\le r\le s-2 .$
\end{cor}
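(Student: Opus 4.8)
The plan is to apply Corollary \ref{cor:hyper-elliptic0} over the field $\F_q$ with $q=2^m$, $m\ge 3$, with the hyper-elliptic curve $\mathcal C:\,y^2+y=x^{q_0+1}$ chosen so that $q = q_0^2$ is too large; instead the right move is to work with a \emph{small} companion curve of genus $g=2$ and bring in the specific set $U_0=\{\alpha\in\F_q:\mathrm{Tr}(\alpha^5)=0\}$. Concretely I would take the curve defined by $y^2+y=x^5$ over $\F_q$ (which has genus $2$ when $\gcd(5,q)=1$, i.e. for all $q=2^m$), note that for every $\alpha\in\F_q$ there are exactly two rational points above $x=\alpha$ precisely when $\mathrm{Tr}_{\F_q/\F_2}(\alpha^5)=0$, and no rational point above $\alpha$ otherwise. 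Thus $U_0$ is exactly the set of $x$-coordinates of affine rational points of $y^2+y=x^5$, and $|U_0|$ controls the length. This matches the hyper-elliptic setup of Corollary \ref{cor:hyper-elliptic0} after replacing the specific genus-$\tfrac q2$ curve by the genus-$2$ curve $y^2+y=x^5$; the computations in that corollary only used that $q$ is even (so residues of $\omega=dx/h(x)$ are squares), that the curve is elliptic-or-hyper-elliptic with known genus $g$, and that $\deg(G\wedge H)=g$ is non-special.

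The key steps, in order, would be: (1) fix the genus-$2$ curve $\mathcal C_0:\,y^2+y=x^5$ over $\F_q$, identify its affine rational points with $\{P^{(1)}_\alpha,P^{(2)}_\alpha:\alpha\in U_0\}$ and the unique point $O$ at infinity; (2) pick distinct $\alpha_0,\dots,\alpha_s\in U_0$ (possible whenever $s+1\le|U_0|$, hence $1\le s\le|U_0|$ suffices after a harmless index shift), set $U=\{\alpha_0,\dots,\alpha_s\}$, $h(x)=\prod_{\alpha\in U}(x+\alpha)$, $\omega=dx/h(x)$, $D=P^{(2)}_{\alpha_0}+P_{\alpha_1}+\cdots+P_{\alpha_s}$ with $P_{\alpha_i}=P^{(1)}_{\alpha_i}+P^{(2)}_{\alpha_i}$, and $G=(r+2)O+r\,P^{(1)}_{\alpha_0}$ — here $\tfrac q2$ is replaced by $g=2$, so $r+\tfrac q2+1$ becomes $r+3$ in the notation of Corollary \ref{cor:hyper-elliptic0}, but with $g=2$ the exponent shifts accordingly and one checks $\deg(G\wedge H)=g=2$; (3) invoke that $q$ even forces every residue $\mathrm{Res}_{P^{(j)}_{\alpha_i}}(\omega)=1/\prod_{k\ne i}(\alpha_i+\alpha_k)$ to be a square in $\F_q$, extract the vector $\mathbf a$, and apply Corollary \ref{cor:hyper-elliptic0} (equivalently Theorem \ref{thm:hyper-elliptic} / Lemma \ref{lem:sufficient2} directly) to conclude that $\mathbf a\cdot C_{\mathcal L}(D,G)$ is a $1$-$dim$ hull code; (4) read off the parameters from Theorem \ref{thm:distance}: with $n=2s+1$, $\deg G=2r+2$, $g=2$ we get dimension $\deg G-g+1=2r+1$ and distance $\ge n-\deg G=2s-2r-1=2(s-r-1)+1$, and for the dual, using $\deg H = n-2-\deg G +\ldots$, the claimed $[2s+1,2(s-r)-1,\ge 2r+1]$; finally track the admissible range of $r$, which after the genus-$2$ specialization of $\tfrac q4\le r\le s-\tfrac q4-1$ in Corollary \ref{cor:hyper-elliptic0} collapses to $1\le r\le s-2$.

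The main obstacle I expect is step (1): verifying carefully that $y^2+y=x^5$ over $\F_q=\F_{2^m}$ has genus exactly $2$ and that its set of affine rational $x$-coordinates is \emph{precisely} $U_0=\{\alpha:\mathrm{Tr}(\alpha^5)=0\}$ — this rests on the Artin–Schreier fact that $y^2+y=\beta$ has two $\F_q$-solutions iff $\mathrm{Tr}_{\F_q/\F_2}(\beta)=0$ and none otherwise, applied with $\beta=\alpha^5$, plus a Riemann–Hurwitz computation for the genus. A secondary subtlety is bookkeeping the degree shifts when the genus drops from $\tfrac q2$ to $2$, making sure that the hypothesis $\deg(G\wedge H)=g$ of Lemma \ref{lem:sufficient2} still holds and that $G\wedge H$ is non-principal (hence non-special since $\deg(G\wedge H)=2=g$ forces $i(G\wedge H)=\ell(W-(G\wedge H))$, and one checks this index vanishes because $G\wedge H\not\sim W$); once these are in place, the residues-are-squares argument and the parameter count are the same routine as in the preceding corollaries.
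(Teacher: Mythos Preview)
Your plan is essentially the paper's own proof: the paper also works with the genus-$2$ curve $\mathcal C_0:\,y^2+y=x^5$ over $\F_q$, invokes Hilbert~90 (your Artin--Schreier argument) to identify $U_0$ with the set of affine $x$-coordinates, takes $U\subset U_0$ with $|U|=s+1$, sets $h(x)=\prod_{\alpha\in U}(x+\alpha)$, $\omega=dx/h(x)$, $D_0=(h)_0$, $D=D_0-P$, $G=(r+3)O+rP$, and then appeals to the reasoning of Theorem~\ref{thm:hyper-elliptic} (i.e.\ Lemma~\ref{lem:sufficient2}) with residues automatically square because $q$ is even.

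There is, however, a concrete bookkeeping slip you should fix. You write $G=(r+2)O+rP^{(1)}_{\alpha_0}$, then in the same breath observe that the formula $r+\tfrac q2+1$ specializes to $r+3$ when $g=2$ --- but you never correct the $(r+2)$, and in step~(4) you then take $\deg G=2r+2$ and compute dimension $\deg G-g+1=2r+1$. This does \emph{not} match the stated dimension $2(r+1)=2r+2$. With the correct choice $G=(r+3)O+rP$ one gets $\deg G=2r+3$, hence $k=2r+3-2+1=2r+2$ and $d\ge(2s+1)-(2r+3)=2(s-r-1)$, and for $H$ one finds $\deg H=2s-2r$, giving dual parameters $[2s+1,\,2(s-r)-1,\,\ge 2r+1]$ --- exactly the claim. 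Once you replace $r+2$ by $r+3$ throughout, your argument and the paper's coincide. (Your caution about checking that $G\wedge H=(r+3)O-(r+1)P$ is non-special is well placed; the paper asserts this rather tersely, and your remark that one must rule out $G\wedge H\sim W$ is the right way to see it.)
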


\begin{proof} 

Consider the hyper-elliptic curve defined by 
\begin{equation}
{\cal C}_0:~y^2 + y = x^5.
\label{eq:hyper-elliptic}
\end{equation}
From Hilbert 90's theorem, we get that $U_0$ is a subset of the solution to (\ref{eq:hyper-elliptic}).
Take two subsets $U,V$ of $U_0$ such that $V\subset U$ with $|V|=s$ and $|U|=s+1$. Set $h(x)=\prod\limits_{\alpha \in U}(x+\alpha)$, $D_0=(h)_0$ and $\omega=\frac{dx}{h(x)}$. Since the curve ${\cal C}_0$ has genus $g=2$, we get
$
(\omega)=(2+2(s+1)) O- D_0,
$
and for any $\alpha\in \mathbb F_{q}$, $\text{Res}_{P_\alpha^{(1)}}(\omega)=\text{Res}_{P_\alpha^{(2)}}(\omega)=\frac{1}{h'(\alpha)}$ are square elements in $\F_q$.
Set $D=D_0-P$, $G=(r+3)O+rP$ and $H=D-G+(\omega)$. The rest follows from the same reasoning as that in Theorem \ref{thm:hyper-elliptic}.

\end{proof}

\subsection{Second construction from Hermitian curves}

The Hermitian curve over $\mathbb F_{q^2}$ with $q=p^m$ is defined by
\begin{equation}
{\cal H}: ~y^q+y=x^{q+1}.
\label{eq:hermitian}
\end{equation}
For any $\alpha\in \mathbb F_{q^2}$, there exactly exist $q$ rational points $P_{\alpha}^{(1)}, P_{\alpha}^{(2)},\hdots, P_{\alpha}^{(q)}$ with $x$-component $\alpha$.
The set ${\cal H}(\mathbb F_{q^2})$ of all rational points of ${\cal H}$ equal $\{P_{\alpha}^{(1)}|\alpha \in \mathbb F_{q^2}\}\cup \{P_{\alpha}^{(2)}| \alpha \in \mathbb F_{q^2}\} \cup \cdots \cup \{P_{\alpha}^{(q)}|\alpha \in \mathbb F_{q^2}\}\cup \{O\}$.

\begin{thm}\label{thm:hermitian-1}
Let $q=p^m\ge 4$. Set $g=\frac{q(q-1)}{2}$. Assume that $P$ is a point of ${\cal H}$ of degree 1, $D=\sum\limits_{\alpha\in \F_{q^2}}(P_{\alpha}^{(1)}+P_{\alpha}^{(2)}+\cdots +P_{\alpha}^{(q)})-P$  and $G=(r+g+1)O+r P$ with $\frac{g-3}{2} < r <  \frac{q^3-g-2}{2} $. Then, $C_{\mathcal L}(D,G)$ is a $1$-$dim$  hull code with parameters $[q^3-1, 2r+2, \ge q^3-\frac{q^2-q}{2}-2r-2]$, and its dual has parameters $[q^3-1, q^3-2r-3,\ge 2r-\frac{q^2-q}{2}+3]$.
\end{thm}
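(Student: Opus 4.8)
I would follow the template of the proof of Theorem \ref{thm:hyper-elliptic}. Put $h(x)=\prod_{\alpha\in\F_{q^2}}(x-\alpha)=x^{q^2}-x$, let $D_0=(h)_0$ be the sum of all $q^3$ affine rational points of ${\cal H}$ (so $D_0=D+P$), and set $\omega=\frac{dx}{h(x)}$. On the Hermitian curve $(dx)=(2g-2)O$ — standard, via Riemann--Hurwitz for the degree-$q$ map $x:{\cal H}\to\mathbb{P}^1$, which is totally ramified only over $\infty$ with $v_O(x)=-q$ — while $v_O(h)=q^2\,v_O(x)=-q^3$, so
$$(\omega)=(dx)-(h)=(q^3+2g-2)O-D_0=(q^3+2g-2)O-D-P ,$$
and $\omega$ has a simple pole at every point of $\mathrm{supp}(D)$. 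Since $x-\alpha$ is a uniformizer at every point over $\alpha$ (Jacobian criterion, using $p\mid q$) and $h'(x)=q^2x^{q^2-1}-1=-1$, we get $\mathrm{Res}_{P_i}(\omega)=1/h'(x(P_i))=-1$ for all $i$; and $-1$ is a square in $\F_{q^2}$ (trivially when $p=2$, and because $q^2\equiv1\pmod 4$ when $p$ is odd). Picking $\zeta\in\F_{q^2}^{*}$ with $\zeta^2=-1$, the scalar vector ${\bf a}=(\zeta,\dots,\zeta)$ has $\mathrm{Res}_{P_i}(\omega)=a_i^2$, and as ${\bf a}$ is scalar, ${\bf a}\cdot C_{\cal L}(D,G)$ and $C_{\cal L}(D,G)$ have the same hull dimension, so it suffices to work with ${\bf a}\cdot C_{\cal L}(D,G)$.

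The divisor $H$ is now forced by $(\omega)=G+H-D$, namely $H=(\omega)+D-G=(q^3+g-r-3)O-(r+1)P$. In the stated range one checks $r+g+1<q^3+g-r-3$, so
$$G\wedge H=(r+g+1)O-(r+1)P ,\qquad G\vee H-D=(q^3+g-r-3)O+rP-D ,$$
with $\deg(G\wedge H)=g$ and $\deg(G\vee H-D)=g-2$. It remains to check the two facts that drive the proof of Lemma \ref{lem:sufficient2}: that $G\wedge H$ is non-special, so $\ell(G\wedge H)=1$; and that evaluation along $D$ is injective on ${\cal L}(G\vee H)$, i.e. $\ell(G\vee H-D)=0$ — the latter replacing condition~(1) of Lemma \ref{lem:sufficient2}, whose stated form $\deg(G\vee H-D)<0$ does not hold here. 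Granting these, the argument of Lemma \ref{lem:sufficient2} makes $C_{\cal L}(D,G)\cap C_{\cal L}(D,H)$ one-dimensional, and Lemmas \ref{lem:dual1}, \ref{lem:dual2} and \ref{lem:sufficient0} yield that ${\bf a}\cdot C_{\cal L}(D,G)$, hence $C_{\cal L}(D,G)$, is a $1$-$dim$ hull code. The parameters follow from Theorem \ref{thm:distance}: the constraints on $r$ are exactly $2g-2<\deg G=2r+g+1<n$ and $2g-2<\deg H=q^3+g-2r-4<n$, so $k=\deg G-g+1=2r+2$ and $d\ge n-\deg G=q^3-\frac{q^2-q}{2}-2r-2$, while dually $n-k=q^3-2r-3$ and $d^{\perp}\ge n-\deg H=2r-\frac{q^2-q}{2}+3$.

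I expect the main obstacle to be the speciality analysis of the two-point divisor $G\wedge H=(r+g+1)O-(r+1)P$, and similarly the vanishing of $\ell(G\vee H-D)$: knowing $\deg(G\wedge H)=g$ does not by itself pin down $\ell(G\wedge H)$. I would package this as a separate lemma computing the dimensions $\ell(aO-bP)$ of the spaces ${\cal L}(aO-bP)$ on ${\cal H}$ from the Weierstrass semigroup $\langle q,q+1\rangle$ at each rational point together with the vanishing data $v_O(x-\alpha)=-q$; this is also what should delimit the precise sub-range of $r$ for which $\ell(G\wedge H)=1$ genuinely holds. With that lemma, everything else is routine bookkeeping with divisors of the form $aO+bP$.
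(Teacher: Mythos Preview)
Your proposal follows the paper's approach exactly: same $h(x)=x^{q^2}-x$, same $\omega=dx/h$, same $H=(q^3+g-r-3)O-(r+1)P$, same $G\wedge H=(r+g+1)O-(r+1)P$ of degree $g$, and the same appeal to Lemma~\ref{lem:sufficient2}. The paper's entire argument for the crucial step is the one-line assertion that $G\wedge H$ ``is non-principal and thus non-special,'' and it invokes Lemma~\ref{lem:sufficient2} without commenting on the fact that $\deg(G\vee H-D)=g-2\ge 0$; you are right to flag both points. One simplification worth noting: since $G+H=(G\wedge H)+(G\vee H)$ and $(\omega)=G+H-D$, one has $G\vee H-D=(\omega)-(G\wedge H)$, so $\ell(G\vee H-D)=i(G\wedge H)$ and your two ``obstacles'' are in fact a single one, namely the non-speciality of $(r+g+1)O-(r+1)P$. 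The paper supplies no argument for this beyond the implication ``non-principal $\Rightarrow$ non-special,'' which is not valid in general for divisors of degree $g$; your proposed route via the Weierstrass semigroup $\langle q,q+1\rangle$ at the rational points of $\mathcal H$ is the natural way to actually justify it.
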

\begin{proof}
Set $h(x)=\prod\limits_{a\in \F_{q^2}}(x-\alpha)$, $D_0=(h)_0$ and $\omega=\frac{dx}{h(x)}$. Since the curve ${\cal H}$ has genus $\frac{q(q-1)}{2}$ and has $1+q^3$ rational points, we get
$
(\omega)=(2g-2+q^3) O- D-P,
$
and for any $\alpha\in \mathbb F_{q^2}$, $\text{Res}_{P_\alpha^{(1)}}(\omega)=\text{Res}_{P_\alpha^{(2)}}(\omega)=-1$.

Set $H=D-G+(\omega)$. Then, $H=(g-r-3+q^3)O-(r+1)P$  and $(G\wedge H)=(r+g+1)O-(r+1)P$ with $\deg (G\wedge H)=\frac{q(q-1)}{2}$, which is non-principal and thus non-special. The result follows from Lemma \ref{lem:sufficient2}.
\end{proof}

\begin{cor}\label{cor:hermitian-2} Let $q_0=p^m, q=q_0^2$ be an odd prime power and $s={q_0}\frac{q_0^2+1}{2}$ and $g=\frac{(q_0-1)^2}{4}$. Then,
there exist $1$-$dim$ hull codes with parameters $[s-1,2r+2,\ge s-2r-g-2]$ and $[s-1,s-2r-3,\ge 2r-g+3]$ for $\frac{g-3}{2} < r < \frac{s-g-2}{2} $.
\end{cor}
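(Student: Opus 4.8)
The plan is to run the proof of Theorem~\ref{thm:hermitian-1} (that is, to verify the hypotheses of Lemma~\ref{lem:sufficient2}) not on the Hermitian curve itself but on its maximal quotient
$$
{\cal C}_0:\ y^{q_0}+y=x^{(q_0+1)/2}\quad\text{over }\F_{q_0^2}=\F_q ,
$$
which is the quotient of ${\cal H}:y^{q_0}+y=x^{q_0+1}$ by the involution $(x,y)\mapsto(-x,y)$ (well defined since $q_0$ is odd, via $x\mapsto x^2$). First I would record the invariants of ${\cal C}_0$: its genus is $g=\frac{(q_0-1)^2}{4}$, it is $\F_{q_0^2}$-maximal, so $\#{\cal C}_0(\F_{q_0^2})=q_0^2+1+2gq_0=\frac{q_0^3+q_0+2}{2}=s+1$, and the $x$-coordinates of its affine points form exactly the set $S$ of the $\frac{q_0^2+1}{2}$ squares of $\F_{q_0^2}$ (including $0$); above each $\alpha\in S$ lie $q_0$ rational points $P_\alpha^{(1)},\hdots,P_\alpha^{(q_0)}$, and $x$ is unramified at every affine point.

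Next I would take $h(x)=\prod_{\alpha\in S}(x-\alpha)=x^{(q_0^2+1)/2}-x$, $\omega=\frac{dx}{h(x)}$, and $D_0=(h)_0=\sum_{\alpha\in S}\sum_{j=1}^{q_0}P_\alpha^{(j)}$, a divisor of degree $s$. As in Theorem~\ref{thm:hermitian-1} this gives $(\omega)=(2g-2+s)O-D_0$, and since $x-\alpha$ is a local parameter at each $P_\alpha^{(j)}$ one has $\text{Res}_{P_\alpha^{(j)}}(\omega)=1/h'(\alpha)$. The key observation is that these residues are squares in $\F_{q_0^2}$: since $h'(\alpha)=\frac{q_0^2-1}{2}$ for $\alpha\ne 0$ and $h'(0)=-1$, every residue lies in the prime subfield, and every $\beta\in\F_{q_0}^*$ is a square in $\F_{q_0^2}$ because $\beta^{(q_0^2-1)/2}=(\beta^{q_0-1})^{(q_0+1)/2}=1$. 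Hence hypothesis~(3) of Lemma~\ref{lem:sufficient2} holds for a suitable ${\bf a}\in(\F_{q_0^2}^*)^{\,s-1}$.

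Finally I would fix a degree-one point $P$ among the $s$ affine points, put $D=D_0-P$ (so $n=\deg D=s-1$), $G=(r+g+1)O+rP$, and $H=D-G+(\omega)=(g+s-r-3)O-(r+1)P$. Then $G\wedge H=(r+g+1)O-(r+1)P$ has degree $g$, hence is non-principal and therefore non-special; the bounds $\frac{g-3}{2}<r<\frac{s-g-2}{2}$ are exactly what is needed to ensure $0<\deg G<n$ and to place the divisors in the ranges demanded by Lemma~\ref{lem:sufficient2}. That lemma then yields, as in Theorem~\ref{thm:hermitian-1}, that ${\bf a}\cdot C_{\cal L}(D,G)$ is a $1$-$dim$ hull code with $k=\deg G-g+1=2r+2$ and $d\ge n-\deg G=s-2r-g-2$, while its dual has the parameters of $C_{\cal L}(D,H)$, namely dimension $\deg H-g+1=s-2r-3$ and distance $\ge n-\deg H=2r-g+3$.

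I expect the main work to be the curve-theoretic bookkeeping for ${\cal C}_0$: pinning down its genus and its exact number of rational points, and identifying the occurring $x$-coordinates cleanly enough that $h(x)=x^{(q_0^2+1)/2}-x$ and the residues land in $\F_{q_0}$ --- this is precisely what lets the odd-characteristic case go through without passing to a quadratic twist. Once this is in place, the divisor computations for $(\omega)$, $H$, $G\wedge H$ and the verification that $\deg(G\wedge H)=g$ merely repeat the argument in Theorem~\ref{thm:hermitian-1}; a secondary point worth a check is that the interval for $r$ is nonempty and compatible with all the hypotheses in the smallest cases (e.g.\ $q_0=3$, where ${\cal C}_0$ is an elliptic curve of genus $1$).
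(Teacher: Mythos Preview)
Your proposal is correct and follows essentially the same route as the paper: the same curve $y^{q_0}+y=x^{(q_0+1)/2}$, the same $h(x)=x^{(q_0^2+1)/2}-x$, the same divisors $D=(h)_0-P$ and $G=(r+g+1)O+rP$, and the same reduction to Lemma~\ref{lem:sufficient2} via the template of Theorem~\ref{thm:hermitian-1}. Your residue computation is in fact a bit more thorough than the paper's (you give the explicit values $\frac{q_0^2-1}{2}$ and $-1$, handle $\alpha=0$, and your squareness argument via $\F_{q_0}^*\subset(\F_{q_0^2}^*)^2$ is cleaner than the paper's ``since $q$ is a square'').
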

\begin{proof}
Consider an algebraic curve defined by
\begin{equation}
{\cal H}_0: y^{q_0}+y=x^{\frac{{q_0}+1}{2}}.
\label{eq:hermitian-2}
\end{equation}
The curve has genus $g=\frac{(q_0-1)^2}{4}$.
Put $$U=\{\alpha\in \F_q|\exists \beta\in \F_q\text{ such that }\beta^{q_0}+\beta=\alpha^{\frac{q_0+1}{2}}\}.$$ 
The set $U$ is the set of $x$-component solutions to the Hermitian curve whose elements are squares in $\F_q.$ There are $\frac{q_0^2+1}{2}$ square elements in $\F_q$, and this gives rise to ${q_0}\frac{q_0^2+1}{2}$ rational places. Write
$$h(x)=\prod\limits_{\alpha\in U}(x-\alpha)\text{ and } \omega=\frac{dx}{h(x)}.$$
Then, $h(x)=x^n-x$, where $n=\frac{q_0^2+1}{2},$ and thus $h'(x)=nx^{n-1}-1$. Since $q$ is a square, we have that $h'(\alpha)=n-1$ is a square for any $\alpha\in U \backslash \{0\}$. Then, the residue $\text{Res}_{P_{\alpha}}(\omega)=\frac{1}{h'(\alpha)}$ is a square for any $\alpha\in U$.
Put $D=(h)_0 -P$ with $\deg P=1$, $G=(r+g+1)O+rP,H=D-G+(\omega)$. Then, we get that $H=(s-r+g-3)O-(r+1)P$ and $G\wedge H=(r+g+1)O-(r+1)P$ with $\deg (G\wedge H)=\frac{(q_0-1)^2}{4}.$ The rest follows from the same reasoning as that in Theorem \ref{thm:hermitian-1}.
\end{proof}

\begin{cor}\label{cor:hermitian-3} Let $q_0=p^m, q=q_0^2$ be an odd prime power. Set $s={q_0}\frac{q_0^2-1}{2}$ and $g=\frac{(q_0-1)^2}{4}$. Then,
 there exist $1$-$dim$ hull codes with parameters $[s-1,2r+2,\ge s-2r-g-2]$ and $[s-1,s-2r-3,\ge 2r-g+3]$ for $\frac{g-3}{2}< r < \frac{s-g-2}{2} $.
\end{cor}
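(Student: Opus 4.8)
The plan is to rerun the argument of Corollary~\ref{cor:hermitian-2} --- and ultimately of Theorem~\ref{thm:hermitian-1} --- on the same curve, but keeping only the rational places lying over the \emph{nonzero} square $x$-coordinates, so that the associated divisor has degree exactly $s=q_0\frac{q_0^2-1}{2}$. Thus I would work again with
$$
{\cal H}_0:\ y^{q_0}+y=x^{\frac{q_0+1}{2}},
$$
which has a single place $O$ at infinity, genus $g=\frac{(q_0-1)^2}{4}$, and $v_O(x)=-q_0$. Put $U=\{\alpha\in\F_q^*:\alpha\text{ is a square in }\F_q\}$, a set of size $\frac{q_0^2-1}{2}$, and
$$
h(x)=\prod_{\alpha\in U}(x-\alpha)=x^{\frac{q_0^2-1}{2}}-1,\qquad \omega=\frac{dx}{h(x)}.
$$
Since $\alpha^{\frac{q_0+1}{2}}\in\F_{q_0}$ for every square $\alpha$, each $\alpha\in U$ is the $x$-coordinate of exactly $q_0$ rational places of ${\cal H}_0$, so that $D_0:=(h)_0$ has degree $q_0|U|=s$.

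I would then carry out the following steps in order. First, compute $(\omega)$: from $v_O(x)=-q_0$ one gets $(h)_\infty=sO$, and, exactly as in the proof of Theorem~\ref{thm:hermitian-1}, this yields $(\omega)=(2g-2+s)O-D_0$. Second, check that the residues of $\omega$ are squares: at each unramified place $P_\alpha^{(j)}$ one has $\text{Res}_{P_\alpha^{(j)}}(\omega)=1/h'(\alpha)$ with $h'(\alpha)=\tfrac{q_0^2-1}{2}\,\alpha^{-1}$; since $q=q_0^2\equiv 0\pmod p$ the scalar $\tfrac{q_0^2-1}{2}$ equals $-\tfrac12\in\F_{q_0}$, hence is a square in $\F_{q_0^2}$, while $\alpha^{-1}$ is a square because $\alpha\in U$, so every residue of $\omega$ is a nonzero square, and writing $\text{Res}_{P_{\alpha_i}^{(j)}}(\omega)=a_{ij}^2$ gives a vector ${\bf a}\in(\F_q^*)^{s-1}$. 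Third, fix a degree-one place $P\in supp(D_0)$, put $D=D_0-P$ (so the length is $n=s-1$), $G=(r+g+1)O+rP$ and $H=D-G+(\omega)=(s+g-r-3)O-(r+1)P$; the hypothesis $r<\frac{s-g-2}{2}$ forces $r+g+1\le s+g-r-3$, hence $G\wedge H=(r+g+1)O-(r+1)P$, a divisor of degree $g$. Fourth, read off the parameters: the range $\frac{g-3}{2}<r<\frac{s-g-2}{2}$ gives $2g-2<\deg G<n$, so by Theorem~\ref{thm:distance} the code $C_{\cal L}(D,G)$ is $[s-1,2r+2,\ge s-2r-g-2]$, and, being ${\bf a}$-equivalent to $C_{\cal L}(D,H)$ by Lemmas~\ref{lem:dual1} and \ref{lem:dual2} (with $\deg H=s+g-2r-4$), its dual is $[s-1,s-2r-3,\ge 2r-g+3]$. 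Finally, invoke Lemma~\ref{lem:sufficient2} together with Lemma~\ref{lem:sufficient0} to conclude that ${\bf a}\cdot C_{\cal L}(D,G)$ has one-dimensional hull.

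The step I expect to be the main obstacle is this last one, namely verifying the hypotheses of Lemma~\ref{lem:sufficient2} for these divisors. One part is that $G\wedge H=(r+g+1)O-(r+1)P$ be non-special, so that $\ell(G\wedge H)=1$ by Riemann--Roch; I would establish this exactly as for the corresponding divisor in Theorem~\ref{thm:hermitian-1} and Corollary~\ref{cor:hermitian-2}, where one uses that it has degree $g$ and is non-principal. The more delicate part is showing that the hull is genuinely one-dimensional rather than two-dimensional: following the proof of Lemma~\ref{lem:sufficient2}, a codeword common to $C_{\cal L}(D,G)$ and $C_{\cal L}(D,H)$ either comes from ${\cal L}(G\wedge H)$, contributing exactly one dimension, or yields a nonzero element of ${\cal L}(G\vee H-D)$, which therefore must be shown to be the zero space; since $G\vee H-D=(\omega)-G\wedge H$ has degree $g-2$, this is automatic when $g\le 1$ (the case $q_0=3$, which merely recovers an elliptic-curve construction) and for $g\ge 2$ has to be obtained by the same reasoning used for the corresponding Hermitian divisor in Theorem~\ref{thm:hermitian-1}. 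Once both parts are in place, Lemma~\ref{lem:sufficient0} turns the one-dimensional intersection $C_{\cal L}(D,G)\cap C_{\cal L}(D,H)$ into a one-dimensional Euclidean hull of ${\bf a}\cdot C_{\cal L}(D,G)$, and the parameters recorded above complete the proof.
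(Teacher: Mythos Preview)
Your approach is exactly the paper's: work on ${\cal H}_0$, take $U'=U\setminus\{0\}$ (the nonzero squares of $\F_q$), set $h(x)=\prod_{\alpha\in U'}(x-\alpha)$ and $\omega=dx/h(x)$, and rerun the argument of Corollary~\ref{cor:hermitian-2}. The paper's own proof consists of just those two lines, so you have supplied considerably more detail---and you correctly flag the delicate point about $\ell(G\vee H-D)$ for $g\ge 2$, which both proofs ultimately defer to the reasoning in Theorem~\ref{thm:hermitian-1}.
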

\begin{proof} Consider the same setting as the proof of Corollary \ref{cor:hermitian-2}. Take $U'=U\backslash \{0\},$ and write 
$$h(x)=\prod\limits_{\alpha\in U'}(x-\alpha)\text{ and } \omega=\frac{dx}{h(x)}.$$
The rest follows with the same reasoning as that in Corollary \ref{cor:hermitian-2}.
\end{proof}

\begin{cor}\label{cor:hermitian-4} Let $q_0=p^m$ and $q=q_0^2$. Assume that $s|\frac{q-1}{2}$. Set $N=sq_0-1$ and $g=\frac{q_0(q_0-1)}{2}$. Then, there exists a $1$-$dim$ hull code with parameters $[N,2r+2,\ge N-(2r+1+g) ]$, and its dual has parameters 
$[N,N-2r-2,\ge 2r+3-g]$ for $\frac{g-3}{2} < r <  \frac{N-g-1}{2}$.

\end{cor}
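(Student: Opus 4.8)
The plan is to obtain the desired codes as scaled algebraic geometry codes on the Hermitian curve ${\cal H}_0:\ y^{q_0}+y=x^{q_0+1}$ over $\F_q=\F_{q_0^2}$; this curve has genus $g=\frac{q_0(q_0-1)}{2}$ and $q_0^3+1$ rational places, namely a unique place $O$ at infinity with $(x)_\infty=q_0O$ and, above each $\alpha\in\F_q$, exactly $q_0$ rational places $P_\alpha^{(1)},\dots,P_\alpha^{(q_0)}$, at each of which $x-\alpha$ is a local parameter. First I would fix the evaluation support: since $s\mid q-1$, the set $U=\{\alpha\in\F_q:\alpha^s=1\}$ has $|U|=s$ and $h(x):=\prod_{\alpha\in U}(x-\alpha)=x^s-1$. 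Put $D_0=(h)_0=\sum_{\alpha\in U}\sum_{j=1}^{q_0}P_\alpha^{(j)}$, of degree $sq_0=N+1$, and $\omega=\frac{dx}{h(x)}$. From the canonical divisor $(dx)=(2g-2)O$ on ${\cal H}_0$ and from $(h)=D_0-(sq_0)O$ one gets $(\omega)=(2g-2+sq_0)O-D_0=(2g-1+N)O-D_0$, exactly as in the proof of Theorem~\ref{thm:hermitian-1}.

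The next step, and the one where the hypotheses $q=q_0^2$ and $s\mid\frac{q-1}{2}$ are really used, is the square-residue condition. Because $x-\alpha$ is a uniformizer at each $P_\alpha^{(j)}$, the residue is $\mathrm{Res}_{P_\alpha^{(j)}}(\omega)=1/h'(\alpha)=\alpha/s$ (using $\alpha^s=1$, so that $h'(\alpha)=s\alpha^{s-1}=s\alpha^{-1}$). I would then verify that $\alpha/s$ is a nonzero square in $\F_q$: from $s\mid\frac{q-1}{2}$ we get $\alpha^{(q-1)/2}=1$, so $\alpha$ is a square by Euler's criterion; and $s$ is prime to $p$ (since $\gcd(p,q-1)=1$), hence is a nonzero element of $\F_p\subseteq\F_{q_0}$, and every nonzero element of $\F_{q_0}$ is a square in $\F_{q_0^2}$ because $q_0-1\mid\frac{q_0^2-1}{2}$. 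Hence there is ${\bf a}=(a_i)\in(\F_q^*)^N$ with $\mathrm{Res}_{P_i}(\omega)=a_i^2$, which is condition~(3) of Lemma~\ref{lem:sufficient2}.

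Next I would choose any degree-one place $P$ among the $P_\alpha^{(j)}$ with $\alpha\in U$, set $D=D_0-P$ (of degree $N$), $G=(r+g+1)O+rP$, and $H=D-G+(\omega)=(g+N-r-2)O-(r+1)P$; then $(\omega)=G+H-D$ and $\mathrm{Supp}(G),\mathrm{Supp}(H)\subseteq\{O,P\}$ are disjoint from $\mathrm{Supp}(D)$. Since $r\ge1$ and $2r+3\le N$ (which follows from $r<\frac{N-g-1}{2}$ together with $g\ge3$), we get $G\wedge H=(r+g+1)O-(r+1)P$, of degree exactly $g$. The key point is that, as in the proof of Theorem~\ref{thm:hermitian-1}, this divisor is non-special; this gives $\ell(G\wedge H)=1$ and, since $G\vee H-D=(\omega)-G\wedge H$ is linearly equivalent to $W-G\wedge H$ for a canonical divisor $W$, also $\ell(G\vee H-D)=i(G\wedge H)=0$, which is exactly what the proof of Lemma~\ref{lem:sufficient2} needs. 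Hence ${\bf a}\cdot C_{\cal L}(D,G)$ is a $1$-$dim$ hull code. Finally, the range $\frac{g-3}{2}<r<\frac{N-g-1}{2}$ is precisely what makes $2g-2<\deg G<N$ and $2g-2<\deg H<N$, so Theorem~\ref{thm:distance} gives it the parameters $[N,\deg G-g+1,\ge N-\deg G]=[N,2r+2,\ge N-(2r+1+g)]$ and its dual the parameters $[N,N-2r-2,\ge 2r+3-g]$.

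The hard part is establishing the non-speciality of $G\wedge H=(r+g+1)O-(r+1)P$, a divisor of degree exactly $g$ supported at the two rational places $O$ and $P$. Using $W\sim(2g-2)O$ this amounts to proving $\ell\big((g-3-r)O+(r+1)P\big)=0$, which I would attack via the Weierstrass semigroup $\langle q_0,q_0+1\rangle$ at $O$ (and, by transitivity of $\mathrm{Aut}({\cal H}_0)$ on the rational places, at $P$ as well) together with Riemann--Roch, treating separately the sub-ranges of $r$ according to the sign of $g-3-r$; this is the same technical point that underlies Theorem~\ref{thm:hermitian-1} and Corollaries~\ref{cor:hermitian-2}--\ref{cor:hermitian-3}. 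A secondary point to be careful about is the square-residue verification of the second step, which genuinely relies on $q$ being a square so that the integer $s$ reduces to a square in $\F_q$.
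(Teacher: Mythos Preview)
Your proposal is correct and follows essentially the same approach as the paper: the same Hermitian curve, the same choice of $U=\{\alpha:\alpha^s=1\}$, $h(x)=x^s-1$, $\omega=dx/h(x)$, $D=(h)_0-P$, $G=(r+g+1)O+rP$, and the same reduction to Lemma~\ref{lem:sufficient2} via the non-speciality of $G\wedge H$. Your treatment is in fact more careful than the paper's own proof, which simply asserts that $G\wedge H=(r+g+1)O-(r+1)P$ is ``non-principal and thus non-special'' and that $h'(\alpha)$ is a square, whereas you spell out why the residues $\alpha/s$ are squares (using $s\mid\frac{q-1}{2}$ and $s\in\F_{q_0}^*\subset(\F_q^*)^2$) and correctly flag the non-speciality of $G\wedge H$ as the substantive point requiring a Weierstrass-semigroup argument.
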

\begin{proof} Consider the Hermitian curve defined by (\ref{eq:hermitian}). Take $U=\{\alpha\in \F_{q}|\alpha^s=1\}$ and $h(x)=x^s-1$. Any element in $U$ is a square in $\F_{q}$, and thus $h'(\alpha)$ is a square for any $\alpha\in U$. Hence, the residues $\text{Res}_{P_\alpha^{(i)}}(\omega)=\frac{1}{h'(\alpha)}$ are square elements in $\F_{q}$ for $1\le i\le q$. Set $D=(h)_0-P$ with $\deg P=1$, $G=(r+1+g)O+rP$, and $\omega=\frac{dx}{h(x)}$. The rest follows from the same reasoning as in the proof of Theorem \ref{thm:hermitian-1}.
\end{proof}

\begin{cor}\label{cor:hermitian-5} Let $q_0=p^m$ and $q=q_0^2$. Assume that $r|2m$. Set $N=p^rq_0-1$ and $g=\frac{q_0(q_0-1)}{2}$.Then, there exists a $1$-$dim$ hull code with parameters $[N,2r+2),\ge N-(2r+1+g) ]$, and its dual has parameters 
$[N,N-2r-2),\ge 2r+3-g]$ for $ \frac{g-3}{2} < r <\frac{N-g-1}{2} $.

\end{cor}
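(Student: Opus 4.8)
The plan is to apply Lemma~\ref{lem:sufficient2} to the Hermitian curve $\mathcal{H}:\ y^{q_0}+y=x^{q_0+1}$ over $\F_q=\F_{q_0^2}$, in the same spirit as Corollary~\ref{cor:hermitian-4}, but choosing the rational divisor to be cut out by the vanishing polynomial of a subfield so that \emph{every} residue of the chosen differential equals $-1$. Write $t$ for the exponent appearing in the hypothesis (so $t\mid 2m$ and $N=p^{t}q_0-1$); since $\F_q=\F_{p^{2m}}$ contains $\F_{p^{t}}$, put $U=\F_{p^{t}}$, $h(x)=\prod_{\alpha\in U}(x-\alpha)=x^{p^{t}}-x$, $D_0=(h)_0$ and $\omega=dx/h(x)$. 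The curve $\mathcal{H}$ has genus $g=\tfrac{q_0(q_0-1)}{2}$ and $1+q_0^{3}$ rational places, with exactly $q_0$ of them over each $\alpha\in\F_{q_0^2}$, so $\deg D_0=p^{t}q_0=N+1$; fix one such place $P$ (of degree $1$) and set $D=D_0-P$, $G=(r+1+g)O+rP$ and $H=D-G+(\omega)$. By construction $\operatorname{supp}(G),\operatorname{supp}(H)\subseteq\{O,P\}$ are disjoint from $\operatorname{supp}(D)$, and $v_{P_i}(\omega)=-1$ at every place of $D$.

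Next I would carry out the divisor bookkeeping. On $\mathcal{H}$ one has $(dx)=(2g-2)O$, and $x$ has its only pole, of order $q_0$, at $O$, so $(h)=D_0-p^{t}q_0\,O$ and hence $(\omega)=(dx)-(h)=(2g-1+N)O-D-P$; consequently $H=(g+N-r-2)O-(r+1)P$. The hypothesis $r<\tfrac{N-g-1}{2}$ forces $r\le\tfrac{N-3}{2}$, so $r+1+g\le g+N-r-2$ and therefore
\[
G\wedge H=(r+1+g)O-(r+1)P,\qquad \deg(G\wedge H)=g,\qquad G\vee H-D=(\omega)-(G\wedge H).
\]
For the residue condition, $h'(x)=p^{t}x^{p^{t}-1}-1=-1$ in $\F_q$ (characteristic $p$ and $t\ge 1$), so at each place $Q$ lying over a point of $U$ one gets $\operatorname{Res}_{Q}(\omega)=1/h'(x(Q))=-1$; since $q=q_0^{2}$ is an even power of $p$, $-1$ is a square in $\F_q$, and we may choose $\mathbf{a}\in(\F_q^{*})^{N}$ with $\operatorname{Res}_{P_i}(\omega)=a_i^{2}$. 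This is the one substantive difference from Corollary~\ref{cor:hermitian-4}: replacing a cyclotomic $h$ by the subfield polynomial $x^{p^{t}}-x$ is exactly what makes the residues constant and square.

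It then remains to feed everything into Lemma~\ref{lem:sufficient2}. The divisor $G\wedge H=(r+1+g)O-(r+1)P$ has degree $g>0$, hence is non-principal, and arguing as in the proof of Theorem~\ref{thm:hermitian-1} one concludes $i(G\wedge H)=0$, i.e.\ $\deg(G\wedge H)=g-i(G\wedge H)$; thus Lemma~\ref{lem:sufficient2} yields that $\mathbf{a}\cdot C_{\mathcal{L}}(D,G)$ is a $1$-$dim$ hull code with dual $\mathbf{a}\cdot C_{\mathcal{L}}(D,H)$. The parameters follow from Theorem~\ref{thm:distance}, since the inequalities $\tfrac{g-3}{2}<r<\tfrac{N-g-1}{2}$ are precisely $2g-2<\deg G<N$ and also give $2g-2<\deg H<N$: one gets $\dim C_{\mathcal{L}}(D,G)=\deg G-g+1=2r+2$ with $d\ge N-\deg G=N-(2r+1+g)$, and $\dim C_{\mathcal{L}}(D,H)=\deg H-g+1=N-2r-2$ with $d\ge N-\deg H=2r+3-g$.

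The step I expect to be the main obstacle is the non-speciality of $G\wedge H=(r+1+g)O-(r+1)P$: for a divisor of degree $g>1$ the implication ``non-principal $\Rightarrow$ non-special'' is not valid in general, so one really must verify $\ell\bigl((r+1+g)O-(r+1)P\bigr)=1$, and strictly speaking one should also replace condition~(1) of Lemma~\ref{lem:sufficient2} by $\ell(G\vee H-D)=0$ here, since $\deg(G\vee H-D)=2g-2-g=g-2$ need not be negative. The cleanest route I see is to use that $(\omega)$ is a canonical divisor with $(\omega)-(G\vee H-D)=G\wedge H$, so $i(G\vee H-D)=\ell(G\wedge H)$ and both required vanishings collapse to the single statement $\ell(G\wedge H)=1$; this can then be attacked through the Weierstrass semigroup at $O$ on $\mathcal{H}$, which is generated by $q_0$ and $q_0+1$ and controls $\ell(kO)$ together with the vanishing order forced at $P$ for $r$ in the stated range.
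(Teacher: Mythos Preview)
Your approach is essentially identical to the paper's: the same subfield choice $U=\{\alpha\in\F_q:\alpha^{p^t}=\alpha\}=\F_{p^t}$, the same $h(x)=x^{p^t}-x$ with $h'(\alpha)=-1$ a square in $\F_q=\F_{q_0^2}$, the same $D=(h)_0-P$, $G=(r+1+g)O+rP$, $\omega=dx/h(x)$, and the same reduction to the argument of Theorem~\ref{thm:hermitian-1}. The concern you flag in your final paragraph---that ``non-principal $\Rightarrow$ non-special'' is not valid for a divisor of degree $g>1$, and that condition~(1) of Lemma~\ref{lem:sufficient2} is not literally met since $\deg(G\vee H-D)=g-2$---is entirely legitimate, and the paper's own proof of Theorem~\ref{thm:hermitian-1} makes exactly the same unjustified step; so you have reproduced the paper's argument faithfully, gap included, and your suggested repair via $i(G\vee H-D)=\ell(G\wedge H)$ together with a Weierstrass-semigroup computation at $O$ is the right direction to actually close it.
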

\begin{proof} Consider the Hermitian curve defined by (\ref{eq:hermitian}). Take $U=\{\alpha\in \F_{q}|\alpha^{p^r}=\alpha \}$ and $h(x)=\prod\limits_{\alpha\in U}(x-\alpha)$. Then, $h'(\alpha)$ is a square for any $\alpha\in U$. Hence, the residues $\text{Res}_{P_\alpha^{(i)}}(\omega)=\frac{1}{h'(\alpha)}$ are square elements in $\F_{q}$ for $1\le i\le q_0$. Set $D=(h)_0-P$ with $\deg P=1$, $G=(r+1+g)O+rP$ and $\omega=\frac{dx}{h(x)}$. The rest follows from the same reasoning as in the proof of Theorem \ref{thm:hermitian-1}.
\end{proof}

\begin{cor}$\label{cor}\text{ Let }q_0=p^m$ be an odd prime power  and $q=q_0^2$.
Assume that $n|(q-1)$ and $n$ even. Set $N=(2n+1)q_0-1$ and $g=\frac{q_0(q_0-1)}{2}$. Then, there exists an $1$-$dim$ hull  code with parameters $[N,2r+2,\ge N- (2r+g+1)]$, and its dual has parameters $[N,N-2r-2,\ge 2r+3-g]$ for $ \frac{g-3}{2} < r <\frac{N-g-1}{2} $.
\end{cor}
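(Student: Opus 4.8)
The plan is to realize the asserted codes as $\mathbf a\cdot C_{\mathcal L}(D,G)$ on the Hermitian curve ${\cal H}\colon y^{q_0}+y=x^{q_0+1}$ over $\F_q$, following the proofs of Corollaries \ref{cor:hermitian-4} and \ref{cor:hermitian-5} and invoking Theorem \ref{thm:hermitian-1}. Recall ${\cal H}$ has genus $g=\frac{q_0(q_0-1)}{2}$ and $1+q_0^3$ rational points, and that $x$ has pole divisor $q_0O$ and is unramified outside $O$, so every $\alpha\in\F_q$ is the $x$-coordinate of exactly $q_0$ degree-one places, with $x-\alpha$ a local parameter at each. The whole problem reduces to choosing a set $U\subset\F_q$ with $|U|=2n+1$ so that, with $h(x)=\prod_{\alpha\in U}(x-\alpha)$ and $\omega=\frac{dx}{h(x)}$, every residue of $\omega$ at a place over $\alpha\in U$, which equals $1/h'(\alpha)$ (the same for all $q_0$ such places), is a nonzero square in $\F_q$; Theorem \ref{thm:hermitian-1} then applies with $D_0=(h)_0$ (of degree $(2n+1)q_0$), $D=D_0-P$ for one degree-one place $P$ over some $\alpha_0\in U$, $G=(r+g+1)O+rP$, and $H=D-G+(\omega)$.

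Granting the residue condition, the rest is formal: $\deg D=N=(2n+1)q_0-1$ and $\deg G=2r+g+1$, so $k=\deg G-g+1=2r+2$ and $d\ge N-\deg G=N-(2r+g+1)$ by Theorem \ref{thm:distance}; the dual is the differential code of Lemma \ref{lem:dual2}, with parameters $[N,N-2r-2,\ge 2r+3-g]$; and the admissible range $\frac{g-3}{2}<r<\frac{N-g-1}{2}$ is exactly the inequality $2g-2<\deg G<N$. So no work is needed here beyond what is already carried out for Theorem \ref{thm:hermitian-1}.

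The crux is the choice of $U$ making all $h'(\alpha)$ squares. I would use that $q=q_0^2$ with $q_0$ odd, so $\F_{q_0}^*\subseteq(\F_q^*)^2$; hence $-1$, every integer prime to $p$, and (because $n$ is even) every $n$-th power of $\F_q^*$ is a square in $\F_q$. Using $n\mid q-1$, take $U=\{0\}\cup\mu_{2n}$, so that $h(x)=x^{2n+1}-x$ has simple roots ($p\nmid 2n$ since $p\mid q_0$ while $n\mid q_0^2-1$) and $h'(x)=(2n+1)x^{2n}-1$ equals $-1$ at $0$ and $2n$ at every $2n$-th root of unity — both squares; if one prefers to use only $n\mid q-1$, take $h(x)=x(x^n-1)(x^n-b)$ with $b\in(\F_q^*)^n\setminus\{1\}$ (so $U$ is $\{0\}$ together with the $n$-th roots of $1$ and of $b$, $2n+1$ elements in all), where $h'$ equals $b$ at $0$, $n(1-b)$ on the $n$-th roots of $1$, and $nb(b-1)$ on those of $b$; since $b$ (an $n$-th power, $n$ even), $n$, and $-1$ are squares, one then only needs $1-b$ to be a square. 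Producing such a $b$ — equivalently, guaranteeing for every admissible $n$ a size-$(2n+1)$ set $U$ with all residues square — is, I expect, the only genuine obstacle; once $U$ is fixed the proof is a line-by-line copy of those of Theorem \ref{thm:hermitian-1} and Corollaries \ref{cor:hermitian-4}--\ref{cor:hermitian-5}.
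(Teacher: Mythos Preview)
Your plan coincides with the paper's: both work on the Hermitian curve ${\cal H}\colon y^{q_0}+y=x^{q_0+1}$ over $\F_q$, take $U=\{0\}\cup U_n\cup\alpha_1U_n$ (so $h(x)=x(x^n-1)(x^n-\alpha_1^n)$; your $b$ is the paper's $\alpha_1^n$), set $D=(h)_0-P$ with $\deg P=1$, $G=(r+g+1)O+rP$, $\omega=dx/h(x)$, and finish via Lemma~\ref{lem:sufficient2} exactly as in the proof of Theorem~\ref{thm:hermitian-1}. Your evaluation of $h'$ on the three pieces of $U$ and the reduction to ``$b$, $n$, $-1$, and $1-b$ are all squares in $\F_q$'' matches the paper's computation. (Your simpler alternative $U=\{0\}\cup U_{2n}$ requires $2n\mid q-1$, which is strictly stronger than the stated hypothesis $n\mid q-1$---for instance $q_0=3$, $q=9$, $n=8$---so it does not cover every case.)

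The gap is exactly the one you flag: you do not establish the existence of an $n$-th power $b\ne 0,1$ with $1-b$ a nonzero square in $\F_q$. This is the only additional ingredient the paper supplies. The paper argues via the conic $a^2+b^2=1$ over $\F_q$: it has $q-5$ solutions $(a_i,b_i)$ with $a_i\ne 0,\pm 1$ and $b_i\ne 0$, and one takes $\alpha_1=\sqrt[n]{a_i^2}$ for such a solution, so that $\alpha_1^n=a_i^2$ is a square, $\alpha_1^n\ne 1$, and $1-\alpha_1^n=1-a_i^2=b_i^2$ is a nonzero square. With this $\alpha_1$ (equivalently, your $b=\alpha_1^n$) in hand, the rest of your argument is complete and identical to the paper's.
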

\begin{proof} Consider the Hermitian curve defined by (\ref{eq:hermitian}).
Let $U_{n}=\{\alpha\in \F_q^*| \alpha^{n}=1\}$. Take $\alpha_1\in \F_q^*$ such that $\alpha_1^{n}-1$ is a nonzero square in $\F_q$ (such $\alpha_1$ does exist). 
Consider the following quadratic equation 
\begin{equation}\label{eq:quadratic}
a^2+b^2=1.
\end{equation}
For any $q$,  (\ref{eq:quadratic}) has $T=(q-1)-4$ solutions, say $(a_1,\pm b_1),\hdots,(a_{\frac{T}{2}},\pm b_{\frac{T}{2}})$, with $(a_i,b_i)\not=(0,\pm 1),(\pm 1,0)$. Take $\alpha_1=\sqrt[n]{a_i^2}$ for some $1\le i\le t,(t< {\frac{T}{2}}).$ Then, we get $1-\alpha_1^n=1-a_i^2=b_i^2$  which are squares in $\F_q^*.$ 

Put $U=U_{n}\cup \alpha_1U_{n}\cup\{0\}$, and write
$$h(x)=\prod\limits_{\alpha\in U}(x-\alpha).$$
Then, we have that $$h'(x)=((n+1)x^{n}-1)(x^{n}-\alpha_1^{n})+nx^{n}(x^{n}-1).$$

We have that $-1, n$ are squares in $\F_q$. Moreover, since $\alpha_1^{n}$ and $(\alpha_1^{n}-1)$ are squares in $\F_q^*$, it implies that $h'(\alpha)$ is a square in $\F_q^*$ for any $\alpha \in U.$ Set $D=(h)_0-P$ with $\deg P=1$, $G=(r+1+g)O+rP$ and $\omega=\frac{dx}{h(x)}$. The rest follows from the same reasoning as in the proof of Theorem \ref{thm:hermitian-1}.
\end{proof}

\section{Application to constructions of EAQECCs}\label{section:application}
In this section, we construct entanglement-assisted quantum error-correcting codes, in short EAQECCs.
Firstly, let us recall some basic notions of quantum codes. Let $\mathbb{C}$ be the complex field and $\mathbb{C}^{q}$ be the $q$-dimensional Hilbert space over $\mathbb{C}$. 
A qubit is a non-zero vector of $\mathbb{C}^{q}$. With a basis $\{|a\rangle : a \in \mathbb{F}_{q}\}$ of $\mathbb{C}^{q}$, any qubit $|v\rangle$ can be expressed as
$|v\rangle=\sum_{a \in \mathbb{F}_{q}}v_{a}|a\rangle,$
where $v_{a} \in \mathbb{C}.$ An $n$-qubit is a joint state of $n$ qubits in the $q^{n}$-dimensional Hilbert space $(\mathbb{C}^{q})^{\bigotimes n}\cong \mathbb{C}^{q^{n}}$. Similarly, an $n$-qubit can be written as
$|\textbf{v}\rangle=\sum_{\textbf{a} \in \mathbb{F}^{n}_{q}}v_{\textbf{a}}|\textbf{a}\rangle,$
where $\{|\textbf{a}\rangle =|a_{1}\rangle\bigotimes|a_{2}\rangle\bigotimes\cdots\bigotimes|a_{n}\rangle| (a_{1}, a_{2}, \ldots, a_{n}) \in \mathbb{F}^{n}_{q}\}$ is a basis of $\mathbb{C}^{q^{n}}$ and $v_{\textbf{a}} \in \mathbb{C}$. 
A quantum code of length $n$ is a subspace of $\mathbb{C}^{q^{n}}$. 
A quantum error $E$ is defined with these following rules. The actions of $X(\textbf{a})$ and $Z(\textbf{b})$ on the basis $|\textbf{v}\rangle \in \mathbb{C}^{q^{n}}$ ($\textbf{v} \in \mathbb{F}^{n}_{q}$) are defined as
\[X(\textbf{a})|\textbf{v}\rangle=|\textbf{v}+\textbf{a}\rangle
\textnormal{ and }
Z(\textbf{b})|\textbf{v}\rangle=\zeta_{p}^{tr(\langle \textbf{v}, \textbf{b} \rangle_{E})}|\textbf{v}\rangle,\]
respectively, where $tr(\cdot)$ is the trace function from $\mathbb{F}_{q}$ to $\mathbb{F}_{p}$. 
For any quantum error $E=\zeta_{p}^{i}X(\textbf{a})Z(\textbf{b})$, where $\zeta_{p}$ is a complex primitive $p$-th root of unity, we define the quantum weight of $E$ by
${\bf wt}_{Q}(E)=\sharp\{i | (a_{i}, b_{i}) \neq (0,0)\}.$
We denote by $[[n,k,d]]_{q}$ a $q$-ary quantum code of length $n$, dimension $k$ and minimum distance $d$.

We use $[[n, k, d; c]]_{q}$ to denote a $q$-ary $[[n, k, d]]_{q}$ quantum code that utilizes $c$ pre-shared entanglement pairs. 
For $c=0$, an $[[n, k, d; c]]_{q}$ EAQECC is equivalent to a quantum stabilizer code \cite{AK01}. 

The following lemma \cite{LA18} gives the constraints among the parameters of an EAQECC.
\begin{prop}(Quantum Singleton Bound )\label{lem4.1}
  For any $[[n, k, d; c]]_{q}$-EAQECC, if $d \leq \frac{n+2}{2}$, we have
  \[ 2(d-1)\le n+c-k .\]
\end{prop}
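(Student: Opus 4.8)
The plan is to prove this standard bound by the quantum-information argument used for Singleton-type bounds, adapted so that the $c$ pre-shared maximally entangled pairs appear as an explicit side system held by the decoder; the estimate then reduces to subadditivity of the von Neumann entropy together with the converse of quantum error correction. Throughout I take the von Neumann entropy $S(\cdot)$ with logarithms to base $q$, so a maximally mixed qudit in $\CC^q$ has entropy $1$, and for a subset $T$ of positions I write $A_T$ for the subsystem consisting of the physical qudits indexed by $T$.

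\emph{Setup.} Feed the encoder a maximally mixed logical input purified by a reference register $R$ of dimension $q^{k}$, and write the $c$ pre-shared pairs as halves $T_A$ (given to the encoder) and $T_B$ (kept by the decoder) of maximally entangled states. Applying the encoding isometry to the logical register and $T_A$ produces a pure state $|\Psi\rangle$ on $R\otimes T_B\otimes A_1\otimes\cdots\otimes A_n$ with $S(R)=k$ and $S(T_B)=c$. The essential bookkeeping point, and the one thing to get exactly right, is that $T_A$ is absorbed by the encoding isometry into the physical register $A_1\cdots A_n$, whereas $T_B$ remains with the decoder and must be counted on the \emph{retained} side in every erasure scenario.

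\emph{Using the distance.} Since $d\le (n+2)/2$ we have $2(d-1)\le n$, so we may choose two \emph{disjoint} index sets $S_1,S_2\subseteq\{1,\dots,n\}$ with $|S_1|=|S_2|=d-1$; put $S_3=\{1,\dots,n\}\setminus(S_1\cup S_2)$, so $|S_3|=n-2(d-1)\ge 0$. A distance-$d$ code corrects any $d-1$ located errors, so the decoder --- holding $T_B$ together with the unerased qudits --- recovers $R$ perfectly both after an erasure on $S_1$ and after an erasure on $S_2$. By the converse of quantum error correction (if $R$ is perfectly recoverable from a system $Y$ then the coherent information $S(Y)-S(RY)$ equals $S(R)$), this gives
\[
S(T_BA_{S_2}A_{S_3})-S(RT_BA_{S_2}A_{S_3})=k=S(T_BA_{S_1}A_{S_3})-S(RT_BA_{S_1}A_{S_3}).
\]
Because $|\Psi\rangle$ is pure on $R\,T_B\,A_1\cdots A_n$, the complementary-subsystem identities $S(RT_BA_{S_2}A_{S_3})=S(A_{S_1})$ and $S(RT_BA_{S_1}A_{S_3})=S(A_{S_2})$ turn these into $S(T_BA_{S_2}A_{S_3})=k+S(A_{S_1})$ and $S(T_BA_{S_1}A_{S_3})=k+S(A_{S_2})$.

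\emph{Conclusion.} Adding the two identities and applying subadditivity to each left-hand side, the terms $S(A_{S_1})$ and $S(A_{S_2})$ cancel, leaving $2k\le 2S(T_B)+2S(A_{S_3})$; bounding $S(T_B)\le c$ and $S(A_{S_3})\le|S_3|=n-2(d-1)$ yields $k\le c+n-2(d-1)$, i.e.\ $2(d-1)\le n+c-k$, as claimed (and $c=0$ recovers the usual quantum Singleton bound $2(d-1)\le n-k$). The main obstacle is not any single computation but the careful accounting of which subsystem is protected and which is retained in each of the two erasure patterns, and phrasing the coherent-information converse in exactly the form used above; the hypothesis $d\le(n+2)/2$ enters only to guarantee that $S_1$ and $S_2$ can be taken disjoint. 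An alternative route is to pass through a classical $\F_q$-linear code associated with the EAQECC and invoke the classical Singleton bound, but the entropic argument above is self-contained.
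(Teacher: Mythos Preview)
Your entropic argument is correct and is essentially the standard information-theoretic proof of the entanglement-assisted quantum Singleton bound: purify the input, use erasure-correctability on two disjoint $(d-1)$-sets to extract two coherent-information identities, pass to complementary entropies via purity, add, and apply subadditivity. The bookkeeping of which side holds $T_B$ is done correctly, and the hypothesis $d\le(n+2)/2$ is used exactly where it should be, to guarantee disjointness of $S_1$ and $S_2$.

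However, there is nothing to compare against: the paper does not prove this proposition. It is stated with a citation to Lai and Ashikhmin \cite{LA18} and used as a black box to define when an EAQECC is called MDS. So your proposal supplies a proof where the paper offers none; the route you take is the one underlying the cited reference, and it is sound.
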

When the bound meets with equality, the EAQECC is called MDS .

Thanks to Galindo \emph{et al.} \cite{GalHerMatRua}, a $q$-ary EAQECC can be constructed from a classical $q$-ary linear code as follows.
\begin{lem}\cite[Theorem 4]{GalHerMatRua} \label{lem:GalHerMatRua} 
Let $C_1$ and $C_2$ be two linear codes over $\F_q$ with parameters $[n, k_1, d_1]_q$
and $[n, k_2, d_2]_q$ and parity check matrices $H_1$ and $H_2$, respectively. Then, there exists an  EAQECC with parameters $[[n, k_1 + k_2 -n + c, d; c]]$, where 
$d=min\{ d(C_1  \backslash (C_1\cap C_2^\perp)), d(C_2 \backslash (C_1^\perp \cap  C_2))\}$, and
$c = rank(HH^\top)=dim(C_1^\perp)-dim(C_1\perp \cap C_2)$
is the number of required maximally entangled states. Moreover, if there exists a linear code $C$ with parameters $[n,k,d]_q$ such that $dim(C\cap C^\perp)=1$, then there exists an EAQECC with parameters $[[n,k-1,d;n-k-1]]_q.$
\end{lem}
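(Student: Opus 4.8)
The first assertion is quoted verbatim from \cite[Theorem 4]{GalHerMatRua}, so a proof is needed only for the ``Moreover'' clause, and the plan is to obtain it as a direct specialization of that theorem. First I would apply the construction to the pair $C_1=C_2=C$, choosing $H_1=H_2=H$ to be a single fixed parity check matrix of $C$. Then $k_1=k_2=k$, while $C_1\cap C_2^\perp=C_1^\perp\cap C_2=C\cap C^\perp=hull(C)$, which is $1$-dimensional by hypothesis.

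It then remains to read off the three parameters from the general formula. For the entanglement cost, the formula gives $c=\dim(C_1^\perp)-\dim(C_1^\perp\cap C_2)=\dim(C^\perp)-\dim(C^\perp\cap C)$ (equivalently $c=rank(HH^\top)$, since the rows of $H$ span $C^\perp$); because $C^\perp\cap C=hull(C)$ has dimension $1$, this is $(n-k)-1=n-k-1$. For the dimension of the EAQECC, $k_1+k_2-n+c=2k-n+(n-k-1)=k-1$. For the minimum distance, with $C_1=C_2=C$ the expression $\min\{d(C\backslash(C\cap C^\perp)),\,d(C\backslash(C^\perp\cap C))\}$ reduces to $d\bigl(C\backslash hull(C)\bigr)$, and every vector it counts is a nonzero codeword of $C$, hence of weight at least $d(C)=d$; thus the distance of the constructed code is at least $d$. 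Assembling these three facts yields an EAQECC with parameters $[[n,k-1,d;n-k-1]]_q$, with $d$ understood as a lower bound, which is the claim.

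I do not anticipate a genuine obstacle once \cite[Theorem 4]{GalHerMatRua} has been invoked; the only points requiring care are the equality $hull(C^\perp)=C^\perp\cap C=hull(C)$ used in computing $c$, and the tacit hypothesis $k\ge 2$, needed both so that $C\backslash hull(C)$ is nonempty and so that the resulting code has positive dimension. As a sanity check one may note that, since the EAQECC has dimension $k-1$ and $c=n-k-1$, the quantum Singleton bound of Proposition \ref{lem4.1} becomes $2(d-1)\le n+(n-k-1)-(k-1)=2(n-k)$, i.e. $d\le n-k+1$, which is merely the classical Singleton bound for $C$ and hence holds automatically.
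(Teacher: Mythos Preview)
The paper does not supply its own proof of this lemma; it is stated as a citation of \cite[Theorem~4]{GalHerMatRua} and used directly thereafter. Your derivation of the ``Moreover'' clause by specializing the general construction to $C_1=C_2=C$ is correct and is precisely the intended reading, so there is nothing to compare against.
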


We illustrate constructions of new EAQECCs by the following examples.
\begin{exam}Consider the construction in Corollary \ref{cor:newexplicit2}. For $q=8$ and $w$ primitive element of $\F_8$, the curve defined by 

$$y^2+y=x^3+x+1$$

has 13 rational points.
Take $r=1,  O=(0 : 1 : 0),
    P_0^{(1)}=(w : 0 : 1),
    P_0^{(2)}=(w : 1 : 1)$, $D=\sum\limits_{i=1}^5(P_i^{(1)}+P_i^{(2)})$ and $G=2O+P_{0}^{(1)}$. Then, the code $C_{\cal L}(D,G)$ is an almost $[10, 3, 7]$ linear code over $\F_8$ with the following generator matrix:
$$
\left(
\begin{array}{cccccccccc}
 1  & 0  & 0& w^4 &w^5& w^3& w^2& w^3& w^3& w^4\\
 0  & 1  & 0 &w^4& w^6&   1   &w& w^6& w^4& w^3\\
 0 &  0&   1  & 1& w^3 &w^3 &w^5& w^5 &w^2 &w^2
\end{array}
\right).
$$
It is easy to check that ${\bf a} \cdot C_{\cal L}(D,G)$, where ${\bf a}=(
    w^6,
    w^6,
    w^2,
    w^2,
    w^2,
    w^2,
    w^4,
    w^4,
    w^3,
    w^3)
$,
is a $1$-$dim$ hull code, and from Lemma \ref{lem:GalHerMatRua} it gives rise to a new EAQECC with parameters $[[ 10,2,7;6 ]]_{8}$. Some more new EAQECCs over $\F_{16}$ are given in Table \ref{table:hyper-elliptic}.
\end{exam}

\begin{exam}

Consider the construction in Corollary \ref{cor:hyper-elliptic}. For $q=2^4$ and $w$ primitive element of $\F_{16}$, the curve defined by (\ref{eq:hyper-elliptic}) has $33$ rational points. Take 
$r=1$, $O=(0 : 1 : 0),$
   $P_0^{(1)}=(0 : 0 : 1),$
   $P_0^{(2)}=((0 : 1 : 1),$ $D=P_0^{(2)}+\sum\limits_{i=4}^{15}(P_{\alpha_i}^{(1)}+P_{\alpha_i}^{(2)})$ with $\alpha_i=w^i$ for $4\le i\le 15$ and $G=4O+P_0^{(1)}$.
   
 Then, the code $C_{\cal L}(D,G)$ is a $[25, 4, 20]$ linear code over $\F_{16}$ with the following generator matrix:
  { \footnotesize
$$
\left(
\begin{array}{llllllllllllllllllllllll}
1 0 0 0 0 w^{12} w^{12} w^7 w^7 1 1 w^{11} w^{11} w^8 w^8 w^7 w^7 w^{11} w^{11} w^9 w^9
 w^{12} w^{12} w^2 w^2\\
0 1 0 0 w^{14} w^{14} w^2 w^7 w^2 w^9 w^2 1 w^5 w^3 w w w^{10} w^9 1 w^8 w^{14} w^3
 w^{11} w^{10} w^2\\
0 0 1 0 w^{14} 1 w^6 w^{11} 1 w^7 w^8 w^{14} w^{11} w^8 w^{12} w^5 w^4 w^4 w^3 0 w^6 1
 w^{10} w^9 w^{14}\\
0 0 0 1 1 w^5 w^5 w^{12} w^{12} 1 1 w^{10} w^{10} w^{14} w^{14} w^{11} w^{11} w^5 w^5 w^{11} w^{11}
 w^{10} w^{10} w^3 w^3\\
\end{array}
\right).
$$
}
It is easy to check that ${\bf a} \cdot C_{\cal L}(D,G)$, where
${\bf a}=(
    w^3  
    w  
    w  
    w^7  
    w^7  
    w^8  
    w^8  
    1  
    1  
    w^4  
    w^4  
    1  
    1  
    w^{12}  
    w^{12}  
    w^{10}  
    w^{10}  
    1  
    1  
    w^2  
    w^2  
    w^{10}  
    w^{10}  
    w^{13}  
    w^{13}
)$, 
is a $1$-$dim$ hull code, and from Lemma \ref{lem:GalHerMatRua} it gives rise to a new EAQECC with parameters $[[ 25,3,20;20 ]]_{16}$. Some more new EAQECCs over $\F_{16}$ are given in Table \ref{table:hyper-elliptic}.
\end{exam}

\begin{exam} Consider the construction in Corollary \ref{cor:hermitian-2}. For $q_0=3$ and $w$ primitive element of $\F_9$, the curve defined by (\ref{eq:hermitian-2}) has $16$ rational points. Set 
$
   O=(1: 0 : 0),
   P_0^{(1)}=(0: 0 : 1),
   P_0^{(2)}=(0: w^2 : 1),
   P_0^{(3)}=(0: w^6 : 1)
$

Take $r=1, P=P_0^{(1)}$, $D=P_0^{(2)}+P_0^{(3)})+\sum\limits_{i=1}^4(P_i^{(1)}+P_i^{(2)}+P_i^{(3)})$ and $G=3O+P_{0}^{(1)}$. Then, the code $C_{\cal L}(D,G)$ is an almost $[14, 4, 10]$ linear code over $\F_9$ with the following generator matrix:
$$
\left(
\begin{array}{cccccccccccccc}
 1  & 0&    0  &  0 & w^5 & w^7 &   0&  w^7&  w^7&  w^2 & w^3&  w^2 & w^3&  w^6\\
 0 &   1 &   0 &   0 & w^5 &   2  &  w&  w^2  &  w & w^2 &   1  &  1&    w  &  0\\
 0&    0  &  1  &  0 & w^6 &   2 & w^7 & w^5  &  0&  w^7 &   w&  w^3 & w^6 & w^5\\  0  &  0 &   0 &   1  &  2&  w^3 & w^7 & w^3 & w^7 &   0  &  2  &  1&  w^2 & w^6\\
\end{array}
\right).
$$
It is easy to check that ${\bf a} \cdot C_{\cal L}(D,G)$, where
${\bf a}=(
    1,
    1,
    w^6,
    1,
    1,
    1,
    1,
    1,
    1,
    w^6,
    1,
    1,
    1,
    1),
$
is a $1$-$dim$ hull code and optimal as per \cite{Database}, and from Lemma \ref{lem:GalHerMatRua}  it gives rise to new EAQECCs with parameters $[[ 14, 3, 10; 9 ]]_9$ and $[[ 14, 9, 4; 3 ]]_9$. Some more new EAQECCs over $\F_9,\F_{16},\F_{25}, \F_{49}$ are given in Table \ref{table:hermitian-2}.
\end{exam}

\begin{exam}

Consider the construction in Corollary \ref{cor:hermitian-4}. For $q=3$ and $w$ primitive element of $\F_9$, the curve defined by (\ref{eq:hermitian}) has $28$ rational points. 
Set $O=(0 : 1 : 0),
    P_0^{(1)}=(w^2 : 2 : 1),
    P_0^{(2)}=(w^2 : w : 1),
    P_0^{(3)}=(w^2 : w^3 : 1)
$

Take $r=2$, $D=P_0^{(2)}+P_0^{(3)})+\sum\limits_{i=1}^3(P_i^{(1)}+P_i^{(2)}+P_i^{(3)})$ and $G=6O+P_{0}^{(1)}$. Then, the code $C_{\cal L}(D,G)$ is a $[11, 6, 4]$ linear code over $\F_9$ with the following generator matrix:
$$
\left(
\begin{array}{ccccccccccc}
1& 0& 0& 0 &w^3& 0& 0 &w^7 &w^2& w^5& w^5\\
0& 1& 0& 0 &w^7& 0& 0 &w^3& w^2& 2& 2\\
0& 0& 1& 0& w^2& 0& 0& 0& 1& w& 2\\
0& 0& 0& 1 &w^5& 0& 0& 0& 0 &w^3& 2\\
0& 0& 0& 0& 0& 1& 0 &w^5& w^2& w^6& 2\\
0& 0& 0& 0& 0& 0& 1 &w^2& 0& w^6 &w^7\\
\end{array}
\right).
$$
It is easy to check that ${\bf a} \cdot C_{\cal L}(D,G)$, where
${\bf a}=(
      w^5,
    w^5,
    w^6,
    w^6,
    w^6,
    w^7,
    w^7,
    w^7,
    1,
    1,
    1),
$ 
is a $1$-$dim$ hull code, and from Lemma \ref{lem:GalHerMatRua} it gives rise to an EAQECC with parameters $[[ 11, 5, 4; 4 ]]_9$. This code is not as good as the one given Corollary \ref{cor:hermitian-3}, see Table \ref{table:hermitian-2}.
\end{exam}

\begin{table}[htbp]
\centering
\caption{Some new EAQECCs}
$
\begin{array}{c|c|c}
\text{Corollary \ref{cor:newexplicit2}}&\text{Corollary \ref{cor:newexplicit4}}&\text{Corollary \ref{cor:hyper-elliptic}}\\
\hline
[[6,2,4;2]]_4&&\\

[[10,2,7;6]]_{8}&[[11,1,9;8]]_{8}&[[31,3,26;26]]_{16}\\

[[10,4,5;4]]_{8}&[[11,3,7;6]]_{8}&[[31, 5, 24;24]]_{16}\\

&[[11,5,5;4]]_{8}&[[31, 7, 22;22]]_{16}\\

&[[11,7,3;2]]_{8}&[[31, 9, 20;20]]_{16}\\

[[8,2,5;4]]_{8}&[[9,1,7;6]]_{8}&[[31, 11, 18;18]]_{16}\\

&[[9,3,5;4]]_{8}&[[31, 13, 16;16]]_{16}\\

&[[9,5,3;2]]_{8}&[[31, 15, 14;14]]_{16}\\

[[22,2,19;18]]_{16}&[[23,1,21;20]]_{16}&[[29, 3, 24;24]]_{16}\\

[[22,4,17;16]]_{16}&[[23,3,19;18]]_{16}&[[29, 5, 22;22]]_{16}\\

[[22,6,15;14]]_{16}&[[23,5,17;16]]_{16}&[[29, 9, 18;18]]_{16}\\

[[22,8,13;12]]_{16}&[[23,7,15;14]]_{16}&[[29, 11,16;16]]_{16}\\

[[22,10,11;10]]_{16}&[[23,9,13;12]]_{16}&[[29, 13,14;14]]_{16}\\

[[20,2,17;16]]_{16}&[[23,11,11;10]]_{16}&[[29, 15,12;12]]_{16}\\

[[20,4,15;14]]_{16}&[[21, 1, 19;18]]_{16}&[[27, 3, 22;22]]_{16}\\

[[20,6,13;12]]_{16}&[[21, 3, 17;16]]_{16}&[[27, 5, 20;20]]_{16}\\

[[20,8,11;10]]_{16}&[[21, 5, 15;14]]_{16}&[[27, 7, 18;18]]_{16}\\

[[18,2,15;14]]_{16}&[[21, 7, 13;12]]_{16}&[[27, 9, 16;16]]_{16}\\

[[18,4,13;12]]_{16}&[[21, 9, 11;10]]_{16}&[[27, 11, 14;14]]_{16}\\

[[18,6,11;10]]_{16}&[[21, 11, 9;8]]_{16}&[[27, 13, 12;12]]_{16}\\

[[18,8,9;8]]_{16}&[[19, 1, 17;16]]_{16}&[[25, 3, 20;20]]_{16}\\

&[[19, 3, 15;14]]_{16}&[[25, 5, 18;18]]_{16}\\

&[[19, 5, 13;12]]_{16}&[[25, 7, 16;16]]_{16}\\

&[[19, 7, 11;10]]_{16}&[[25, 9, 14];14]_{16}\\

&[[19, 9, 9;8]]_{16}&[[25, 11, 12;12]]_{16}\\

&[[19, 11, 7;6]]_{16}&[[25, 13, 10;10]]_{16}\\

 \end{array}
$
\label{table:hyper-elliptic}
\end{table}

\begin{table}[htbp]
\centering
\caption{Some new EAQECCs}
$
\begin{array}{c|c}
\text{Corollary \ref{cor:hermitian-2}}&\text{Corollary \ref{cor:hermitian-3}}\\
\hline

[[ 14, 3, 10; 9 ]]_9 & [[ 11, 3, 7; 6 ]]_9\\

[[ 14, 5, 8; 7 ]]_9 & [[ 11, 5, 5; 4 ]]_9\\

[[ 14, 7, 6; 5 ]]_9 & [[ 11, 7, 3; 2 ]]_9\\

[[ 14, 9, 4; 3 ]]_9 & [[ 11, 9, 1; 0 ]]_9\\

%
%
%

[[ 64, 3, 58; 59 ]]_{25}
&
[[ 59, 3, 52; 54 ]]_{25}
\\

[[ 64, 5, 55; 57 ]]_{25}
&
[[ 59, 5, 50; 52 ]]_{25}
\\

[[ 64, 7, 53; 55 ]]_{25}
&
[[ 59, 7, 48; 50 ]]_{25}
\\

[[ 64, 9, 51; 53 ]]_{25}
&
[[ 59, 9, 46; 48 ]]_{25}
\\

[[ 174, 3, 162; 169 ]]_{49}
&
[[ 167, 3, 155; 162 ]]_{49}
\\

[[ 174, 5, 160; 167 ]]_{49}
&
[[ 167, 5, 153; 160 ]]_{49}
\\

[[ 174, 7, 158; 165 ]]_{49}
&
[[ 167, 7, 151; 158 ]]_{49}
\\

[[ 174, 9, 156; 163 ]]_{49}
&
[[ 167, 9, 149; 156 ]]_{49}
\\
 \end{array}
   \label{table:hermitian-2}
$
\end{table}

\section{Conclusion}\label{section:conclusion}
In this paper, we deal with linear codes having one dimensional hull. The hull is defined with respect to Euclidean inner product, and we construct families of $1$-$dim$ hull codes from algebraic geometry codes of genus greater than zero. New EAQECCs are constructed from elliptic curves, hyper-elliptic curves and Hermitian curves. For the future work, with the same spirit, it is worth considering the hull with respect to Hermitian inner product on the one hand, and with higher dimensional hull on the other hand.

%




\begin{thebibliography}{99}











\bibitem{AK01}
A. Ashikhmin and E. Knill, ``Nonbinary quantum stabilizer codes," \emph{IEEE Trans. Inf. Theory}, vol. 47, no. 7, pp. 3065-3072, Nov. 2001.

\bibitem{AssKey} E. F. Assmus, Jr and J. D. Key, ``Affine and projective planes," {\em Discrete Math.} vol. 83, pp. 161--187, 1990.


\bibitem{Mag} W. Bosma and J. Cannon, {\em   Handbook of Magma Functions}, Sydney, 1995.

\bibitem{Bow} G. Bowen, ``Entanglement required in achieving entanglement-assisted channel capacities," {\em Physical Review A,} 66, 052313--1--052313--8 (Nov 2002).

\bibitem{Database} Database of best known linear codes, http://www.codetables.de.
\bibitem{BDH06} T. Brun, I. Devetak and M. H. Hsieh, ``Correcting quantum errors with entanglement,'' \emph{Science}, vol. 314, pp. 436-439, Oct.
2006.
\bibitem{CarGui} C. Carlet and S. Guilley, ``Complementary dual codes for counter-measures to side-channel attacks," In:E.R.
Pinto {\em et al.} (eds.), {\em Coding Theory and Applications,} CIM Series in Mathematical Sciences, vol. 3, pp.
97--105, Springer (2014), {\em J. Adv. Math. Commun.} 10(1), pp.131--150, 2016.
\bibitem{CarGunOzbOzkSol} C. Carlet, C. G\"{u}neri, F. \"{O}zbudak, B. \"{O}zkaya and P. Sol\'e, ``On linear complementary pairs of codes," {\em IEEE
Trans. Inf. Theory,} 64(10), pp. 6583--6589, 2018.
\bibitem{CarLiMes} C. Carlet, C. Li and S. Mesnager, ``Linear codes with small hulls in semi-primitive case," {\em Des. Codes Cryptogr.} https://doi.org/10.1007/s10623-019-00663-4
\bibitem{CarMesTanQiPel18} C. Carlet, S. Mesnager, C. Tang, Y. Qi and R. Pellikaan, `` Linear codes over $\F_q$ are equivalent to LCD codes for $q>3,$" {\em IEEE Trans. Inf. Theory,} 64(4), pp. 3010--3017, 2018.
\bibitem{CarMesTanQi19} C. Carlet C, S. Mesnager, C. Tang and Y. Qi, ``New characterization and parametrization of LCD codes," {\em IEEE Trans. Inf. Theory,} 65(1), pp. 39--49, 2019.
\bibitem{CarMesTanQi18-2} C. Carlet, S. Mesnager, C. Tang and Y. Qi, `` Euclidean and Hermitian LCD MDS codes," {\em Des. Codes Cryptogr.} 86, pp. 2605--2618, 2018.
\bibitem{CarMesTanQi19-2} C. Carlet, S. Mesnager, C. Tang and Y. Qi, ``On $\sigma$-LCD codes," {\em IEEE Trans. Inf. Theory,} 65(3), pp. 1694--1704, 2019.
\bibitem{ChenLiu}B. Chen and H. Liu, ``New constructions of MDS codes with complementary duals," {\em IEEE Trans. Inf. Theory,} 64(8), pp. 5776--5782, 2018.
\bibitem{de} M.A. de Boer, ``Almost MDS codes," {\em Des. Codes Cryptogr.} (1996) 9:143--155. 
\bibitem{DodLan} S.M. Dodunekov and I.N. Landjev, ``Near-MDS codes over some small fields", {\em Discrete Math.} 213 (2000) 55--65.




\bibitem{FangFu} W. Fang and F. Fu, ``New Constructions of MDS Euclidean Self-dual Codes from GRS Codes and Extended GRS Codes," {\em IEEE Trans. Inform. Theory}, vol. 65(9), pp. 5574--5579, 2019.

\bibitem{GalHerMatRua} C. Galindo, F. Hernando, R. Matsumoto, D. Ruano, ``Entanglement-assisted quantum error-correcting codes over arbitrary finite fields," {\em Quantum Information Processing,} 18(4), 116 (Apr 2019).

\bibitem{Goppa}V. D. Goppa, ``Algebraico-geometric codes," {\em Math. USSR-lvz.} 21(1) (1983) 75-91.
\bibitem{GraGul} M. Grassl and T. A. Gulliver, ``On Self-Dual MDS Codes" {\em ISIT 2008}, Toronto, Canada, July 6 --11, 2008.
\bibitem{Gue} K. Guenda, ``New MDS self-dual codes over finite fields,"  {\em Des. Codes Cryptogr.} (2012) 62:31--42
\bibitem{GGJT18}
K. Guenda, T.A. Gulliver, S. Jitman and S. Thipworawimon, ``Linear $\ell$-intersection pairs of codes and their applications,'' \emph{Des. Codes
Cryptogr.}, 2019. https://doi.org/10.1007/s10623-019-00676-z

\bibitem{Jin} L. Jin, ``Construction of MDS codes with complementary duals," {\em IEEE Trans. Inf. Theory,} 63(5), pp. 2843--2847, 2017.
\bibitem{JinBee}L. Jin and P. Beelen, ``Explicit MDS Codes With Complementary Duals," {\em IEEE Trans. Inf. Theory,} 64(11), pp. 7188 --7193, 2018.

\bibitem{JinXin} L. Jin and C. Xing, ``New MDS self-dual codes from generalized Reed-Solomon codes," {\em IEEE Trans. Inform. Theory}, vol. 63(3) , pp. 1434 --1438, 2017.

\bibitem{LA18}
 C. Y. Lai and A. Ashikhmin, ``Linear programming bounds for entanglement-assisted quantum error correcting codes by split weight enumerators,'' \emph{IEEE Trans. Inf. Theory}, vol.
64, no. 1, pp. 622-639, Jan. 2018.

\bibitem{LB13}
C.-Y. Lai and T. A. Brun, ``Entanglement increases the error-correcting
ability of quantum error-correcting codes,'' \emph{Phys. Rev. A, Gen. Phys.},
vol. 88, p. 012320, Jul. 2013.

\bibitem{Leon82}J. Leon, ``Computing automorphism groups of error-correcting codes," {\em IEEE Trans. Inf. Theory,} 28(3), pp. 496--511, 1982.
\bibitem{Leon91} J. Leon, ``Permutation group algorithms based on partition, I: Theory and algorithms," {\em J. Symb. Comput.} 12, pp. 533--583, 1991.
\bibitem{LiDingLi} C. Li, C. Ding and S. Li, ``LCD cyclic codes over finite fields," {\em IEEE Trans. Inf. Theory,} 63(7), pp. 4344--4356, 2017.
\bibitem{LiLiDingLiu}S. Li, C. Li, C. Ding and H. Liu, ``Two Families of LCD BCH codes," {\em IEEE Trans. Inf. Theory,} 63(9), pp. 5699--5717, 2017.
\bibitem{LiZeng} C. Li and P. Zeng, ``Constructions of linear codes with one-dimensional hull," {\em IEEETrans. Inf. Theory,} 65 (3), pp. 1668--1676, 2019.

\bibitem{LCC}
G. Luo, X. Cao and X. Chen, ``MDS codes with hulls of arbitrary dimensions and
their quantum error correction,'' \emph{IEEE Trans. Inf. Theory}, vol. 65, no. 5, pp. 2944-2952, May 2019.
\bibitem{Massey}J. L. Massey, ``Linear codes with complementary duals," {\em Discret. Math.}106(107),337--342 (1992).
\bibitem{Massey} J. Massey, ``Some applications of coding theory in cryptography," in
{\em Proc. 4th IMA Conf. Cryptogr. Coding,} 1995, pp. 33--47.
\bibitem{MesTanQi}S. Mesnager, C. Tang and Y. Qi, ``Complementary dual algebraic geometry codes," {\em IEEE Trans. Inf. Theory,} 64(4), pp. 2390--2397, 2018.

\bibitem{QCM} L. Qian, X. Cao and S. Mesnager, ``Linear codes with one-dimensional hull associated with Gaussian sums", {\em Cryptogr. Commun.} (2020). https://doi.org/10.1007/s12095-020-00462-y



\bibitem{SangJitLingUdom}E. Sangwisut, S. Jitman, S. Ling, and P. Udomkavanich, ``Hulls of cyclic and negacyclic codes over finite fields," {\em Finite Fields Appl.} 33, pp. 232--257, 2015.
\bibitem{Sendrier97}N. Sendrier, ``On the dimension of the hull," {\em SIAM J. Discret. Math.} 10(2), pp. 282--293, 1997.

\bibitem{Sendrier00}N. Sendrier, ``Finding the permutation between equivalent codes: the support splitting algorithm," {\em IEEE Trans. Inf. Theory,} 46(4), pp. 1193--1203, 2000.

\bibitem{ShiYueYan} X. Shi, Q. Yue and S. Yang, ``New LCD MDS codes constructed from generalized Reed-Solomon codes," {\em J. Algebra Appl.} 18, 1950150, 2018.

\bibitem{SHB11}
J. Shin, J. Heo and T. A. Brun, ``Entanglement-assisted codeword
stabilized quantum codes,'' \emph{Phys. Rev. A, Gen. Phys.}, vol. 84, p. 062321, Dec. 2011.

\bibitem{Ske}G. Skersys, ``The average dimension of the hull of cyclic codes," {\em Discret. Appl. Math.} 128(1), pp. 275--292, 2003.

\bibitem{Sok} L. Sok, ``Explicit constructions of MDS self-dual codes," {\it IEEE Transactions on Information Theory,} vol. 66(6), pp. 3603--3615, 2020, doi:10.1109/TIT.2019.2954877
\bibitem{Sok1D} L. Sok, ``MDS linear codes with one dimensional hull," https://arxiv.org/pdf/2012.11247.pdf
\bibitem {Stich} H. Stichtenoth, ``Algebraic function fields and codes," Springer, 2008.
\bibitem{TongWang} H. Tong and X. Wang, ``New MDS Euclidean and Hermitian self-dual codes over finite fields," {\em Adv. in Pure Math.} , vol. 7, pp. 325--333, May. 2017.

\bibitem{WilBru} M.M. Wilde and T.A. Brun, ``Optimal entanglement formulas for entanglement-assisted quantum coding," {\em Physical Review A,} 77(6), 064302--1--064302--4 (Jun 2008).

\bibitem{Yan} H. Yan, ``A note on the constructions of MDS self-dual codes," {\em Cryptogr. Commun.,} (2019) 11:259--268,
https://doi.org/10.1007/s12095-018-0288-3.


\bibitem{YanLiuLiYang}H. Yan, H. Liu, C. Li and S. Yang, ``Parameters of LCD BCH codes with two lengths," {\em Adv. Math. Commun.}12(3), pp. 579--594, 2018.
\bibitem{YangMassey} X. Yang and J. L. Massey, ``The condition for a cyclic code to have a complementary dual," {\em Discret. Math.} 126(1--3), pp. 391--393, 1994.

\end{thebibliography}
\end{document}